\DeclarePairedDelimiter{\parens}{\lparen}{\rparen}
\DeclarePairedDelimiter{\bracks}{\langle}{\rangle}
\DeclarePairedDelimiter{\set}{\{}{\}}
\DeclarePairedDelimiter{\sqbracks}{[}{]}
\newcommand*{\toinj}{\hookrightarrow}
\DeclareMathOperator{\ev}{ev} 
\newcommand*{\ZZ}{\mathbb{Z}}
\newcommand*{\RR}{\mathbb{R}}
\DeclareMathOperator{\Id}{Id}
\newcommand*{\comp}{\circ} 
\newcommand*{\restr}[2]{#1|_{#2}}
\DeclareMathOperator{\supp}{supp} 
\DeclareMathOperator{\Cont}{\mathcal{C}}
\newcommand{\dd}{\mathrm{d}}
\NewDocumentCommand{\dv}{mm}{\frac{\dd #1}{\dd #2}}
\NewDocumentCommand{\pdv}{mm}{\frac{\partial #1}{\partial #2}}
\NewDocumentCommand{\orth}{om}{{#2}^{\bot\IfValueT{#1}{_#1}}}
\NewDocumentCommand{\orthL}{om}{\prescript{\bot\IfValueT{#1}{_#1}}{}{#2}}
\DeclarePairedDelimiter{\gen}{\langle}{\rangle}
\newcommand*{\lieBr}[1]{\sqbracks{#1}}
\newcommand*{\lieD}[1]{\mathcal{L}_{#1}}
\DeclareMathOperator{\im}{Im}
\newcommand*{\ann}[1]{{#1}^{\circ}}
\newcommand{\contr}[1]{\iota_{#1}}
\newcommand*{\Forms}{\Omega}
\newcommand*{\VecFields}{\mathfrak{X}}
\theoremstyle{plain}
\newtheorem{theorem}{Theorem}
\newtheorem*{theorem*}{Theorem}
\newtheorem{lemma}[theorem]{Lemma}
\newtheorem*{lemma*}{Lemma}
\newtheorem{proposition}[theorem]{Proposition}
\newtheorem*{proposition*}{Proposition}
\newtheorem*{cor*}{Corollary}
\theoremstyle{definition}
\newtheorem{definition}{Definition}
\newtheorem*{definition*}{Definition}
\newtheorem*{example*}{Example}
\crefname{theorem}{Theorem}{Theorems}
\Crefname{theorem}{Theorem}{Theorems}
\crefname{lemma}{Lemma}{Lemmas}
\Crefname{lemma}{Lemma}{Lemmas}
\crefname{proposition}{Proposition}{Propositions}
\Crefname{Prop}{Proposition}{Propositions}
\crefname{cor}{Corollary}{Corollaries}
\Crefname{cor}{Corollary}{Corollaries}
\crefname{definition}{Definition}{Definitions}
\Crefname{definition}{Definition}{Definitions}
\crefname{example}{Example}{Examples}
\Crefname{example}{Example}{Examples}
\crefname{theorem}{Theorem}{Theorems}
\newtheorem*{exercise*}{Exercise}
\crefname{exercise}{exercise}{exercises}
\Crefname{exercise}{Exercise}{Exercises}  
\theoremstyle{remark}
\newtheorem*{remarkx*}{Remark}
\crefname{remark}{Remark}{Remarks}
\Crefname{remark}{Remark}{Remarks}
\newenvironment{remark}
  {\pushQED{\qed}\remarkx}
  {\popQED\endremarkx}
\newenvironment{remark*}
  {\pushQED{\qed}\remarkx*}
  {\popQED\endremarkx*}
    \newcommand{\CharD}{\mathcal{C}}
    \newcommand{\Reeb}{\mathcal{R}}
    \newcommand{\sects}{\Gamma}
    \newcommand{\Fib}{\mathrm{F}}
    \newcommand{\lsharp}{\sharp_\Lambda}
    \DeclareMathOperator{\Der}{Der}
    \DeclareMathOperator{\grad}{grad}
    \DeclarePairedDelimiter{\jacBr}{\lbrace}{\rbrace}
    \DeclarePairedDelimiter{\dirBr}{\lbrace}{\rbrace_{DJ}}
    \title{Singular Lagrangians and precontact Hamiltonian systems}
   \author[M. de León]{Manuel de León}
   \address{Manuel de Le\'on: Instituto de Ciencias Matem\'aticas (CSIC-UAM-UC3M-UCM),
   c\textbackslash Nicol\'as Cabrera, 13-15, Campus Cantoblanco, UAM
   28049 Madrid, Spain \newline
   and \newline
   Real Academia de Ciencias Exactas, Físicas y Naturales, c\textbackslash de Valverde,
   22, 28004 Madrid, Spain
   } \email{mdeleon@icmat.es}
   \author[M. Lainz Valcázar]{Manuel Lainz Valcázar}
   \address{Manuel Laínz:
   Instituto de Ciencias Matem\'aticas (CSIC-UAM-UC3M-UCM),
   c$\backslash$ Nicol\'as Cabrera, 13-15, Campus Cantoblanco, UAM
   28049 Madrid, Spain} \email{manuel.lainz@icmat.es}
    \date{\today} 
\begin{document}

\begin{abstract}
    In this paper we discuss singular Lagrangian systems on the framework of contact geometry. These systems exhibit a dissipative behavior in contrast with the symplectic scenario. We develop a constraint algorithm similar to the presymplectic one studied by Gotay and Nester (the geometrization of the well-known Dirac-Bergman algorithm). We also construct the Hamiltonian counterpart and prove the equivalence with the Lagrangian side. A Dirac-Jacobi bracket is constructed similar to the Dirac bracket.
\end{abstract}
\maketitle

{\small\tableofcontents}
\section{Introduction}
As it is well-known, the description of Hamiltonian mechanics is developed on symplectic manifolds; indeed, given a symplectic manifold $(M, \omega)$ and a Hamiltonian function $H:M\to\RR$, we obtain the dynamics as the Hamiltonian vector field given by the equation $\contr{X_H}  \omega = dH$. The integral curves of $X_H$ satisfy Hamilton equations. 

This geometric framework has two formulations: if $M$ is the cotangent bundle $T^*Q$ of a configuration manifold $Q$, equipped with its canonical symplectic form $\omega_Q$, we obtain a classical Hamiltonian mechanical system; and, if $M$ is the tangent bundle $TQ$ equipped with the symplectic form $\omega_L$ constructed from a regular Lagrangian $L : TQ \to \RR$, then we obtain the usual Euler-Lagrange equations (in the latter case, the vector field providing the dynamics is the solution $\xi_L$ of the equation $\contr{\xi_L} \, \omega_L = dE_L$, where $E_L$ is the energy of the system; $\xi_L$ is a second order differential equation on $TQ$ whose solutions are the ones of the Euler-Lagrange equations). Both sides are related by the Legendre transformation.

A fascinating scenario occurs when the Lagrangian function is not regular, that is, its Hessian matrix with respect to the velocities is singular. Hence, the 2-form $\omega_L$ is not symplectic, and the equation $\contr{X}  \omega_L = dE_L$ has not solution in general, or even, if there is a solution, it is not unique. In order to deal with singular Lagrangians, and motivated for the need to study the quantization of electromagnetism, P.A.M. Dirac developed a constraint algorithm (now called Dirac-Bergmann algorithm) that allows us to construct the dynamics of the system~\cite{Dirac1950}. This constraint algorithm has been later geometrized by M.J.~Gotay and J.M.~Nester~\cite{Gotay1978}. 

The geometric version of the algorithm relies on the concept of presymplectic systems, that is, a closed 2-form $\omega$ on a manifold $M$ which is not symplectic but has constant rank. So we analyze Hamilton equations
\begin{equation}
	i_X \, \omega = dH
\end{equation}
for a Hamiltonian function $H$ on $M$. We consider the points where there is a solution of the above equation, and so we obtain a constraint submanifold $M_1$ along which there is a solution. But the dynamics should be tangent to $M_1$ so we have to restrict ourselves to those points in $M_1$ where a solution exists but it is tangent to $M_1$. The algorithm continues and, in the favorable cases, it stabilizes at some level, $M_{i+1} = M_i$ which is called the final constraint submanifold.

The above algorithm can be applied to the case of singular Lagrangian systems, but, when de Lagrangian satisfies some weak regularity condition, we can also develop a Hamiltonian counterpart and the corresponding constraint algorithm. Both algorithms are conveniently related by the Legendre transformation.

In addition, Dirac introduced two kind of constraints, first and second class. The second class constraints permits to define a Poisson bracket (called Dirac bracket) that gives the dynamics of the constrained system just as in the classical case with the canonical Poisson bracket.

The goal of the present paper is to extend these constructions to a new kind of Hamiltonian systems, those called contact Hamiltonian systems. Those systems have found applications in many areas, such as irreversible thermodynamics~\cite{Grmela2014}, statistical mechanics~\cite{Bravetti2016}, geometric optics~\cite{Carinena1996} as well as systems with dissipative forces linear in the velocities (Rayleigh dissipation). The Lagrangian formulation has been considered in~\cite{Liu2018,Vermeeren2019}. A recent review of the applications of contact Hamiltonian systems can be found in~\cite{Bravetti2017}.

 Indeed, in the contact case, the underlying geometry is a contact manifold $(M, \eta)$, where $\eta$ is a contact form. This means that $\eta \wedge (d\eta)^n \not= 0$, where $M$ has dimension $2n+1$. So, if we consider a Hamiltonian function $H$ on $M$, the corresponding Hamiltonian vector field $X_H$ solves  the equation
\begin{equation}\label{eq:intro_contact}
    \flat(X) = \dd H - (H + \Reeb(H))\eta,
\end{equation}
where $\Reeb$ is the Reeb vector field and $\flat(X)=  \contr{X} \dd\eta + \eta(X)\eta$.

This equation shows a dissipative behavior which contrasts with the conservative nature of symplectic systems. Therefore, symplectic and contact geometries provide very different dynamics.

We are interested in extending the theory of singular Lagrangian systems to the contact context. As in the usual case, a Lagrangian $L:TQ\times\RR$ is singular when the Hessian matrix with respect to the velocities is not regular. Consequently, the $1$-form $\eta_L$ in $TQ\times\RR$, constructed from de Lagrangian $L$ with the expression
\begin{equation}
  \eta_L = \dd z -  \frac{\partial L}{\partial \dot{q}^i}  \dd q^i,
\end{equation}
is a contact form if and only if $L$ is regular. Therefore, in the singular case, \cref{eq:intro_contact} might have no solution, or the solution might be not unique. In order to study the solutions of \cref{eq:intro_contact}, we develop a constraint algorithm, similar to the Gotay-Nester one. Indeed, we also extend the Dirac-Bergmann algorithm.

The corresponding Hamiltonian picture is developed, assuming some weak regularity conditions on the Lagrangian, so that both approaches are proven to be equivalent.

An interesting result is that, since the ambient bracket of functions on $T^* Q \times \RR$ is Jacobi but not Poisson, we have to introduce the so-called Dirac-Jacobi bracket, which again is Jacobi but not Poisson. This bracket shares some properties with the Dirac bracket and is motivated by the classification of constraints in first and second class.

The paper is structured as follows. In \cref{sec:geometry_and_dynamics} we will introduce the dynamics of Hamiltonian systems on symplectic, cosymplectic and contact manifolds, and explain how they fit on the more general framework of Jacobi manifolds. Furthermore, we will introduce the Jacobi brackets and some of their properties.
In \cref{sec:lagrangian} we present the Hamiltonian formalism for the 
previously introduced geometries.
In~\cref{sec:Hamiltonian_Legendre_transform} we quickly recall the Hamiltonian 
formalism and connect it to the Lagrangian formalism via the Legendre 
transformation.
In~\cref{sec:variational} we prove the equivalence of the Lagrangian formalism 
with Herglotz's variational problem, which shows some interesting connections of contact 
Hamiltonian systems with control theory and calculus of variations.

The rest of the paper deals with singular systems in the precontact geometry setting. In~\cref{sec:precontact_maifolds} we introduce the concept of precontact 
manifold and some of its properties.
In~\cref{sec:constr_alg} we develop a constraint algorithm for precontact 
systems analogous to the Gotay-Nester algorithm for the presymplectic case. We 
also investigate the tangency of the Reeb vector field to the constraint 
submanifolds.
Then, in~\cref{sec:equiv_problem} we see that the Legendre transformation connects the 
Hamiltonian and the Lagrangian sides of the problem in a way that commutes with 
the constraint algorithm, provided some weak regularity conditions on the 
Lagrangian.
In~\cref{sec:DJ_brackets} we introduce the contact version of the Dirac 
brackets, which we call Dirac-Jacobi brackets. This brackets are Jacobi, but, 
contrary to the symplectic case, they are not Poisson. We are also able to 
classify the constraint functions in first or second class, depending on 
weather they carry dynamical information or not.
Then, in~\cref{sec:second_order} we construct explicitly a submanifold $S$ of the final 
constraint manifold, such that there is a unique solution to the 
equations of motion that satisfy the second order differential equation condition along $S$.
Finally, in~\cref{sec:examples} we provide examples with explicit computations of the constraints and the Dirac-Jacobi brackets.

In this article, given a smooth function $F:M\to N$, we denote by $F_*: TM \to TN$ the tangent map. Also, $\contr{X}(\alpha)$ is the contraction of a vector field $X$ and a differential form $\alpha$. We allow distributions $\Delta \subseteq TM$ to be defined along a submanifold and we denote by $\Gamma{(\Delta)}$ de set of vector fields on $M$ tangent to the distribution.

\section{Geometry and dynamics}\label{sec:geometry_and_dynamics}

\subsection{Dynamics on symplectic geometry}
As it is well known, Hamiltonian dynamics is developed using symplectic geometry. Indeed, let $(M, \omega)$ be a symplectic manifold, that is, 
$\omega$ is a non--degenerate closed 2-form, say $d \omega = 0$ and $\omega^n \not= 0$, where $M$ has even dimension $2n$.
Then, if $H : M \to \mathbb R$ is a Hamiltonian function, the Hamiltonian vector field $X_H$
is obtained from the equation
\begin{equation}\label{hamiltonian_vf_symp}
\flat (X_H) = \dd H,
\end{equation}
where $\flat$ is the vector bundle isomorphism
\begin{equation}
    \begin{aligned}
        \flat : TM &\to  T^* M, \\
        v &\mapsto \contr{v}  \omega.
    \end{aligned}
\end{equation}

In Darboux coordinates $(q^i, p_i)$ we have
\begin{align*}
 \omega &= \dd q^i \wedge \dd p_i, \\
 X_H &= \frac{\partial H}{\partial p_i} \frac{\partial}{\partial q^i} - 
\frac{\partial H}{\partial q^i} \frac{\partial}{\partial p_i} 
\end{align*}
in such a way that an integral curve $(q^i(t), p_i(t))$ of $X_H$ satisfies Hamilton equations:
\begin{subequations}
    \begin{align}\label{hsympl2}
    \frac{\dd q^i}{\dd t} & =  \frac{\partial H}{\partial p_i}, \\
    \frac{\dd p_i}{\dd t} & =  - \frac{\partial H}{\partial q^i}.
    \end{align}
\end{subequations}

\subsection{Dynamics on cosymplectic geometry}

A cosymplectic structure on a $(2n+1)$-dimensional manifold $M$ is a pair $(\Omega, \eta)$ where 
$\Omega$ is a closed 2-form, $\eta$ is a closed 1-form, and
$\eta \wedge \Omega^n \not= 0$. 
$(M, \Omega, \eta)$ will be called a cosymplectic manifold.

There is a Darboux theorem for cosymplectic manifolds, that is,
there are local coordinates (called Darboux coordinates) $(q^i, p_i, z)$ in a neighborhood of any point of $M$
such that
\begin{equation}
	\Omega = \dd q^i \wedge \dd p_i, \quad \eta = \dd z.
\end{equation}
There also exist a a unique vector field (called Reeb vector field) $\mathcal R$ such that
\begin{equation}
	i_{\mathcal R} \, \Omega = 0, \quad i_{\mathcal R}\, \eta = 1.
\end{equation}
In Darboux coordinates, we have
\begin{equation}
	\mathcal R = \frac{\partial}{\partial z}.
\end{equation}

Let $H : M \to \mathbb R$ be a Hamiltonian function, say $H = H(q^i, p_i, z)$.
Consider the vector bundle isomorphism

\begin{equation}
    \begin{aligned}
        \tilde{\flat} : TM &\to T^* M,\\
         v &\mapsto \contr{v} \Omega + \eta (v)  \eta
    \end{aligned}
\end{equation}
and define the gradient of $H$ by

\begin{equation}
    \tilde{\flat}(\grad H) = \dd H.
\end{equation}

Then,
\begin{equation}\label{hcosymp}
\grad H = \frac{\partial H}{\partial p_i} \frac{\partial}{\partial q^i} - 
\frac{\partial H}{\partial q^i} \frac{\partial}{\partial p_i} + \frac{\partial H}{\partial z} \, \frac{\partial}{\partial z}.
\end{equation}

Next, we can define two more vector fields:

\begin{itemize}

\item the Hamiltonian vector field

\begin{equation}\label{hamiltonian_vf_cosymp}
    X_H = \grad H - \mathcal R (H) \mathcal R,
\end{equation}

\item and the evolution vector field
\begin{equation}
    {\mathcal E}_H = X_H + {\mathcal R}.
\end{equation}
\end{itemize}

From \cref{hcosymp} we obtain the local expression

\begin{equation}\label{hcosymp2}
{\mathcal E}_H = \frac{\partial H}{\partial p_i} \frac{\partial}{\partial q^i} - 
\frac{\partial H}{\partial q^i} \frac{\partial}{\partial p_i} +  \frac{\partial}{\partial z}.
\end{equation}
Therefore, an integral curve $(q^i(t), p_i(t), z(t))$ of ${\mathcal E}_H$ satisfies the 
time-dependent Hamilton equations:
\begin{subequations}
    \begin{align}\label{hsympl3}
    \frac{\dd q^i}{\dd t} & =  \frac{\partial H}{\partial p_i}, \\
    \frac{\dd p_i}{\dd t} & =  - \frac{\partial H}{\partial q^i},\\
    \frac{\dd z}{\dd t} & =  1,
    \end{align}
\end{subequations}
and then $z=t+\text{constant}$ so that both coordinates can be identified in what concerns with derivatives with respect to $z$ or $t$.

\subsection{Dynamics on contact geometry}

Consider now a contact manifold $(M, \eta)$ with contact form $\eta$; this means that
$\eta \wedge (\dd\eta)^n \not= 0$ and $M$ has odd dimension $2n+1$.
Then, there exist a unique vector field $\mathcal R$ (also called Reeb vector field) such that
\begin{equation}
	i_{\mathcal R} \, \dd \eta = 0 \; , \; i_{\mathcal R}\, \eta = 1.
\end{equation}

There is a Darboux theorem for contact manifolds so that around each point in $M$ one can find local coordinates 
(called Darboux coordinates) $(q^i, p_i, z)$ 
such that
\begin{equation}
	 \eta = \dd z - p_i \, \dd q^i.
\end{equation}

In Darboux coordinates we have
\begin{equation}
	\mathcal R = \frac{\partial}{\partial z}.
\end{equation}

Define now the vector bundle isomorphism
\begin{align*}
    \bar{\flat} : TM &\to T^* M ,\\
     v &\mapsto \contr{v}  \dd \eta + \eta (v)  \eta.
\end{align*}
Notice that $\bar{\flat}(\Reeb)=\eta$.

For a Hamiltonian function $H$ on $M$ we define the Hamiltonian vector field by

\begin{equation}\label{hamiltonian_vf_contact}
    \bar{\flat} (X_H) = \dd H - (\mathcal R (H) + H) \, \eta
\end{equation}.

In Darboux coordinates we get this local expression

\begin{equation}\label{eq:contact_hamiltonian_vf_darboux}
X_H = \frac{\partial H}{\partial p_i} \frac{\partial}{\partial q^i} - 
\parens*{\frac{\partial H}{\partial q^i} + p_i \frac{\partial H}{\partial z}}  \frac{\partial}{\partial p_i} + 
\parens*{p_i \frac{\partial H}{\partial p_i} - H} \frac{\partial}{\partial z}.
\end{equation}
Therefore, an integral curve $(q^i(t), p_i(t), z(t))$ of $X_H$ satisfies the 
dissipative Hamilton equations
\begin{subequations}
    \begin{align*}\label{hcont3}
    \frac{\dd q^i}{\dd t} & =  \frac{\partial H}{\partial p_i}, \\
    \frac{\dd p_i}{\dd t} & =  - \frac{\partial H}{\partial q^i} - p_i \frac{\partial H}{\partial z},\\
    \frac{\dd z}{\dd t} & =  p_i \frac{\partial H}{\partial p_i} - H.
    \end{align*} 
\end{subequations}

The systems of equations presented so far can be understood as instances of Hamiltonian systems on Jacobi manifolds, which we introduce below.

\subsection{Dynamics on Jacobi manifolds}

All the geometric structures and dynamical systems mentioned above are particular examples of a more general kind of geometric structures~\cite{Lichnerowicz1978,deLeon2017,Vaisman2002}, the so-called Jacobi manifolds, whose definition we recall below. 

\begin{definition}\label{def:jacobi_mfd}
    A \emph{Jacobi manifold} is a triple $(M,\Lambda,E)$, where $\Lambda$ is a bivector field (that is, a skew-symmetric contravariant 2-tensor field) and $E \in \VecFields (M)$ is a vector field, so that the following identities are satisfied:
    \begin{align}
        \lieBr{\Lambda,\Lambda} &= 2 E \wedge \Lambda\\
        \lieD{E} \Lambda &= \lieBr{E,\Lambda} = 0,
    \end{align}
    where $\lieBr{\cdot,\cdot}$ is the Schouten–Nijenhuis bracket~\cite{Schouten1953,Nijenhuis1955}.
\end{definition}

Both symplectic, cosymplectic and contact manifolds can be understood as instances of Jacobi manifolds by taking,
\begin{itemize}
    \item For symplectic manifolds:
    \begin{equation}
        \Lambda(\alpha,\beta) = 
        \omega ({\flat}^{-1} (\alpha), {\flat}^{-1}(\beta)), \quad
        E = 0.
    \end{equation}
    \item For cosymplectic manifolds:
    \begin{equation}
        \Lambda(\alpha,\beta) = 
        \Omega ({\tilde\flat}^{-1} (\alpha), {\tilde\flat}^{-1}(\beta)), \quad
        E = 0.
    \end{equation}
    \item For contact manifolds:
    \begin{equation}\label{eq:contact_jacobi}
        \Lambda(\alpha,\beta) = 
        -\dd \eta ({\bar\flat}^{-1} (\alpha), {\bar\flat}^{-1}(\beta)), \quad
        E = - \Reeb.
    \end{equation}    
\end{itemize}

Jacobi manifolds such that $E=0$ are called \emph{Poisson manifolds}, as in the case of symplectic and cosymplectic manifolds. In addition to contact manifolds, another important example of non-Poisson Jacobi manifolds are locally conformal symplectic manifolds~\cite{deLeon2017}.

The Jacobi bivector $\Lambda$ induces a vector bundle morphism between covectors and vectors.
\begin{equation}
    \begin{aligned}
        \lsharp: T M^* &\to     T M\\
        \alpha &\mapsto \Lambda(\alpha, \cdot ).
    \end{aligned}
\end{equation}
This map is an isomorphism in the case of symplectic and cosymplectic manifolds and coincides with $\flat^{-1}$ and $\tilde{\flat}^{-1}$, respectively.
In the case of a contact manifold, $\lsharp$ is not invertible. In fact, $\ker \lsharp = \gen{\eta}$ and $\im \lsharp = \ker{\eta}$. The map $\lsharp$ can be written more directly in terms of the contact structure~\cite[Section~3]{LainzValcazar2018} as:
    \begin{equation}
        \lsharp (\alpha) =
        {\bar{\flat}}^{-1} (\alpha) - \alpha(\Reeb) \Reeb.
    \end{equation}

We can define the Hamiltonian vector field $X_H$ of a Hamiltonian function $H\in \Cont^\infty(M)$ in the context of Jacobi manifolds by taking
\begin{equation}
     X_H = \lsharp(\dd H) + H E.
\end{equation}
By a simple computation one can check that this definition coincides with the usual ones for symplectic, cosymplectic and contact manifolds (\cref{hamiltonian_vf_symp,hamiltonian_vf_cosymp,hamiltonian_vf_contact}).

\subsubsection{Jacobi Brackets}
The Jacobi structure can be characterized in terms of a Lie bracket on the space of functions $\Cont^\infty(M)$, the so-called \emph{Jacobi bracket}.
\begin{definition}\label{def:jac_bra}
    A \emph{Jacobi bracket} $\jacBr{\cdot,\cdot}: \Cont^\infty(M) \times \Cont^{\infty}(M) \to \Cont^\infty(M)$ on a manifold $M$ is a map that satisfies
    \begin{enumerate}
        \item $(\Cont^\infty(M),\jacBr{\cdot,\cdot})$ is a Lie algebra. That is, $\jacBr{\cdot,\cdot}$ is $\RR$-bilinear, antisymmetric and satisfies the Jacobi identity:
        \begin{equation}
            \jacBr{f,g} + \jacBr{g,h} + \jacBr{h,f} = 0
        \end{equation}
        for arbitrary $f,g,h \in \Cont^\infty(M)$.

        \item It satisfies the following locality condition: for any $f,g \in \Cont^\infty(M)$,
        \begin{equation}
            \supp(\jacBr{f,g}) \subseteq \supp(f) \cap \supp(g),
        \end{equation}
        where $\supp(f)$ is the topological support of $f$, i.e., the closure of the set in which $f$ is non-zero.
    \end{enumerate}

    That is, $(\Cont^\infty(M), \jacBr{\cdot,\cdot})$ is a local Lie algebra in the sense of Kirillov~\cite{Kirillov1976}.
\end{definition}

Given a Jacobi manifold $(M,\Lambda, E)$ we can define a Jacobi bracket by setting
\begin{equation}\label{eq:jac_bracket_from_mfd}
    \jacBr{f,g} =\Lambda(\dd f, \dd g) + f E(g) - g E (f).
\end{equation}
In fact, every Jacobi bracket arises in this way.
\begin{theorem}\label{thm:jab_bra_characterization}
    Given a manifold $M$ and a $\RR$-bilinear map $\jacBr{\cdot,\cdot}: \Cont^\infty(M) \times \Cont^{\infty}(M) \to \Cont^\infty(M)$. The following are equivalent.
    \begin{enumerate}
        \item\label{item:jac_bra_local} The map $(\jacBr{\cdot,\cdot}$ is a Jacobi bracket.
        
        \item\label{item:jac_bra_leibniz} $(M,\jacBr{\cdot,\cdot})$ is a Lie algebra which satisfies the generalized Leibniz rule
        \begin{equation}\label{eq:mod_leibniz_rule}
            \jacBr{f,gh} = g\jacBr{f,h} + h\jacBr{f,g} +  g h E(h),
        \end{equation}
        where $E$ is a vector field on $M$.

        \item\label{item:jac_bra_mfd} There is a bivector field $\Lambda$ and a vector field $E$ such that $(M,\Lambda,E)$ is a Jacobi manifold and $\jacBr{\cdot,\cdot}$ is given as in \cref{eq:jac_bracket_from_mfd}.
    \end{enumerate}
\end{theorem}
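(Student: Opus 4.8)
The plan is to prove the cycle \ref{item:jac_bra_mfd} $\Rightarrow$ \ref{item:jac_bra_leibniz} $\Rightarrow$ \ref{item:jac_bra_local} $\Rightarrow$ \ref{item:jac_bra_mfd}; only the last implication requires real work.

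\emph{\ref{item:jac_bra_mfd} $\Rightarrow$ \ref{item:jac_bra_leibniz}.} Suppose $\jacBr{\cdot,\cdot}$ is given by \eqref{eq:jac_bracket_from_mfd} for a Jacobi manifold $(M,\Lambda,E)$. Then $\RR$-bilinearity is clear, and antisymmetry follows from the skew-symmetry of $\Lambda$ together with the manifestly antisymmetric shape of $f\,E(g) - g\,E(f)$. The generalized Leibniz rule \eqref{eq:mod_leibniz_rule}, with the very same $E$, comes out of a one-line computation from $\dd(gh) = g\,\dd h + h\,\dd g$ and the derivation property of $E$. For the Jacobi identity I would run the classical Lichnerowicz computation~\cite{Lichnerowicz1978}: expanding $\jacBr{f,\jacBr{g,h}} + \jacBr{g,\jacBr{h,f}} + \jacBr{h,\jacBr{f,g}}$ and sorting the terms by differential order, the highest-order part vanishes precisely because $\lieBr{\Lambda,\Lambda} = 2\,E\wedge\Lambda$, and what remains vanishes by that same identity together with $\lieD{E}\Lambda = 0$; this is the standard translation of \cref{def:jacobi_mfd} into the Jacobi identity for \eqref{eq:jac_bracket_from_mfd}.

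\emph{\ref{item:jac_bra_leibniz} $\Rightarrow$ \ref{item:jac_bra_local}.} Here only the locality condition must be checked. Setting $g = h = 1$ in \eqref{eq:mod_leibniz_rule} identifies $\jacBr{f,1}$ with $-E(f)$ (equivalently $E = \jacBr{1,\cdot}$), and then \eqref{eq:mod_leibniz_rule} shows at once that $X_f := \jacBr{f,\cdot} + (\cdot)\,E(f)$ is $\RR$-linear and satisfies $X_f(gh) = g\,X_f(h) + h\,X_f(g)$, i.e.\ is a derivation of $\Cont^\infty(M)$, hence a vector field. Thus $\jacBr{f,g} = X_f(g) - g\,E(f)$ is a first-order differential operator in $g$, so $\jacBr{f,g}(x)$ depends only on the $1$-jet of $g$ at $x$; in particular $\jacBr{f,g}$ vanishes wherever $g$ vanishes on a neighbourhood, which gives $\supp(\jacBr{f,g}) \subseteq \supp(g)$, and antisymmetry then yields the inclusion in $\supp(f)$ as well.

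\emph{\ref{item:jac_bra_local} $\Rightarrow$ \ref{item:jac_bra_mfd}.} This is essentially Kirillov's structure theorem for local Lie algebras~\cite{Kirillov1976}, and I expect the order estimate inside it to be the main obstacle. For each fixed $f$ the $\RR$-linear map $\jacBr{f,\cdot}$ is local, so by Peetre's theorem it is, in every coordinate chart, a differential operator of finite order; the same holds in the first argument, so $\jacBr{\cdot,\cdot}$ is locally a bidifferential operator. The crucial point is that its order is at most $1$ in each argument: assuming a leading symbol of order $\ge 2$ in one slot, one derives a contradiction from antisymmetry together with the Jacobi identity — this symbol bookkeeping is the technical heart. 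Granting it, in local coordinates $\jacBr{f,g} = \Lambda^{ij}\,\partial_i f\,\partial_j g + f\,b^i\partial_i g - g\,b^i\partial_i f + c\,fg$, and antisymmetry forces $\Lambda^{ij} = -\Lambda^{ji}$ and $c = 0$, so globally $\jacBr{\cdot,\cdot}$ has the form \eqref{eq:jac_bracket_from_mfd} with $\Lambda$ the bivector field given by the principal symbol and $E = \jacBr{1,\cdot}$ (both intrinsic, hence chart-independent, so they patch automatically). Finally, reading the Jacobi identity of $\jacBr{\cdot,\cdot}$ back through \eqref{eq:jac_bracket_from_mfd} and using the graded Leibniz rule for the Schouten–Nijenhuis bracket returns exactly $\lieBr{\Lambda,\Lambda} = 2\,E\wedge\Lambda$ and $\lieD{E}\Lambda = 0$, so $(M,\Lambda,E)$ is a Jacobi manifold inducing $\jacBr{\cdot,\cdot}$.
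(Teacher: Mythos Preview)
Your proof is correct and follows exactly the same cycle \ref{item:jac_bra_mfd} $\Rightarrow$ \ref{item:jac_bra_leibniz} $\Rightarrow$ \ref{item:jac_bra_local} $\Rightarrow$ \ref{item:jac_bra_mfd} as the paper, with the identical derivation argument $X_f(g)=\jacBr{f,g}+g\,E(f)$ for the locality step. The only difference is one of detail: for \ref{item:jac_bra_local} $\Rightarrow$ \ref{item:jac_bra_mfd} the paper simply cites Kirillov~\cite{Kirillov1976}, whereas you additionally sketch the argument via Peetre's theorem and the order bound on the bidifferential operator.
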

\begin{proof}
    By a straightforward computation, (\ref{item:jac_bra_mfd}) implies (\ref{item:jac_bra_leibniz}).

    The statement (\ref{item:jac_bra_local}) follows from (\ref{item:jac_bra_leibniz}) by noticing that the generalized Leibniz rule implies that the map $X_f:\Cont^\infty(M) \to \Cont^\infty(M)$ such that $X_f (g) = \jacBr{f,g} +g E(f)$ is a $\RR$-linear derivation on $\Cont^\infty(M)$, hence it defines a smooth vector field. Therefore, if $g$ vanishes on a neighborhood of $p\in M$ then $X(g)$ and $g E(f)$ also vanish and, consequently, so does $\jacBr{f,g}$. Hence, $\supp(\jacBr{f,g}) \subseteq \supp(f) \cap \supp(g)$.

    In~\cite[Section~2]{Kirillov1976}, it was proven that every local Lie algebra on the space of functions is provided by a Jacobi structure, that is, (\ref{item:jac_bra_local}) implies (\ref{item:jac_bra_mfd}).
\end{proof}

We say that a function $f$ on a Jacobi manifold $M$ is \emph{a Casimir function} if $\jacBr{f,g} = 0$ for any other function $g$.

\begin{remark}
    For Poisson manifolds in which $E = 0$, the generalized Leibniz rule of the Jacobi brackets is the usual Leibniz rule and $\jacBr{\cdot,\cdot}$ are the so-called Poisson brackets.
    
    By noticing
    \begin{equation}
        \jacBr{1,f} = E(f),
    \end{equation}
    a Jacobi bracket is Poisson if and only if the constants are Casimir functions.
\end{remark}

We finish this chapter by noticing that the evolution of any observable $f$ under the flow of the Hamiltonian vector field of the function $H$ on $(M,\Lambda, E)$ can be written in terms of the brackets:
\begin{equation}\label{eq:Jacobi_evolution}
    \dot{f} = X_H(f) = \jacBr{H,f} + fE(H) =  \jacBr{H,f} + f \jacBr{1,H}.
\end{equation}

\section{The Lagrangian formalism}\label{sec:lagrangian}
In this section we will recall the geometric setting for time dependent Lagrangian systems, based on the cosymplectic geometry. We will emphasize the differences with the contact framework.

\subsection{Cosymplectic systems}
Let $L : TQ \times \mathbb R \to \mathbb R$ be a Lagrangian function, 
where $Q$ is the configuration $n$-dimensional manifold of a mechanical system. Then, $L = L(q^i, \dot{q}^i, z)$, where
$(q^i)$ are coordinates in $Q$, $(q^i, \dot{q}^i)$ are the induced bundle coordinates in $TQ$
and $z$ is a global coordinate in $\mathbb R$.

We will assume that $L$ is regular, that is, the Hessian matrix with respect to the velocities
\begin{equation}
	W_{ij}=\left( \frac{\partial^2 L}{\partial \dot{q}^i \partial \dot{q}^j} \right)
\end{equation}
is regular.

From $L$, and using the canonical endomorphism $S$ on $TQ$ locally defined by

\begin{equation}\label{eq:canonical_endomorphism}
    S = \dd q^i \otimes \frac{\partial}{\partial \dot{q}^i}    
\end{equation}
we can construct a 1-form $\lambda_L$ defined by
\begin{equation}
	\lambda_L = S^* (\dd L),
\end{equation}
where now $S$ and $S^*$ are the natural extensions of $S$ and its adjoint operator $S^*$ to $TQ \times \mathbb R$.
Therefore, we have
\begin{equation}
	\lambda_L = \frac{\partial L}{\partial \dot{q}^i}  \dd q^i.
\end{equation}

We will consider the cosymplectic structure $(\Omega_L, \dd z)$, where
\begin{equation}
	\Omega_L = - \dd \lambda_L.
\end{equation}

It is easy to check that, if $L$ is regular, then
\begin{equation}
	\dd z \wedge \Omega_L^n \not= 0,
\end{equation}
and conversely.

The Reeb vector field reads as follows: 
\begin{equation}
    \mathcal R = \frac{\partial}{\partial z} - 
    W^{ij} \frac{\partial^2 L}{\partial q^i \partial z} \pdv{}{\dot{q}^j},
\end{equation}
where $(W^{ij})$ is the inverse of the Hessian matrix of $L$ with respect to the velocities.

The energy of the system is defined by
\begin{equation}
    E_L = \Delta (L) - L,
\end{equation}
where $\Delta$ is the Liouville vector field.

\begin{equation}\label{eq:Liouvile_vf}
    \Delta = \dot{q}^i \, \frac{\partial}{\partial \dot{q}^i}
\end{equation}
such that
\begin{equation}\label{eq:energy_coords}
    E_L = \dot{q}^i \, \frac{\partial L}{\partial \dot{q}^i} - L.
\end{equation}

Consider now the following vector fields determined by means of the vector bundle isomorphism
\begin{align*}
\widetilde{\flat_L}  :  T(TQ \times \mathbb R) &\to T^*(TQ \times \mathbb R),\\
v &\mapsto \contr{v}  \Omega_L + \dd z (v)  \dd z.
\end{align*}
say,
\begin{enumerate}
\item the gradient vector field
\begin{equation}
	\grad (E_L) = \widetilde{\sharp_L} (dE_L),
\end{equation}
\item the Hamiltonian vector field
\begin{equation}
	X_{E_L} = {\mathcal E}_L - \mathcal R (E_L)  \mathcal R,
\end{equation}
\item and the evolution vector field
\begin{equation}
	{\mathcal E}_L = X_{E_L} + \mathcal R.
\end{equation}
\end{enumerate}
where $\widetilde{\sharp_L} = (\widetilde{\flat_L})^{-1}$ is the inverse of $\widetilde{\flat_L}$ (see \cite{Cantrijn1992}).

The evolution vector field ${\mathcal E}_L$ is locally given by
\begin{equation}\label{cosylagr1}
{\mathcal E}_ L = \dot{q}^i \, \frac{\partial}{\partial q^i} + B^i \, \frac{\partial}{\partial \dot{q}^i} + \frac{\partial}{\partial z},
\end{equation}
where
\begin{equation}\label{cosylagr2}
B^i \, \frac{\partial}{\partial \dot{q}^i}
  \parens*{\frac{\partial L}{\partial \dot{q}^j}} + \dot{q}^i  
\frac{\partial}{\partial q^i} \parens*{\frac{\partial L}{\partial \dot{q}^j}} - \frac{\partial L}{\partial q^j} = 0.
\end{equation}
Now, if $(q^i(t), \dot{q}^i(t), z(t))$ is an integral curve of ${\mathcal E}_L$, then it satisfies the 
usual Euler-Lagrange equations
\begin{equation}\label{cosylagr3}
\frac{\dd}{\dd t} \parens*{\frac{\partial L}{\partial \dot{q}^i}} - \frac{\partial L}{\partial q^i} = 0.
\end{equation}
since $z = t + \text{constant}$.

Notice that ${\mathcal E}_L$ is a second order differential equation (SODE, for short) because $S({\mathcal E}_L) = \Delta$, reflecting the fact that the Euler-Lagrange equations are second order.

\subsection{Contact systems}

In this case, we also have a regular Lagrangian $L : TQ \times \mathbb R \to \mathbb R$,
but instead to consider the cosymplectic structure $(\Omega_L, \dd z)$, we will consider the contact structure given by the $1$-form

\begin{equation}\label{eq:Lagrangian_form}
    \eta_L = \dd z - \lambda_L,    
\end{equation}
which is a contact form on $TQ \times \mathbb R$ if and only if $L$ is regular; indeed,
if $L$ is regular, then
\begin{equation}
	\eta_L \wedge (\dd \eta_L)^n \not= 0,
\end{equation} 
and conversely. The corresponding Reeb vector field is, again,
\begin{equation}
    \Reeb = \pdv{}{z} - W^{ij} \frac{\partial^2 L}{\partial q^i \partial z} \pdv{}{\dot{q}^j},
\end{equation}
where $(W^{ij})$ is the inverse of the Hessian matrix of $L$ with respect to the velocities.

The energy of the system is defined just like in the cosymplectic case:
\begin{equation}
    E_L = \Delta (L) - L,
\end{equation}
where $\Delta$ is the Liouville vector field on $TQ$ extended in the usual way to $TQ \times \RR$.

Denote by
\begin{equation}
	\bar{\flat}_L : T(TQ \times \mathbb R) \to T^* (TQ \times \mathbb R)
\end{equation}
the vector bundle isomorphism
\begin{equation}
	\bar{\flat}_L (v) = \contr{v} (\dd\eta_L) + (\contr{v} \eta_L) \, \eta_L
\end{equation}
given by the contact form $\eta_L$ on $TQ \times \mathbb R$.
We shall denote its inverse by $\bar{\sharp}_L = (\bar{\flat}_L)^{-1}$.

Denote by $\bar{\xi}_L$ the unique vector field defined by the equation
\begin{equation}\label{clagrangian1}
\bar{\flat}_L (\bar{\xi}_L) = dE_L - (\mathcal R(E_L) + E_L) \, \eta_L
\end{equation}

A direct computation from eq. (\ref{clagrangian1}) shows that $\bar{\xi}_L$ is locally given by
\begin{equation}\label{eq:lagrangian_solution}
\bar{\xi}_L = \dot{q}^i  \frac{\partial}{\partial q^i} + {b}^i \,\frac{\partial}{\partial \dot{q}^i} 
+ L \frac{\partial}{\partial z},
\end{equation}
where the components ${b^i}$ satisfy the equation
\begin{equation}\label{clagrangian3}
{b}^i  \frac{\partial}{\partial \dot{q}^i}\parens*{\frac{\partial L}{\partial \dot{q}^j}} 
+ \dot{q}^i  \frac{\partial}{\partial q^i}\parens*{\frac{\partial L}{\partial \dot{q}^j}} 
+ {L  \frac{\partial}{\partial z}\parens*{\frac{\partial L}{\partial \dot{q}^j}}} - \frac{\partial L}{\partial q^j} =
\frac{\partial L}{\partial \dot{q}^j} \frac{\partial L}{\partial z}
\end{equation}

Then, if $(q^i(t), \dot{q}^i(t), z(t))$ is an integral curve of $\bar{\xi}_L$ and substituting its values in \cref{clagrangian3}, we obtain
\begin{equation}
    {\ddot{q}}^i  \frac{\partial}{\partial \dot{q}^i}\parens*{\frac{\partial L}{\partial \dot{q}^j}} 
    + \dot{q}^i  \frac{\partial}{\partial q^i}\parens*{\frac{\partial L}{\partial \dot{q}^j}} 
    +  \dot{z}  \frac{\partial}{\partial z}\parens*{\frac{\partial L}{\partial \dot{q}^i}} - \frac{\partial L}{\partial q^i} =
    \frac{\partial L}{\partial \dot{q}^i} \frac{\partial L}{\partial z}
\end{equation}
which corresponds to the generalized Euler-Lagrange equations considered by G. Herglotz in 1930~\cite{Herglotz1930} (see also  \cite{Georgieva2003,Georgieva2011})
\begin{equation}\label{clagrangian4}
\frac{\dd}{\dd t} \parens*{\frac{\partial L}{\partial \dot{q}^i}} - \frac{\partial L}{\partial q^i} =
\frac{\partial L}{\partial \dot{q}^i} \frac{\partial L}{\partial z}.
\end{equation}

Notice that $\bar{\xi}_L$ is a SODE, that is, $S(\bar{\xi}_L) = \Delta$.

\section{The Hamiltonian formalism and the Legendre transformation}\label{sec:Hamiltonian_Legendre_transform}
In this section we will discuss the Hamiltonian description an how it is related to the Lagrangian description via the Legendre transformation. We will use the results an notations from \cref{sec:geometry_and_dynamics}.

\subsection{The Hamiltonian formalism}

Let $H : T^*Q \times \RR \to \RR$ be a Hamiltonian function, say
$H = H(q^i, p_i, z)$ where $(q^i, p_i, z)$  are bundle coordinates on $T^*Q \times \mathbb R$.
Consider the 1-form
\begin{equation}
	\eta = dz - \theta_Q 
\end{equation}
where $\theta_Q$ is the canonical Liouville form on $T^*Q$. In what follows we will consider
the usual identification for a form on $T^*Q$ or $\mathbb R$ and its pull-back to $T^*Q \times \mathbb R$.
In local coordinates, we have
\begin{equation}
	\eta = dz - p_i \, \dd q^i
\end{equation}
So, $\eta$ is a contact form on $T^*Q \times \mathbb R$ and $(q^i, p_i, z)$ are Darboux coordinates.
Therefore, we can obtain a Hamiltonian vector field $X_H$ which locally takes the same form that in \cref{eq:contact_hamiltonian_vf_darboux}.

\subsection{Legendre transformation}

Given a Lagrangian function $L : TQ \times \mathbb R \to \mathbb R$ we can define the Legendre transformation 
\begin{equation}
	\Fib L : TQ \times \mathbb R \to T^*Q \times \mathbb R
\end{equation}
given by
\begin{equation}\label{eq:legendre_transform_contact}
    \begin{aligned}
        \Fib L (q^i, \dot{q}^i, z) &= (q^i, \hat{p}_i, z) \\
        \hat{p}_ i &= \frac{\partial L}{\partial \dot{q}^i}
    \end{aligned}
\end{equation}

We will assume that $L$ is \emph{hyperregular}, that is, the Legendre transformation is a diffeomorphism. Consequently, the generalized Euler-Lagrange equations are transformed into the contact Hamilton equations.

Indeed, a direct computation shows that
\begin{equation}
	{(\Fib L)}^* \eta = \eta_L,
\end{equation}
and then we have
\begin{equation}
	(\Fib L)_{*}(\tilde{\xi}_L) = X_H,
\end{equation}
where $H =E_L \comp (\Fib L)^{-1}$.

The Legendre transformation is the same for the cosymplectic and contact settings. Therefore, $\Fib L$ also connects the corresponding Lagrangian and Hamiltonian formalisms in the obvious manner~\cite{Chinea1994}.

\section{Variational formulation of contact Lagrangian mechanics}\label{sec:variational}

Let $L:TQ \times \RR \to \RR$ be a Lagrangian function. In this section we will recall the so-called Herglotz's principle, a modification of Hamilton's principle that allows us to obtain Herglotz's equations (\cref{clagrangian4}), sometimes called generalized Euler-Lagrange equations. See~\cite{Herglotz1930}, or \cite{Liu2018} for a recent discussion.

Fix $q_1,q_2 \in Q$ and an interval $[a,b] \subset \RR$. We denote by $\Omega(q_1,q_1, [a,b]) \subseteq(\Cont^\infty([a,b]\to Q))$ the space of smooth curves $\xi$ such that $\xi(a)=q_1$ and $\xi(b)=q_2$. This space has the structure of an infinite dimensional smooth manifold whose tangent space at $\xi$ is given by the set of vector fields over $\xi$ that vanish at the endpoints~\cite[Proposition~3.8.2]{Abraham1978}, that is,
\begin{equation}
\begin{aligned}
        T_\xi \Omega(q_1,q_2, [a,b]) =  \set{&
            v_\xi \in \Cont^\infty([a,b] \to TQ) \mid \\& 
            \tau_Q \comp v_\xi = \xi, \,v_\xi(a)=0, \, v_\xi(b)=0 
            }.
\end{aligned}
\end{equation}

We will consider the following maps. Fix $c \in \RR$. Let 
\begin{equation}
    \mathcal{Z}:[a,b] \to \Cont^\infty ([a,b] \to Q)
\end{equation}
 be the operator that assigns to each curve $\xi$ the curve $\mathcal{Z}(\xi)$ that solves the following implicit ODE:
\begin{equation}\label{contact_var_ode}
    \dv{\mathcal{Z}(\xi)(t)}{t} = L(\xi(t), \dot \xi(t), \mathcal{Z}(\xi)(t)), \quad \mathcal{Z}(\xi)(a)= c.
\end{equation}

Now we define the \emph{action functional} as the map which assigns to each curve the solution to the previous ODE evaluated at the endpoint:
\begin{equation}\label{contact_action}
    \begin{aligned}
        \mathcal{A}: \Omega(q_1,q_2, [a,b]) &\to \RR,\\
        \xi &\mapsto \mathcal{Z}(\xi)(b),
    \end{aligned}
\end{equation}
that is, $\mathcal{A} = \ev_b \comp \mathcal{Z}$, where $\ev_b: \zeta \mapsto \zeta(b)$ is the evaluation map at $b$.

\begin{theorem}[Contact variational principle]
    Let $L: TQ \times \RR \to \RR$ be a Lagrangian function and let $\xi\in  \Omega(q_1,q_2, [a,b])$ be a curve in $Q$. Then, $(\xi,\dot\xi, \mathcal{Z}(\xi))$ satisfies the Herglotz's equations (\cref{clagrangian4}) if and only if $\xi$ is a critical point of $\mathcal{A}$.
\end{theorem}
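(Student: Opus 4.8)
The plan is to compute the first variation of $\mathcal{A}$ along an arbitrary tangent vector $v_\xi \in T_\xi\Omega(q_1,q_2,[a,b])$ and show that it vanishes for all $v_\xi$ precisely when the Herglotz equations hold. The essential difficulty compared with the classical Hamilton principle is that the action is not an integral of $L$ along the curve: it is the endpoint value $\mathcal{Z}(\xi)(b)$ of the solution of the implicit ODE \cref{contact_var_ode}. So the first step is to differentiate the map $\xi \mapsto \mathcal{Z}(\xi)$, which requires differentiating a solution of an ODE with respect to a parameter. Concretely, I would fix a smooth one-parameter family $\xi_s$ with $\xi_0 = \xi$ and $\partial_s|_{s=0}\xi_s = v_\xi$, write $\psi(t) = \partial_s|_{s=0}\mathcal{Z}(\xi_s)(t)$, and differentiate the defining relation $\dv{\mathcal{Z}(\xi_s)(t)}{t} = L(\xi_s(t),\dot\xi_s(t),\mathcal{Z}(\xi_s)(t))$ in $s$ at $s=0$. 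In coordinates this gives the linear (non-autonomous) ODE
\begin{equation}\label{eq:var_ode}
    \dot\psi(t) = \frac{\partial L}{\partial q^i}\,v^i(t) + \frac{\partial L}{\partial \dot q^i}\,\dot v^i(t) + \frac{\partial L}{\partial z}\,\psi(t), \qquad \psi(a) = 0,
\end{equation}
where the partial derivatives of $L$ are evaluated along $(\xi(t),\dot\xi(t),\mathcal{Z}(\xi)(t))$ and $v^i$ are the components of $v_\xi$. Here $\psi(a)=0$ because the initial condition $\mathcal{Z}(\xi_s)(a)=c$ does not depend on $s$.

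The second step is to solve \cref{eq:var_ode} explicitly. It is a scalar linear ODE, so the integrating-factor method applies: with $\sigma(t) = \exp\!\left(-\int_a^t \frac{\partial L}{\partial z}\,d\tau\right)$ one gets $\dv{}{t}\bigl(\sigma(t)\psi(t)\bigr) = \sigma(t)\bigl(\frac{\partial L}{\partial q^i}v^i + \frac{\partial L}{\partial \dot q^i}\dot v^i\bigr)$, and hence, since $\sigma(a)=1$ and $\mathcal{A}'(\xi)\cdot v_\xi = \psi(b)$,
\begin{equation}\label{eq:first_var}
    \sigma(b)\,\bigl(\mathcal{A}'(\xi)\cdot v_\xi\bigr) = \int_a^b \sigma(t)\left(\frac{\partial L}{\partial q^i}\,v^i + \frac{\partial L}{\partial \dot q^i}\,\dot v^i\right) dt.
\end{equation}
The third step is the usual integration by parts on the $\dot v^i$ term: $\int_a^b \sigma \frac{\partial L}{\partial \dot q^i}\dot v^i\,dt = -\int_a^b \dv{}{t}\!\bigl(\sigma \frac{\partial L}{\partial \dot q^i}\bigr) v^i\,dt$, the boundary terms dropping because $v_\xi$ vanishes at $a$ and $b$. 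Expanding $\dv{}{t}\!\bigl(\sigma\frac{\partial L}{\partial\dot q^i}\bigr) = \sigma\bigl(\dv{}{t}\frac{\partial L}{\partial\dot q^i} - \frac{\partial L}{\partial z}\frac{\partial L}{\partial\dot q^i}\bigr)$ using $\dot\sigma = -\sigma\frac{\partial L}{\partial z}$, and collecting, \cref{eq:first_var} becomes
\begin{equation}\label{eq:first_var_final}
    \sigma(b)\,\bigl(\mathcal{A}'(\xi)\cdot v_\xi\bigr) = \int_a^b \sigma(t)\left(\frac{\partial L}{\partial q^i} - \dv{}{t}\frac{\partial L}{\partial \dot q^i} + \frac{\partial L}{\partial z}\frac{\partial L}{\partial \dot q^i}\right) v^i(t)\,dt.
\end{equation}

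The final step is the conclusion by the fundamental lemma of the calculus of variations: since $\sigma(t) > 0$ everywhere and the $v^i$ are arbitrary smooth functions vanishing at the endpoints, $\mathcal{A}'(\xi)\cdot v_\xi = 0$ for all $v_\xi$ if and only if the bracketed expression vanishes identically along $(\xi,\dot\xi,\mathcal{Z}(\xi))$, which is exactly \cref{clagrangian4}. I expect the main obstacle to be the rigorous justification of the first step — that $\mathcal{Z}$ is differentiable as a map of Banach/Fréchet manifolds and that its derivative is computed by solving the linearized ODE \cref{eq:var_ode}; this is smooth dependence of ODE solutions on parameters, and the cleanest route is to invoke the implicit function theorem applied to the map sending $(\xi,\zeta)$ to $\dot\zeta - L(\xi,\dot\xi,\zeta)$ together with the initial condition, exactly as one handles the well-posedness of $\mathcal{Z}$ in the first place. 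Everything after that is a direct computation and the classical fundamental lemma.
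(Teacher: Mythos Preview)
Your proposal is correct and follows essentially the same approach as the paper: derive the linearized ODE for $\psi(t)=\partial_s|_{s=0}\mathcal{Z}(\xi_s)(t)$ with $\psi(a)=0$, solve it with the integrating factor $\sigma(t)=\exp\bigl(-\int_a^t \partial L/\partial z\,d\tau\bigr)$, integrate by parts using $v_\xi(a)=v_\xi(b)=0$, and conclude via the fundamental lemma. If anything, you are slightly more careful than the paper in flagging the smooth-dependence-on-parameters issue underlying the first step.
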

\begin{remark}
    This theorem generalizes Hamilton's Variational Principle~\cite[Theorem~3.8.3]{Abraham1978}. In the case that the Lagrangian is independent of the $\RR$ coordinate (i.e., $L(x,y,z)=\hat{L}(x,y)$)  the contact Lagrange equations reduce to the usual Euler-Lagrange equations. In this situation, we can integrate the ODE of \cref{contact_action} and we get
    \begin{equation}
        \mathcal{A}(\xi) = \int_a^b \hat L(\xi(t),\dot\xi(t))\dd t + \frac{c}{b-a},
    \end{equation}
    that is, the usual Euler-Lagrange action up to a constant.
\end{remark}

\begin{proof}
    Let $\xi \in \Omega(q_1,q_2, [a,b])$ be a curve and consider some tangent vector $v \in T_\xi \Omega(q_1,q_2, [a,b])$. In order to simplify the notation, let $\chi=(\xi,\dot\xi, \mathcal{Z}(\xi))$, which is a curve in $TQ \times \RR$. We will compute $T_{\xi} \mathcal{A}(v)$. Consider a smoothly parametrized family of curves $\xi_\lambda$ in $\Omega(q_1,q_2, [a,b])$, such that
    \begin{equation*}
        v=\restr{\dv{\xi_\lambda}{\lambda}}{\lambda=0}
    \end{equation*} 
    and let $\psi = T_\xi \mathcal{Z}(v)$, so that $T_{\xi} \mathcal{A}(v)= \psi(b)$. We first notice that, since $\mathcal{Z}(\xi_\lambda)(0)=c$ for all $\lambda$, we have that $\psi(0)=0$. Now, we will compute the time derivative of $\psi$. By interchanging the order of the derivatives and using Darboux coordinates, we find out that
    \begin{align*}
        \dot{\psi}(t) &= 
        \restr{\dv{}{\lambda}\dv{}{t} 
        \mathcal{Z}(\xi_\lambda(t))}{\lambda=0} \\&=
        \dv{}{\lambda}\restr{L(\xi_\lambda(t), \dot{\xi}_\lambda(t),\mathcal{Z}(\xi_\lambda)(t))}{\lambda=0} \\&=
        \pdv{L}{x}(\chi(t)) v(t) +
        \pdv{L}{\dot{x}}(\chi(t)) \dot{v}(t) + 
        \pdv{L}{z}(\chi(t)) \psi(t).
    \end{align*}

Hence, the function $\psi$ is the solution to the ODE above. Explicitly
\begin{equation}
    \sigma(t) \psi(t) = \int_a^t \sigma(\tau) \parens*{
        \pdv{L}{x}(\chi(\tau)) v(\tau) + \pdv{L}{\dot{x}}(\chi(\tau)) \dot{v}(\tau)
        } \dd \tau,
\end{equation}
where,
\begin{equation}
    \sigma(t) = \exp \parens*{-\int_a^t \pdv{L}{z}(\xi(\tau)) \dd \tau} > 0.
\end{equation}

Since $v(a)=v(b)=0$, we can integrate by parts and find out that
\begin{gather*}
    \sigma(b) \psi(b) = 
    \int_a^b  v(t) \parens*{
        \sigma(t) \pdv{L}{x}(\chi(t)) - 
        \dv{}{t} \parens*{\sigma(t)\pdv{L}{\dot{x}}(\chi(t))} 
        } \dd t \\ =
        \int_a^b  v(t) \sigma(t) \parens*{
         \pdv{L}{x}(\chi(t)) - 
        \dv{}{t} \pdv{L}{\dot{x}}(\chi(t))
        +  \pdv{L}{\dot{x}}(\chi(t)) \pdv{L}{z}(\chi(t)) 
        } \dd t.
\end{gather*}
Since  $\sigma(t)$ is nonzero, by the fundamental lemma of the calculus of variations, $\psi(b) = 0$ for every $v$ (i.e., $\chi$ is a critical point of $\mathcal{A}$) if and only if
\begin{equation}
    \pdv{L}{x}(\chi(t)) - 
    \dv{}{t} \pdv{L}{\dot{x}}(\chi(t))
    +  \pdv{L}{\dot{x}}(\chi(t)) \pdv{L}{z}(\chi(t)) = 0.
\end{equation}
\end{proof}

\begin{remark}
    By the results of \cref{sec:lagrangian}, if the Lagrangian is regular, then Herglotz's equations, (and, therefore, the variational problem) is equivalent to a contact Hamiltonian system. However this is not true for general Lagrangians. In the following chapters we will provide some tools to deal with singular Lagrangians.
\end{remark}

\section{Precontact manifolds}\label{sec:precontact_maifolds}
The theory presented on the previous section provides well defined dynamics for regular Lagrangian systems and there is a satisfactory correspondence between the Lagrangian and Hamiltonian formalisms on the hyperregular case. However, we would like to treat more general kinds of systems in which Lagrangians are allowed to be singular. For that, we will need to introduce a geometric model that generalizes contact geometry: precontact geometry. This geometry plays a similar role than presymplectic geometry for singular symplectic Lagrangian systems.

Let $\eta$ be a $1$-form in an $m$-dimensional manifold $M$. We define the \emph{characteristic distribution} of $\eta$ as
\begin{equation}
    \CharD= \ker \eta \cap \ker \dd \eta \subseteq TM.
\end{equation}
We say that $\eta$ is \emph{of class} $c$ if $\CharD$ is a distribution of rank $m-c$.

\begin{proposition}\label{thm:class_of_form}
    Let $\eta$ be a one-form on an $m$-dimensional manifold $M$. Then it is equivalent:
    \begin{enumerate}
        \item The form $\eta$ is of class $2r+1$.
        \item At every point of $M$,
        \begin{equation}
            \eta \wedge {(\dd \eta)}^r \neq 0, \quad 
            \eta \wedge {(\dd \eta)}^{r+1} = 0.
        \end{equation}
        \item Around any point of $M$, there exist local \emph{Darboux} coordinates $x^1,\ldots x^r$, $y_1, \ldots y_r$, $z$, $u_1, \ldots u_s$, where $2r+s+1 = m$, such that
        \begin{equation}
            \eta = \dd z - \sum_{i=1}^r y_i \dd x^i.
        \end{equation}
    \end{enumerate}
\end{proposition}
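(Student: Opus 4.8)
The plan is to prove $(3)\Rightarrow(2)$ and $(3)\Rightarrow(1)$ by a direct computation in the given chart, the equivalence $(1)\Leftrightarrow(2)$ by pointwise linear algebra applied to the pair $(\eta_p,\dd\eta_p)$, and, as the main step, $(1)\Rightarrow(3)$ — the construction of the Darboux chart — which I would reduce to the already-recalled Darboux theorem for contact forms (the case $s=0$).

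For the computations, from $\eta=\dd z-\sum_{i=1}^{r}y_i\,\dd x^i$ we get $\dd\eta=\sum_{i=1}^{r}\dd x^i\wedge\dd y_i$, hence $(\dd\eta)^r=r!\,\dd x^1\wedge\dd y_1\wedge\dots\wedge\dd x^r\wedge\dd y_r\ne0$ and $(\dd\eta)^{r+1}=0$; therefore $\eta\wedge(\dd\eta)^r=r!\,\dd z\wedge\dd x^1\wedge\dd y_1\wedge\dots\wedge\dd x^r\wedge\dd y_r\ne0$ (each summand $y_i\,\dd x^i$ is annihilated because $\dd x^i$ already occurs in $(\dd\eta)^r$) and $\eta\wedge(\dd\eta)^{r+1}=0$, which is $(2)$. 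The same local expressions give $\ker\dd\eta=\gen{\partial_z,\partial_{u_1},\dots,\partial_{u_s}}$, and since $\eta(\partial_z)=1$ while $\eta(\partial_{u_a})=0$ we obtain $\CharD=\gen{\partial_{u_1},\dots,\partial_{u_s}}$, a distribution of rank $s=m-(2r+1)$, which is $(1)$. For $(1)\Leftrightarrow(2)$ one works at a fixed point: if $2k$ is the rank of the skew form $\dd\eta_p$, then $\ker\dd\eta_p=\operatorname{rad}(\dd\eta_p)$ has dimension $m-2k$, and $\CharD_p=\ker\eta_p\cap\operatorname{rad}(\dd\eta_p)$ has dimension $m-2k$ or $m-2k-1$ according as $\eta_p$ vanishes on $\operatorname{rad}(\dd\eta_p)$ or not; passing to a symplectic basis of $\dd\eta_p$ one reads off that $\eta_p\wedge(\dd\eta_p)^{\ell}\ne0$ exactly when $\ell\le k$, the case $\ell=k$ requiring in addition that $\eta_p$ be nonzero on $\operatorname{rad}(\dd\eta_p)$, and combining these (with careful bookkeeping of the ranks) shows that $(2)$ holds at every point precisely when $\CharD$ is a distribution of rank $m-(2r+1)$.

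The substantive step is $(1)\Rightarrow(3)$. First, $\CharD$ is involutive: for $X,Y\in\sects(\CharD)$, Cartan's formula gives $\eta([X,Y])=X(\eta(Y))-Y(\eta(X))-\dd\eta(X,Y)=0$, while $\contr{[X,Y]}\dd\eta=\lieD{X}(\contr{Y}\dd\eta)-\contr{Y}(\lieD{X}\dd\eta)=0$ since $\contr{Y}\dd\eta=0$ and $\lieD{X}\dd\eta=\dd(\contr{X}\dd\eta)=0$; hence $[X,Y]\in\sects(\CharD)$. By the Frobenius theorem choose coordinates $(w^1,\dots,w^{2r+1},u_1,\dots,u_s)$ with $\CharD=\gen{\partial_{u_1},\dots,\partial_{u_s}}$. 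Because $\contr{\partial_{u_a}}\eta=0$ and $\lieD{\partial_{u_a}}\eta=\contr{\partial_{u_a}}\dd\eta=0$, the form $\eta$ involves no $\dd u_a$ and none of its coefficients depend on the $u_a$, so $\eta=\pi^{*}\bar\eta$ where $\pi$ is the projection onto the $w$-coordinates and $\bar\eta$ is a $1$-form on the $(2r+1)$-dimensional $w$-space. Any characteristic vector of $\bar\eta$ lifts, using $\CharD=\ker\pi_{*}$, to a characteristic vector of $\eta$, hence lies in $\ker\pi_{*}$; thus $\bar\eta$ has trivial characteristic distribution, i.e.\ it is a class-$(2r+1)$ form on a $(2r+1)$-manifold, that is, a contact form. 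The Darboux theorem for contact forms then yields coordinates in which $\bar\eta=\dd z-\sum_{i=1}^{r}y_i\,\dd x^i$, and pulling back along $\pi$ gives $(3)$.

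I expect the main obstacle to be exactly this last implication: the Frobenius reduction is formal once involutivity of $\CharD$ is noted, but it rests on the contact Darboux theorem, which is the genuinely nontrivial ingredient (provable by Moser's deformation argument or by induction on $r$) and which I would take as known since it already underlies the regular theory of the preceding sections. A secondary point demanding care is the linear-algebra bookkeeping in $(1)\Leftrightarrow(2)$, where one must correctly relate the powers $\eta\wedge(\dd\eta)^{\ell}$ to the rank of $\dd\eta$ and to the position of $\eta$ relative to $\operatorname{rad}(\dd\eta)$, and use the constant-rank requirement built into the notion of class.
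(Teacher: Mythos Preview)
The paper does not supply its own proof of this proposition; it simply refers the reader to Godbillon (Theorems~VI.1.6 and~VI.4.1). Your argument for the substantive implication $(1)\Rightarrow(3)$ --- involutivity of $\CharD$, a Frobenius chart adapted to $\CharD$, descent of $\eta$ to a contact form on the local leaf space, and then the contact Darboux theorem --- is precisely the standard route and is essentially what one finds in such references, so on that score there is nothing to compare. The computations giving $(3)\Rightarrow(1)$ and $(3)\Rightarrow(2)$ are also correct.

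There is, however, a genuine gap in your treatment of $(2)\Rightarrow(1)$: condition $(2)$ as written does \emph{not} force $\CharD$ to have constant rank $m-(2r+1)$, so the ``careful bookkeeping'' you allude to cannot close the argument. Your own pointwise analysis already shows why: if $\dd\eta_p$ has rank $2k$, then the two conditions $\eta_p\wedge(\dd\eta_p)^r\neq 0$ and $\eta_p\wedge(\dd\eta_p)^{r+1}=0$ can hold either with $k=r$ and $\eta_p$ nonvanishing on the radical (class $2r+1$ at $p$), \emph{or} with $k=r+1$ and $\eta_p$ vanishing on the radical (class $2r+2$ at $p$). Concretely, on $\RR^4$ take $\eta=\dd z - y\,\dd x - \tfrac{1}{2}z^{2}\,\dd w$; then $\dd\eta=\dd x\wedge\dd y - z\,\dd z\wedge\dd w$, one has $\eta\wedge\dd\eta\neq 0$ everywhere and $\eta\wedge(\dd\eta)^{2}=0$ for dimension reasons, yet $\CharD$ is one-dimensional on $\{z=0\}$ and trivial off it. The classical formulation uses $(\dd\eta)^{r+1}=0$ rather than $\eta\wedge(\dd\eta)^{r+1}=0$ in the second half of $(2)$; with that stronger hypothesis your linear-algebra argument does go through. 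So the defect lies as much in the proposition's wording as in your proof, but you should be aware that the equivalence $(1)\Leftrightarrow(2)$ fails as literally stated.
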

For a proof, see \cite[Theorems~VI.1.6, VI.4.1]{Godbillon1969}. In that situation we say that $\eta$ is \emph{a precontact form of class $2r+1$}. In coordinates, the characteristic distribution is given by
\begin{equation}
    \CharD = \gen*{\set*{\pdv{}{u_a}}_{a=1,\ldots,s}}.
\end{equation}

A pair $(M,\eta)$ of a manifold equipped with a precontact form will be called a \emph{precontact manifold}. A triple $(M,\eta,H)$, where $(M,\eta)$ is a precontact manifold and $H\in \Cont^\infty (M)$ is the \emph{Hamiltonian function} will be called a \emph{precontact Hamiltonian system}, which is the main object of study in this article.

\begin{remark}
    The distribution $\CharD$ is involutive and it gives rise to a foliation of $M$. If the quotient $\pi:M \to M/\CharD$ has a manifold structure, then there is a unique $1$-form $\tilde\eta$ such that 
    $\pi^* \tilde{\eta}=\eta$. From a direct computation, $\tilde{\eta}$ is a contact form on $M/\CharD$. This justifies the name of \emph{precontact form}.
\end{remark}

    We define the following morphism of vector bundles over $M$:
    \begin{equation}
        \begin{aligned}
            \bar{\flat}: TM &\to TM^*\\
            v &\mapsto \contr{v} \dd \eta + \eta(v) \eta.
        \end{aligned}
    \end{equation}
    
    The following $2$-tensors are associated to $\bar{\flat}$ and its transpose
    \begin{equation}
        \omega = \dd \eta + \eta \otimes \eta, \quad \bar\omega = - \dd \eta + \eta \otimes \eta.
    \end{equation}
    In other words, $\bar{\flat}(X) = \omega(X,\cdot) = \bar\omega(\cdot,X)$. Therefore $\omega(X,Y) = \bar\omega(Y,X)$.
    
    A \emph{Reeb vector field} for $(M,\eta)$ is a vector field $\Reeb\in \VecFields (M)$ such that
    \begin{equation}
        \contr{\Reeb} \dd \eta = 0, \quad \eta(\Reeb) = 1. 
    \end{equation}
    
    We note that there exists Reeb vector fields in every precontact manifold. Indeed we can define local vector fields $\Reeb = \pdv{}{z}$ in Darboux coordinates and can extend it using partitions of unity.

    \begin{proposition}
        Let $(M,\eta)$ be a precontact manifold. We have  
        \begin{equation}\label{eq:CharD}
            \CharD =  \ker \eta \cap \ker \dd \eta = \ker \bar{\flat} = \ann{(\im \bar{\flat})}.
        \end{equation}
    \end{proposition}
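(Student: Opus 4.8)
The first equality is just the definition of $\CharD$, so the plan is to establish the chain $\CharD = \ker\bar{\flat} = \ann{(\im \bar{\flat})}$ fibrewise, at an arbitrary point $p \in M$.

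First I would prove $\CharD = \ker\bar{\flat}$. The inclusion $\CharD \subseteq \ker\bar{\flat}$ is immediate: if $\contr{v}\dd\eta = 0$ and $\eta(v) = 0$, then $\bar{\flat}(v) = \contr{v}\dd\eta + \eta(v)\eta = 0$. For the converse, the key point is that $\bar{\flat}$ is not invertible, so instead of inverting it I would extract $\eta(v) = 0$ by pairing the covector $\bar{\flat}(v)$ with a Reeb vector field $\Reeb$, which exists by the remark preceding the statement. Using $\contr{\Reeb}\dd\eta = 0$ and $\eta(\Reeb) = 1$, one computes $\bar{\flat}(v)(\Reeb) = \dd\eta(v,\Reeb) + \eta(v) = -(\contr{\Reeb}\dd\eta)(v) + \eta(v) = \eta(v)$; hence $\bar{\flat}(v) = 0$ forces $\eta(v) = 0$, and then $\contr{v}\dd\eta = \bar{\flat}(v) - \eta(v)\eta = 0$, so $v \in \CharD$.

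Next I would prove $\ker\bar{\flat} = \ann{(\im \bar{\flat})}$. I would first check directly that $\CharD \subseteq \ann{(\im\bar{\flat})}$: for $v \in \CharD$ and any $w \in T_pM$ one has $\bar{\flat}(w)(v) = \omega(w,v) = \dd\eta(w,v) + \eta(w)\eta(v) = -(\contr{v}\dd\eta)(w) = 0$, using $\eta(v) = 0$ and $\contr{v}\dd\eta = 0$. Then I would finish by a dimension count: since $\im\bar{\flat}_p \subseteq T_p^*M$, its annihilator sits in $T_pM$ with $\dim\ann{(\im\bar{\flat}_p)} = m - \dim\im\bar{\flat}_p = \dim\ker\bar{\flat}_p = \dim\CharD_p$ by rank-nullity, and an inclusion of subspaces of equal dimension is an equality. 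An alternative I would keep in reserve is to identify $\ann{(\im\bar{\flat})}$ with the kernel of the transpose $v \mapsto \contr{v}\bar\omega = -\contr{v}\dd\eta + \eta(v)\eta$, using $\bar{\flat}(w)(v) = \bar\omega(v,w)$, and rerun the Reeb-contraction argument with $\bar\omega$ in place of $\dd\eta$; or just read everything off the Darboux normal form of \cref{thm:class_of_form}.

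I do not expect a genuine obstacle here. The only points needing care are: $\bar{\flat}$ is really non-injective in the precontact case, so $\eta(v) = 0$ must be obtained by testing against $\Reeb$ rather than by $\bar{\flat}^{-1}$; $\omega$ is neither symmetric nor skew, so $\CharD \subseteq \ann{(\im\bar{\flat})}$ has to be verified from the explicit description of $\CharD$ and is not a formal consequence of $\CharD = \ker\bar{\flat}$; and the annihilator of a subbundle of $T^*M$ is taken inside $TM$, with the entire argument being pointwise so that it remains valid when $\eta$ is only defined along a submanifold.
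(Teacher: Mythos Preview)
Your proposal is correct and follows essentially the same route as the paper: contract $\bar{\flat}(v)$ with a Reeb vector to extract $\eta(v)=0$ for the first equality, then show $\CharD \subseteq \ann{(\im\bar{\flat})}$ by the same direct computation and close with the rank--nullity dimension count. The paper's proof is just a slightly terser version of yours, without the alternative arguments you keep in reserve.
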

    \begin{proof}
        We will prove the previous equalities. In order to see that $\ker \eta \cap \ker \dd \eta = \ker \bar{\flat}$, let $\bar{\flat}(X)=0$, then $\contr{X} \dd \eta + \eta(X) \eta = 0$. If we contract the previous expression with a Reeb vector field $\Reeb$ we obtain that $\eta(X) = 0$. Thus, $\contr{X} \dd \eta$ also vanishes. The other inclusion is trivial.
    
        Now we will see that $\ann{(\im \bar{\flat})} =  \ker \bar{\flat}$. Let $X\in \ker \bar{\flat}$. By the first equality, $\contr{X}\dd\eta=0$ and $\contr{X}\eta=0$. Then, for any vector field $Y$
        \begin{equation}
            \contr{X}\bar{\flat}(Y) = \contr{X} \contr{Y} \dd\eta + \eta(Y)\eta(X) = -\contr{Y}\contr{X}\dd\eta = 0,
        \end{equation}
        hence $\ann{(\im \bar{\flat})} \supseteq  \ker \bar{\flat}$. By noticing that at each point $p\in M$ both subspaces of $T_p M$ have the same dimensions, we conclude that both distributions are equal.
    \end{proof}

    \begin{proposition}\label{Reeb_im_eta}
        A vector field $X$ is a Reeb vector field for $(M,\eta)$ if and only if $\bar{\flat}(X) = \eta$. That is, the set of Reeb vector fields is $\Reeb + \sects(\CharD)$, where $\Reeb$ is an arbitrary Reeb vector field and $\sects(\CharD)$ is the set of vector fields tangent to $\CharD$.
    \end{proposition}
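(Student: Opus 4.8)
The plan is to establish the equivalence ``$X$ is a Reeb vector field $\iff \bar{\flat}(X) = \eta$'' straight from the definitions, and then read off the description of the full set of Reeb vector fields from the identity $\CharD = \ker\bar{\flat}$ proved in the preceding proposition.

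The forward implication is immediate: if $X$ is a Reeb vector field, then $\contr{X}\dd\eta = 0$ and $\eta(X) = 1$ by definition, so $\bar{\flat}(X) = \contr{X}\dd\eta + \eta(X)\eta = \eta$. For the converse, suppose $\bar{\flat}(X) = \eta$, that is, $\contr{X}\dd\eta + \eta(X)\eta = \eta$. The one thing requiring a small idea is to extract the two separate Reeb conditions from this single equation; I would do this by contracting with a Reeb vector field $\Reeb$, whose existence was noted just above (locally $\Reeb = \partial/\partial z$ in Darboux coordinates, globalized by a partition of unity --- and since the whole argument is pointwise, a local $\Reeb$ near each point already suffices). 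Contracting gives $\contr{\Reeb}\contr{X}\dd\eta + \eta(X)\eta(\Reeb) = \eta(\Reeb)$; since $\contr{\Reeb}\contr{X}\dd\eta = -\contr{X}\contr{\Reeb}\dd\eta = 0$ and $\eta(\Reeb) = 1$, we get $\eta(X) = 1$, and substituting back into the original equation yields $\contr{X}\dd\eta = 0$. Hence $X$ is a Reeb vector field.

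For the last assertion, fix an arbitrary Reeb vector field $\Reeb$. If $X$ is any other Reeb vector field, then $\bar{\flat}(X) = \eta = \bar{\flat}(\Reeb)$, so $\bar{\flat}(X - \Reeb) = 0$; by the preceding proposition $\ker\bar{\flat} = \CharD$, hence $X - \Reeb \in \sects(\CharD)$ and $X \in \Reeb + \sects(\CharD)$. Conversely, for any $Y \in \sects(\CharD) = \ker\bar{\flat}$ we have $\bar{\flat}(\Reeb + Y) = \bar{\flat}(\Reeb) = \eta$, so $\Reeb + Y$ is a Reeb vector field by the first part. Therefore the set of Reeb vector fields is exactly $\Reeb + \sects(\CharD)$. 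I do not expect a genuine obstacle in this argument; the only points to be careful about are invoking the existence of a Reeb vector field before using it in the converse direction, and noting that the extraction of $\eta(X) = 1$ is a pointwise statement, so a local Reeb field is enough.
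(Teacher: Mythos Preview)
Your proof is correct and follows essentially the same approach as the paper: both use the existence of a Reeb vector field $\Reeb$ (so that $\bar{\flat}(\Reeb)=\eta$) together with the identity $\ker\bar{\flat}=\CharD$ from the preceding proposition. The only cosmetic difference is the order of the argument: the paper first shows directly that $X$ is Reeb iff $X-\Reeb\in\CharD$ and then translates this into $\bar{\flat}(X)=\eta$, whereas you first establish the $\bar{\flat}$-characterization (your contraction-with-$\Reeb$ step being exactly the manoeuvre from the proof of the previous proposition) and then deduce the affine description.
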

    \begin{proof}
        Let $\Reeb$ be a Reeb vector field. Then, $X$ is also a Reeb vector field if and only if $\eta(X)=\eta(\Reeb) = 1$ and 
        $\contr{X} \eta = \contr{{\Reeb}} \eta=0$. That is, if and only if $\Reeb-X$ is tangent to $\CharD$. Equivalently $\bar{\flat}(\Reeb-X)=0$ or $\bar{\flat}(X)=\eta$.
    \end{proof}
    
    For a distribution $\Delta \subseteq TM$, we define the following notion of complement with respect to $\omega$. Since $\omega$ is neither symmetric nor antisymmetric, we need to distinguish between right and left complements:
    \begin{equation}
        \begin{aligned}        
            \orth{\Delta}  &=  
              \set{X \in TM \mid \omega(Z,X) = \bar{\flat}(Z)(X) = 0,\, \forall Z\in \Delta}
              = \ann{(\bar{\flat}(\Delta))},\\
            \orthL{\Delta} &= 
              \set{X \in TM\mid \omega(X,Z) = 0, \forall Z \in \Delta}.
        \end{aligned}
    \end{equation}
    These complements have the following relationship
    \begin{equation}\label{complements_cancel}
        \orthL{(\orth{\Delta})} = 
        \orth{(\orthL{\Delta})} = \Delta + \CharD.
    \end{equation}
    We remark that these complements interchange sums and intersections, since the annihilator interchanges them and the linear map $\bar{\flat}$ preserves them. Consequently, if $\Delta,\Gamma$ are distributions, we have
    \begin{equation}
        \begin{aligned}
            \orth{(\Delta\cap \Gamma)} &= \orth{\Delta} + \orth{\Gamma}\\
            \orth{(\Delta +  \Gamma)} &= \orth{\Delta}  \cap \orth{\Gamma}
        \end{aligned}
    \end{equation}

    \section{The constraint algorithm}\label{sec:constr_alg}
    We aim to solve Hamilton equations on a precontact Hamiltonian system $(M, \eta, H)$. In order to do that, we will introduce an algorithm similar to the one introduced on~\cite{Gotay1978} for presymplectic systems and that was extended in~\cite{Chinea1994,Leon2002} to the cosymplectic case.
    
    Let $\gamma_H = \dd H - (H + \Reeb(H))\eta$ where $\Reeb$ is a Reeb vector field (we will later see that the algorithm is independent on the choice of the Reeb vector field) and consider the equation
    \begin{equation}
        \bar{\flat}(X) = \gamma_H.
    \end{equation}
    This equation might not have solution, so we will consider the subset $M_1 \subseteq M_0 = M$ of the points in which a solution exists. That is,
    \begin{equation}
        M_1 = \set{p \in M_0 \mid (\gamma_H)_p \in \bar{\flat}(T_p M_0) }.
    \end{equation}
    We note that this condition is equivalent to the following
    \begin{equation}
        M_1 = \set{p \in M_0 \mid \bracks{(\gamma_H)_p ,\orth{TM_0}} = 0},
    \end{equation}
    since $\bar{\flat}(TM_0) = \ann{\parens{\ann{\bar{\flat}(TM_0)}}}=\ann{\parens{\orth{TM_0}}}$.
    
    If we choose a local basis $\set{X_a}_{j=a}^{k_1}$ of $\orth{TM_0}$,  we can easily compute the so-called \emph{primary constraint} functions $\phi^a(p) = \bracks{\dd H_p - (\Reeb(H) + H) \eta_p ,X_a}$, whose zero set is the manifold $M_1$. We note that $\orth{TM_0} =\ann{(\im \bar{\flat})} = \ker \bar{\flat} = \CharD$ by \cref{eq:CharD}. Hence,
    \begin{equation}
        \bracks{\dd H_p - (\Reeb(H) + H) \eta_p ,\orth{TM_0}}= 
        \set{Z_p(H) = 0 \mid Z_p \in \CharD_p}.
    \end{equation}
    Therefore, in Darboux coordinates,
    \begin{equation}
        \phi^a = \pdv{H}{s^a}.
    \end{equation}
    We note that this implies that $\Reeb = \tilde\Reeb(H)$ along $M_1$ for every Reeb vector field $\tilde\Reeb$, since $\Reeb_p-\tilde{\Reeb}_p \in \CharD_p$. Consequently, $\restr{\gamma_H}{M_1}$ is independent on the choice of the Reeb vector field. Therefore, the election of $\Reeb$ doesn't affect the constraints produced by the algorithm.

    Now we can solve Hamilton equations, but, in order to have meaningful dynamics, the solution $X$ should be tangent to the constraint submanifold. Otherwise, a solution of the equations of motion might escape from $M_1$. This tangency condition is equivalent to demand that $\bar{\flat}(X_p) \in \bar{\flat}(TM_p)$ since $\bar{\flat}$ is an isomorphism modulo $\CharD_p$: 
    \begin{equation}
        M_2 = \set{p \in M_1 \mid \bracks{(\gamma_H)_p ,\orth{TM_1}} = 0},
    \end{equation}
    providing a second constraint submanifold, with its corresponding constraint functions. However, it is not enough. We must again require that the vector field is tangent to the new submanifold. We then get a sequence of submanifolds
    \begin{equation}
    \begin{aligned}
            M_{i+1} &= 
            \set{p \in M_i \mid (\gamma_H)_p \in\bar{\flat}(T_p M_i)} \\ &=
            \set{p \in M_i \mid \bracks{(\gamma_H)_p ,\orth{T_p M_i}} = 0}    
        \end{aligned}
    \end{equation}
    which eventually stabilizes, that is, there exist some $i_f$ such that $M_{i_f} = M_{i_f+1}$. We call this manifold the \emph{final constraint submanifold} and denote it by $M_f$. This submanifold is locally described by the zero set of some constraint functions $\set{\phi^j}_{j=1}^{k_f}$.

    \subsection{Tangency of the Reeb vector field}
    Next, we will discuss when there is a Reeb vector field tangent to the final constraint submanifold. We can guarantee it in some situations, like in the case of Rayleigh dissipation (as in the example of \cref{ssec:ex1}) in which $\Reeb(H)$ is constant. However, this is not true in general, as can be seen in the example of \cref{ssec:ex_no_reeb}.

        \begin{lemma}\label{thm:reeb_horizontal_complement}
            Let $N$ be a submanifold of a precontact manifold $(M, \eta)$. Then, there exists a Reeb vector field on $M$ tangent to $N$ if and only if $\orth{TN} \subseteq \ker \eta$
        \end{lemma}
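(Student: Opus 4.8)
The statement is an equivalence, and I would treat the two implications separately, expecting all the difficulty to be in the reverse one. For the forward implication, suppose $\Reeb$ is a Reeb vector field with $\Reeb_p \in T_pN$ for all $p \in N$. If $X \in \orth{T_pN}$ then by definition $\omega(Z,X)=0$ for every $Z\in T_pN$; choosing $Z=\Reeb_p$ and using $\omega(\Reeb,X)=\dd\eta(\Reeb,X)+\eta(\Reeb)\eta(X)=\eta(X)$ (because $\contr{\Reeb}\dd\eta=0$ and $\eta(\Reeb)=1$) gives $\eta(X)=0$. Hence $\orth{T_pN}\subseteq\ker\eta_p$. This uses nothing beyond the definitions of $\omega$, of $\orth{(\cdot)}$, and of the Reeb condition.

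\emph{Reverse implication, pointwise reformulation.} The first step is that the hypothesis is exactly pointwise solvability of the Reeb equation along $N$. Since $\orth{T_pN}=\ann(\bar{\flat}(T_pN))$ and $\ann$ is an inclusion-reversing involution on subspaces, $\orth{T_pN}\subseteq\ker\eta_p$ is equivalent to $\eta_p\in\ann(\orth{T_pN})=\ann(\ann(\bar{\flat}(T_pN)))=\bar{\flat}(T_pN)$; by \cref{Reeb_im_eta} this says precisely that there is a vector $X_p\in T_pN$ with $\bar{\flat}(X_p)=\eta_p$, i.e. a ``Reeb vector at $p$'' lying in $T_pN$. Equivalently, fixing any global Reeb vector field $\Reeb_0$ (one exists by the partition-of-unity construction recalled in \cref{sec:precontact_maifolds}), the hypothesis says $\Reeb_{0,p}\in T_pN+\CharD_p$ for all $p\in N$: either directly, since $\bar{\flat}(X_p)=\eta_p=\bar{\flat}(\Reeb_{0,p})$ forces $\Reeb_{0,p}-X_p\in\ker\bar{\flat}=\CharD_p$, or via \cref{complements_cancel} applied to $\ker\eta=\orth{\gen{\Reeb_0}}$ and $\orth{\CharD}=TM$, using that $\orth{(\cdot)}$ and $\orthL{(\cdot)}$ reverse inclusions and $\orth{(\cdot)}$ exchanges $+$ and $\cap$.

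\emph{Reverse implication, smoothing and globalization.} It remains to upgrade pointwise solvability to an honest smooth global Reeb vector field tangent to $N$. The plan is: (i) construct near each $p_0\in N$ a local Reeb vector field tangent to $N$; (ii) on $M\setminus N$ take any local Reeb vector field; (iii) patch with a partition of unity $\{\rho_\alpha\}$ subordinate to the resulting cover — this is legitimate because $\bar{\flat}$ is $\Cont^\infty(M)$-linear, so $\bar{\flat}\!\left(\sum_\alpha\rho_\alpha\Reeb_\alpha\right)=\sum_\alpha\rho_\alpha\eta=\eta$, while over a point $q\in N$ the value is a convex combination of vectors of $T_qN$, hence lies in $T_qN$. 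For (i), take Darboux coordinates $(x^i,y_i,z,u_a)$ for $\eta$, so $\CharD=\gen{\partial_{u_a}}$ and every local Reeb vector field equals $\partial_z+\sum_a f^a\,\partial_{u_a}$, together with local defining functions $g^1,\dots,g^l$ for $N$; tangency to $N$ becomes the linear system $\partial_z g^b+\sum_a f^a\,\partial_{u_a}g^b=0$ along $N$, which by the previous step is solvable at each point of $N$. The main obstacle is that its solution set need not have constant dimension along $N$ (the rank of $\CharD\cap TN$ may jump), so one must still exhibit a \emph{smooth} choice of $f^a$; I would do this by solving the system not merely along $N$ but on a whole neighborhood — so that $\partial_z+\sum_a f^a\,\partial_{u_a}$ is tangent to all level sets of $(g^1,\dots,g^l)$, and a fortiori to $N$ — after an appropriate choice of defining functions. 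Once (i) is in place, (ii) and (iii) are routine.
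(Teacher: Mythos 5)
Your two implications are exactly the paper's: the forward direction by evaluating $\omega(\Reeb,X)=\dd\eta(\Reeb,X)+\eta(\Reeb)\eta(X)=\eta(X)$ for $X\in\orth{TN}$ with $\Reeb$ tangent to $N$, and the converse via $\eta\in\ann{(\orth{TN})}=\bar{\flat}(TN)$ together with \cref{Reeb_im_eta}. The paper stops at that point, working with distributions along $N$ and leaving the smooth selection and global extension of the Reeb vector field implicit, so your smoothing/globalization paragraph is an addition to, rather than a divergence from, its argument; the patching step (iii) via $\Cont^\infty(M)$-linearity of $\bar{\flat}$ is fine. One caution about your step (i): the proposed repair of the non-constant-rank issue --- choosing $f^a$ so that $\partial_z+\sum_a f^a\partial_{u_a}$ is tangent to \emph{all} level sets of the defining functions on a whole neighborhood --- is not warranted by the hypothesis, which only guarantees solvability of the linear system along $N$ itself; off $N$ the system may have no solution, so this step would need a different justification (in practice a constant-rank assumption on $\CharD\cap TN$, which is implicit throughout the constraint algorithm). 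Measured against the paper's own level of rigor this is not a gap, since the paper does not address smooth selection at all.
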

        \begin{proof}
           Let $\Reeb$ be a Reeb vector field on $M$ tangent to $N$. Let $X$ be tangent to $\orth{TN}$. That is, for all $Y_p \in \orth {TN}$ and $p \in N$,
            \begin{equation}
                \bar{\flat}{(Y_p)}(X_p) = 0.
            \end{equation}
            In particular, if we let $Y=\Reeb$,
            \begin{equation}
                \bar{\flat}{(\Reeb_p)}(X_p) = 
                \dd \eta(\Reeb_p,X_p) + \eta(\Reeb_p) \eta(X_p) 
                 = \eta(X_p) = 0,
            \end{equation}
            hence $\orth{TN} \subseteq \ker \eta$.

            For the converse, $\orth{TN} \subseteq \ker \eta$ implies $\eta \in \ann{(\orth{TN})} = \bar{\flat}(TN)$. So there $\eta = \bar{\flat}(Y)$ with $Y$ tangent to $TN$. $Y$ is a Reeb vector field by \cref{Reeb_im_eta}.
        \end{proof}
        
        \begin{proposition}\label{thm:reeb_tangent}
            Let $(M,\eta, H)$ be a precontact Hamiltonian system. Then, there is a Reeb vector field $\Reeb$ tangent to the final constraint submanifold if and only if $Z(\Reeb (H)) = 0$ for all $Z \in \orth{TM_f}$. In particular, if $\Reeb(H)$ is constant, then $\Reeb$ is tangent to $M_f$.
        \end{proposition}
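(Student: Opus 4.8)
The plan is to feed \cref{thm:reeb_horizontal_complement} into an annihilator computation. That lemma replaces ``there is a Reeb vector field tangent to $M_f$'' by the algebraic condition $\orth{TM_f}\subseteq\ker\eta$ along $M_f$, and by definition of the orthogonal complement $\orth{TM_f}=\ann{(\bar{\flat}(TM_f))}$, so taking annihilators again gives $\bar{\flat}(TM_f)=\ann{(\orth{TM_f})}$ pointwise on $M_f$. Hence a covector $\alpha_p$ at $p\in M_f$ lies in $\bar{\flat}(T_pM_f)$ precisely when $\bracks{\alpha_p,Z}=0$ for all $Z\in\orth{T_pM_f}$. Taking $\alpha=\eta$ turns the conclusion of \cref{thm:reeb_horizontal_complement} into ``$\eta\in\bar{\flat}(TM_f)$ along $M_f$'', and taking $\alpha=\dd(\Reeb(H))$ rewrites the condition ``$Z(\Reeb(H))=0$ for all $Z\in\orth{TM_f}$'' as ``$\dd(\Reeb(H))\in\bar{\flat}(TM_f)$ along $M_f$'' (a condition one expects, and the equivalence confirms, to be independent of the choice of Reeb vector field). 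So the proposition reduces to the single equivalence
\[
    \eta\in\bar{\flat}(TM_f)\quad\Longleftrightarrow\quad \dd(\Reeb(H))\in\bar{\flat}(TM_f)\qquad\text{along }M_f.
\]

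Two facts will carry the argument. First, the defining property $M_{i_f+1}=M_{i_f}=M_f$ of the final constraint submanifold says exactly that $\gamma_H=\dd H-(H+\Reeb(H))\eta$ lies in $\bar{\flat}(TM_f)$ along $M_f$; equivalently there is a vector field $X_0$ \emph{tangent to $M_f$} with $\bar{\flat}(X_0)=\gamma_H$ there. Second, $\lieD{\Reeb}\eta=\contr{\Reeb}\dd\eta+\dd(\contr{\Reeb}\eta)=0$, hence also $\lieD{\Reeb}\dd\eta=0$, and therefore $\lieD{\Reeb}(\bar{\flat}(Y))=\bar{\flat}([\Reeb,Y])$ for \emph{every} vector field $Y$ (the only possible discrepancy, $\eta([\Reeb,Y])-\Reeb(\eta(Y))$, vanishes because $\lieD{\Reeb}\eta=0$). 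For the implication from left to right: if $\eta\in\bar{\flat}(TM_f)$ along $M_f$ then, by \cref{thm:reeb_horizontal_complement}, some Reeb vector field --- which we now call $\Reeb$ --- is tangent to $M_f$. Put $X=X_0+(H+\Reeb(H))\Reeb$, so that $\bar{\flat}(X)=\dd H$ along $M_f$ and $X$ is tangent to $M_f$, and extend $X$ to a neighbourhood. The $1$-form $\beta=\bar{\flat}(X)-\dd H$ vanishes on $M_f$, and since $\Reeb$ is tangent to $M_f$ its flow preserves $M_f$, so $\lieD{\Reeb}\beta$ vanishes at points of $M_f$. Using $\dd(\Reeb(H))=\lieD{\Reeb}\dd H$ and the identity above, along $M_f$ we obtain $\dd(\Reeb(H))=\lieD{\Reeb}(\bar{\flat}(X))-\lieD{\Reeb}\beta=\bar{\flat}([\Reeb,X])$, and $[\Reeb,X]$ is tangent to $M_f$, so $\dd(\Reeb(H))\in\bar{\flat}(TM_f)$ along $M_f$.

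The converse implication is the part I expect to be the real obstacle: from $\dd(\Reeb(H))\in\bar{\flat}(TM_f)$ one must produce a Reeb vector field tangent to $M_f$. I would argue by contraposition: if no Reeb vector field is tangent to $M_f$ then there is $Z_*\in\orth{T_pM_f}$ with $\eta(Z_*)\neq0$, and the task is to exhibit $W\in\orth{T_pM_f}$ with $W(\Reeb(H))\neq0$. The natural tool is the identity $\Reeb\bracks{\gamma_H,W}=\bracks{\gamma_H,[\Reeb,W]}+W(\Reeb(H))-\Reeb(H+\Reeb(H))\,\eta(W)$, which follows from $\lieD{\Reeb}\gamma_H=\dd(\Reeb(H))-\Reeb(H+\Reeb(H))\eta$, applied to the constraint functions $\bracks{\gamma_H,W}$ (which vanish on $M_f$ for $W\in\orth{TM_f}$ by stability); the hard point is to keep the cross-term $\bracks{\gamma_H,[\Reeb,W]}$ from absorbing the discrepancy carried by $\eta(W)$, for which I expect one needs Darboux coordinates adapted to $M_f$ together with an induction over the levels of the algorithm. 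Finally, the ``in particular'' is easy: if $\Reeb(H)$ is constant then $\dd(\Reeb(H))=0\in\bar{\flat}(TM_f)$ trivially, so a tangent Reeb vector field exists; in fact $\Reeb$ itself is tangent, by induction --- $\CharD$ is $\Reeb$-invariant, so $\Reeb$ is tangent to $M_1$, and if $\Reeb$ is tangent to $M_i$ then $\orth{TM_i}$ is $\Reeb$-invariant and $\eta=\bar{\flat}(\Reeb)\in\bar{\flat}(TM_i)$ along $M_i$, so all three terms of the displayed identity vanish on $M_{i+1}$ and $\Reeb$ is tangent to $M_{i+1}$.
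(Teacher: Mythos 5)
Your reduction via \cref{thm:reeb_horizontal_complement} and the pointwise identity $\ann{(\orth{TM_f})}=\bar{\flat}(TM_f)$ is sound, and the direction you actually prove --- a Reeb field tangent to $M_f$ forces $\dd(\Reeb(H))\in\bar{\flat}(TM_f)$, via $\lieD{\Reeb}\beta=0$ along $M_f$ and $\lieD{\Reeb}\comp\bar{\flat}=\bar{\flat}\comp\lieD{\Reeb}$ --- is correct, and arguably cleaner than the paper's treatment, which extracts this direction as the last ($i=i_f$) instance of its step-by-step equivalence. The same goes for your ``in particular'' induction. The genuine gap is the converse (sufficiency) direction: you only sketch a contrapositive strategy, explicitly leave open how to prevent the cross-term $\bracks{\gamma_H,[\Reeb,W]}$ from interfering with the $\eta(W)$ term, and defer to adapted Darboux coordinates and an unspecified induction. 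As written, this half of the equivalence is not proved, so the proposition is not established.

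What is missing is precisely the induction you already carried out for constant $\Reeb(H)$, run under the weaker hypothesis --- and this is the paper's own proof. Assume $Z(\Reeb(H))=0$ for all $Z\in\orth{TM_f}$ and show inductively that $\Reeb$ is tangent to each $M_i$. Given tangency to $M_i$, \cref{thm:reeb_horizontal_complement} gives $\orth{TM_i}\subseteq\ker\eta$, which kills the term $\Reeb(H+\Reeb(H))\,\eta(W)$ in your displayed identity; the invariance $[\Reeb,W]\in\orth{TM_i}$ (from $\lieD{\Reeb}\omega=0$, the tangency of $\Reeb$ to $M_i$, and the tangency of $[\Reeb,W']$ for $W'$ tangent to $M_i$) makes $\bracks{\gamma_H,[\Reeb,W]}$ vanish on $M_{i+1}$ by the very definition of $M_{i+1}$ as the set where $\gamma_H$ annihilates $\orth{TM_i}$; and the remaining term $W(\Reeb(H))$ vanishes by hypothesis, since $TM_f\subseteq TM_i$ gives $\orth{TM_i}\subseteq\orth{TM_f}$ (at the points of $M_f$ where the paper, too, evaluates the condition). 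So the obstacle you flag is not an obstacle: no adapted coordinates and no contraposition are needed, and inserting this induction --- your constant-$\Reeb(H)$ argument verbatim, with ``$\Reeb(H)$ constant'' replaced by the hypothesis on $\orth{TM_f}$ --- closes the gap.
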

        \begin{proof}
            We will prove the result by induction. Let $p\in TM_f$ and let $\Reeb$ be a Reeb vector field tangent to $M_i$. Notice that $\orth{TM_i}\subseteq (\ker \eta)$ by \cref{thm:reeb_horizontal_complement}. $\Reeb$ will be tangent to $M_{i+1}$ at $p$ if the Lie derivative of the $(i+1)$-th constraint functions vanish.
            That is, for every $Z$ tangent to $\orth{T M_i}$ in a neighborhood of $p$,
            \begin{equation}\label{eq:Reeb_restriction}
                (\lieD{\Reeb} \bracks{\gamma_H,Z})_p= \bracks{\lieD{\Reeb} \gamma_H , Z}_p + \bracks{\gamma_E, \lieD{X} Z}_p = 0.
            \end{equation}

            We compute the first term. Since $\lieD{\Reeb} \eta = 0$, we have that
            \begin{align*}
                \lieD{\Reeb} \gamma_H &= 
                \lieD{\Reeb} \dd H - \lieD{X} (E + \Reeb(E)) \eta .
            \end{align*}
            Therefore, because $\eta(Z)=0$, we deduce
            \begin{equation}
                \bracks{\lieD{\Reeb} \gamma_H , Z} = Z(\Reeb(H))
            \end{equation}
            
            We will now see that $\lieBr{R,Z}_p \in T_p M_i$. Let $W$ be any vector field on $M_i$. Then, along $M_i$,
            \begin{align*}
                \omega(W,\lieBr{\Reeb,Z}) =  
                - \lieD{\Reeb} \omega(W,Z) +
                \lieD{\Reeb}(\omega(W,Z)) + \omega(\lieBr{\Reeb,W},Z) = 0.
            \end{align*}                    
            The first term vanishes since $\lieD{\Reeb} \omega = 0$ because $\lieD{\Reeb} \eta = 0$. The second and third terms are also zero because $Z \in \orth{TM_i}$. Hence, the last term of \cref{eq:Reeb_restriction} vanishes along $M_{i+1}$. Therefore, $\Reeb$ is tangent to $M_{i+1}$ if and only if
            \begin{equation}
                (\lieD{\Reeb}\bracks{\gamma_H , Z})_p = Z(\Reeb(H))_p =0,
            \end{equation}
            for all $Z_p \in \orth{(TM_i)_p} \subseteq \orth{(TM_f)_p}$. 

            Notice that if $\Reeb$ is not tangent to $T_p M_{i}$ it will not be tangent to $TM_f \subseteq T_p M_{i}$, so the converse follows.
        \end{proof}

    As we have proven in \cref{thm:reeb_tangent}, the Reeb vector field is not necessarily tangent to the constraint submanifold $M_f$. A modification of the previous algorithm guarantees this fact, just by requiring that a chosen Reeb vector field $\Reeb$ is tangent to the constraint submanifold after each step. This will produce a new sequence of submanifolds. Explicitly, $\bar{M}_0 = \hat{M}_0 = M$, and for $i\geq 1$ we define recursively:
    \begin{equation}
        \begin{aligned}
            \bar{M}_{i} &= 
            \set{p \in \hat{M}_{i-1} \mid (\gamma_H)_p \in\bar{\flat}(T_p \hat{M}_{i-1})} \\ &=
            \set{p \in \hat{M}_{i-1}  \mid \bracks{(\gamma_H)_p ,\orth{T_p \hat{M}_{i-1}}} = 0}  \\
            \hat{M}_{i} &= 
            \set{p \in \bar{M}_i \mid \Reeb_p \in T_p \bar{M}_i} \\ &=
            \set{p \in \bar{M}_i \mid \lieD{\Reeb}\bracks{(\gamma_H)_p ,\orth{T_p \hat{M}_{i-1}}} = 0}.  
        \end{aligned}
    \end{equation}
    Locally, in terms of constraint functions, if $\bar{M}_i$ is described as the zero set of functions $(\phi_k)_k$, then $\hat{M}_i$ would be the zero set of $(\phi_k, \Reeb(\phi_k))_k$. We get a sequence of constraint submanifolds as follows:
    \begin{equation}
        \cdots \toinj \hat{M}_{i+1} \toinj \bar{M}_i \toinj \hat{M}_i \toinj \cdots \toinj \hat{M}_2 \toinj \bar{M}_1 \toinj \hat{M}_1 \toinj M,
    \end{equation}

    The algorithm stops when we reach submanifold such that none of the two steps produces new constraints. That is: $\bar{M}_{j_f} = \hat{M}_{j_f} = \bar{M}_{j_f+1}$.

    \begin{remark}
        By construction, the first algorithm will produce the largest submanifold $M_f$ in which there is a solution to the equations of motion. The second algorithm produces a final constraint submanifold $\bar{M}_{f}$ in which there is a solution to the equations of motion and a Reeb vector field is tangent, hence $\bar{M}_{f} \subseteq M_f$. Apart from this, no much more about the relationship between $M_f$ and $\bar{M}_{f}$ seems possible to state. It can be the case that $\bar{M}_{f} = M_f$, such as in the example of \cref{ssec:ex1}, or that $\bar{M}_{f}= \emptyset$ and $M_f$ is nonempty for any choice of Reeb vector field, as in the example of \cref{ssec:ex_no_reeb}.
    \end{remark}

    \section{The constraint algorithm and the Legendre transformation}\label{sec:equiv_problem}
    In this section, we will apply the previous constraint algorithm to singular Lagrangian systems.

    Let that $L: TQ \times \RR \to \RR$ is a singular Lagrangian function. We will use the results and notation of \cref{sec:lagrangian,sec:Hamiltonian_Legendre_transform}. As in the previous sections, we will denote by $\eta_L = \dd z - \lambda_L$ the $1$-form defined in \cref{eq:Lagrangian_form}; by $\Delta$ is the extended Liouville vector field (\cref{eq:Liouvile_vf}); by $S$ the canonical endomorphism (\cref{eq:canonical_endomorphism}, and by $\Fib L$ the Legendre transformation (\cref{eq:legendre_transform_contact}) with respect to $L$.

    The objective is twofold: to develop a constraint algorithm in the Lagrangian side, but also the corresponding Hamiltonian counterpart. 
    
    We make the following observation, which is useful for working with precontact systems that come from a Lagrangian. The proof is trivial from the coordinate expression of $\dd \eta_L$.
    \begin{proposition}
        Let $L:TQ\times\RR \to \RR$ be a Lagrangian function. Then, the form $\eta_L$ is precontact of class $2r+1$ if and only if the rank of the Hessian matrix of $L$ with respect to the velocities is $r$ at every point.
    \end{proposition}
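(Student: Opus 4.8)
The plan is to pass to bundle coordinates $(q^i,\dot{q}^i,z)$ and read off the class of $\eta_L$ from the wedge criterion in \cref{thm:class_of_form}: $\eta_L$ has class $2r+1$ at a point iff $\eta_L\wedge(\dd\eta_L)^r\neq0$ and $\eta_L\wedge(\dd\eta_L)^{r+1}=0$ there. Differentiating $\eta_L=\dd z-\frac{\partial L}{\partial\dot{q}^i}\dd q^i$ gives
\begin{equation*}
\dd\eta_L=-\frac{\partial^2 L}{\partial q^j\partial\dot{q}^i}\,\dd q^j\wedge\dd q^i-W_{ij}\,\dd\dot{q}^j\wedge\dd q^i-\frac{\partial^2 L}{\partial z\partial\dot{q}^i}\,\dd z\wedge\dd q^i,
\end{equation*}
with $W_{ij}=\frac{\partial^2 L}{\partial\dot{q}^i\partial\dot{q}^j}$. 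The feature to exploit is that the differentials $\dd\dot{q}^1,\dots,\dd\dot{q}^n$ occur only in the middle summand, with coefficient matrix $W$, whereas $\eta_L$ itself and the two remaining summands involve only $\dd z$ and the $\dd q^i$.

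For the direction "$\operatorname{rank}W=r\Rightarrow$ class $\ge 2r+1$" I would fix a point $p$, put $r=\operatorname{rank}W(p)$, and change the velocity coordinates linearly at $p$ (using that $W(p)$ is symmetric) so that the middle term becomes $\sum_{a=1}^{r}\dd\dot{q}^a\wedge\dd q^a$ up to sign. In the multinomial expansion of $(\dd\eta_L)^r$ the only monomial that involves all of $\dd\dot{q}^1,\dots,\dd\dot{q}^r$ is $r!\,\dd\dot{q}^1\wedge\dd q^1\wedge\cdots\wedge\dd\dot{q}^r\wedge\dd q^r$; wedging with the $\dd z$ coming from $\eta_L$ produces a nonzero $(2r+1)$-form that no other monomial of $\eta_L\wedge(\dd\eta_L)^r$ can cancel, since every other monomial uses strictly fewer of the $\dd\dot{q}^a$. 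Hence $\eta_L\wedge(\dd\eta_L)^r\neq0$ wherever $\operatorname{rank}W\ge r$. Together with the reverse bound $\eta_L\wedge(\dd\eta_L)^{r+1}=0$ when $\operatorname{rank}W=r$, \cref{thm:class_of_form} then gives that $\eta_L$ is a precontact form of (constant) class $2r+1$ precisely when $W$ has constant rank $r$.

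The real content of the argument — and the place where "trivial from the coordinate expression" is a little optimistic — is that reverse bound: one must show the lower-order pieces $\frac{\partial^2 L}{\partial q^j\partial\dot{q}^i}\,\dd q^j\wedge\dd q^i$ and $\frac{\partial^2 L}{\partial z\partial\dot{q}^i}\,\dd z\wedge\dd q^i$ cannot be assembled into a nonzero power of $\dd\eta_L$ large enough to push the class above $2\operatorname{rank}W+1$ once the $r$ "velocity pairs" $\dd\dot{q}^a\wedge\dd q^a$ have been consumed. A crude degree count only limits the number of distinct $\dd\dot{q}^i$ occurring in a monomial to $\operatorname{rank}W$, still leaving the $n+1$ differentials $\dd z,\dd q^1,\dots,\dd q^n$ free to fill the remaining degree, so the conclusion genuinely uses the particular algebraic form of these coefficients and not merely their ranks. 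I would expect essentially all of the effort to go into this bookkeeping with the $\dd q$- and $\dd z$-terms; everything else is formal manipulation with the displayed coordinate expression of $\dd\eta_L$.
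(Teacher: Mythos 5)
Your first half is fine: the pointwise normal form for the block $-W_{ij}\,\dd\dot q^j\wedge\dd q^i$ together with the monomial count correctly gives $\eta_L\wedge(\dd\eta_L)^{\rho}\neq 0$ wherever $\operatorname{rank}W\geq\rho$. But the step you defer --- that $\operatorname{rank}W=r$ forces $\eta_L\wedge(\dd\eta_L)^{r+1}=0$ --- is not bookkeeping still to be done: it is false, and with it the proposition as stated. Take $Q=\RR^2$ and $L=q^1\dot q^2$. Then $W\equiv 0$, while $\eta_L=\dd z-q^1\dd q^2$ and $\dd\eta_L=-\dd q^1\wedge\dd q^2$, so $\eta_L\wedge\dd\eta_L\neq 0$; moreover $\CharD=\spn\set{\partial/\partial\dot q^1,\partial/\partial\dot q^2}$ has constant rank $2$ on the $5$-dimensional manifold $TQ\times\RR$, so $\eta_L$ is a genuine precontact form of class $3=2\cdot 1+1$ (it is already in Darboux form) although the Hessian has rank $0$ everywhere. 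This defeats both implications, under any of the three characterizations in \cref{thm:class_of_form}: rank $0$ does not give class $1$, and class $3$ does not give rank $1$. The mechanism is exactly the one you flagged: the mixed blocks $\partial^2L/\partial q^j\partial\dot q^i$ and $\partial^2L/\partial z\partial\dot q^i$ can add to the rank of $\dd\eta_L$ on $\ker W$, so the class is not a function of $\operatorname{rank}W$ alone; what your computation yields in general is only the one-sided bound $\text{class}\geq 2\operatorname{rank}W+1$. The paper's ``trivial from the coordinate expression'' conceals this; a correct statement needs either regularity ($r=n$, where the coefficient of the volume form in $\eta_L\wedge(\dd\eta_L)^n$ is $\pm n!\det W$) or an additional hypothesis killing those mixed partials on $\ker W$.

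A second, quieter leak in your plan: the wedge criterion (2) of \cref{thm:class_of_form}, as quoted, does not characterize class $2r+1$ in the sense the paper defines it (corank of $\CharD$); the classical criterion requires $(\dd\eta)^{r+1}=0$, not merely $\eta\wedge(\dd\eta)^{r+1}=0$. For $L=z\dot q$ on $T\RR\times\RR$ one has $\eta_L=\dd z-z\,\dd q$, which satisfies your two conditions with $r=0$, yet $\CharD=\spn\set{\partial/\partial\dot q}$ has corank $2$, so the class is $2$ and no Darboux expression $\dd z'$ exists. Hence even where both wedge identities hold you could not conclude the class statement without also checking $(\dd\eta_L)^{r+1}=0$. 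There is no argument in the paper to compare against (the proof is asserted to be trivial), and your suspicion that it is not was well founded: as stated the proposition is not a theorem, though the later sections are unaffected because the definition of an almost regular Lagrangian assumes precontactness of $\eta_L$ directly rather than deducing it from the rank of the Hessian.
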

    
    Let $E_L = \Delta(L)-L$ be the energy and $\gamma_{E_L}= \dd E_L - (\Reeb(E_L) + E_L)\eta_L$, where $\eta_L$ is a precontact form of class $2r+1$. We remark that $(TQ \times \RR,\eta_L,E_L)$ is a precontact Hamiltonian system. Hence, we can apply the constraint algorithm developed in \cref{sec:constr_alg} to the equation $\bar{\flat}_L (X) = \gamma_{E_L}$. 
    
    If we denote $P_1 = TQ \times \RR$, we will obtain a sequence of constraint submanifolds
    \begin{equation}
        \cdots \toinj P_{i}  \toinj \cdots \toinj P_2 \toinj P_1,
    \end{equation}
    where
    \begin{equation}
        P_{i+1} =
        \set{p \in P_i \mid \bracks{(\gamma_H)_p ,\orth{T_p P_i}} = 0}, 
    \end{equation}
    and $P_f$ is the final constraint submanifold. If it has positive dimension, then there would exist a vector field $X$ tangent to $P_f$ that solves the equations of motion along $P_f$.

    Of course, this solution will not be unique in general. We would get a new solution by adding a section of $\CharD \cap TP_f$, where $\CharD = \ker \bar{\flat}_L$ is the characteristic distribution.

\subsection{The Hamiltonian side and the equivalence problem}
Now we will develop a Hamiltonian counterpart of this theory. This problem was addressed in~\cite{Gotay1979} for singular Lagrangians in the presymplectic case and by~\cite{Chinea1994} for the time dependent case.     
We will require the following additional regularity conditions on $L$ to make sure we get a precontact Hamiltonian system which is amenable to the constraint algorithm:
    \begin{definition}
        We say that a contact Lagrangian
        $L\in \Cont^{\infty}(TQ \times \RR)$ is \emph{almost regular} if
        \begin{itemize}
            \item $\eta_L$ is precontact.
            \item $\Fib L$ is a submersion onto its image.
            \item For every $p\in T^*Q \times \RR$, the fibers ${(\Fib L)}^{-1}(p)$ are connected submanifolds. 
        \end{itemize}
    \end{definition}

    We denote by $M_1$ be the image of $\Fib L$, which will be called the \emph{primary constraint submanifold}. Let ${\Fib  L}_1$ denote the restriction of $\Fib L$ to $M_1$, that is
    \begin{equation}\label{eq:diagram_FL1}
        \begin{tikzcd}
            TQ \times \RR \arrow[r, "\Fib L"] \arrow[rd, "\Fib L_1"] & T^*Q \times \RR \\
           & M_1 \arrow[u, "g_1"', hook]
            \end{tikzcd}
    \end{equation}
    where $g_1: M_1 \toinj TQ \times \RR$ is the canonical inclusion.

    The submanifold $M_1$ is equipped with the form $\eta_1 = {g_1}^*(\eta_Q)$, where $\eta_Q$ is the canonical contact form in $T^*Q \times \RR$. By the commutativity of the diagram in \cref{eq:diagram_FL1}, we deduce
    \begin{equation}
        {{(\Fib L}_1)}^*(\eta_1)={(\Fib L)}^*(\eta_Q) = \eta_L
    \end{equation}

    \begin{proposition}
        Let $L:P_1 = TQ \times \RR  \to \RR$ be an almost regular Lagrangian such that $\Fib L_1$. Then $\eta_1 = {g_1}^*(\eta_Q)$ is a precontact form of the same class as $\eta_L$.
    \end{proposition}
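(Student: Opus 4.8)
The plan is to exploit that $\Fib L_1 \colon P_1 = TQ\times\RR \to M_1$ is a \emph{surjective submersion}: it is surjective because $M_1$ is by definition the image of $\Fib L$, and it is a submersion because the hypothesis that $L$ is almost regular includes that $\Fib L$ is a submersion onto its image, which is precisely the assertion that $\Fib L_1$ is a submersion. The identity ${(\Fib L_1)}^*\eta_1 = \eta_L$ established just above, together with the fact that pullback commutes with $\dd$ and with $\wedge$, gives
\[
  {(\Fib L_1)}^*\bigl(\eta_1 \wedge (\dd\eta_1)^k\bigr) = \eta_L \wedge (\dd\eta_L)^k
  \qquad\text{for every } k\ge 0 .
\]

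The elementary observation needed is that, for a surjective submersion $\pi\colon P \to N$, the pullback on forms is injective: given $q\in N$, choose $p\in\pi^{-1}(q)$; since $\dd\pi_p\colon T_pP \to T_qN$ is onto, the induced linear map $\Lambda^k T_q^*N \to \Lambda^k T_p^*P$ is injective, so $(\pi^*\alpha)_p \neq 0$ whenever $\alpha_q\neq 0$, and, ranging over $q$ and using surjectivity to find a preimage, $\pi^*\alpha = 0$ forces $\alpha = 0$.

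I would then apply this with $\pi=\Fib L_1$. Write $2r+1$ for the class of $\eta_L$; by \cref{thm:class_of_form} this is equivalent to $\eta_L\wedge(\dd\eta_L)^r \neq 0$ and $\eta_L\wedge(\dd\eta_L)^{r+1} = 0$ at every point of $P_1$. Taking $k=r+1$ in the displayed identity, ${(\Fib L_1)}^*\bigl(\eta_1\wedge(\dd\eta_1)^{r+1}\bigr) = 0$, and injectivity of the pullback forces $\eta_1\wedge(\dd\eta_1)^{r+1} = 0$ everywhere on $M_1$. Taking $k=r$, for an arbitrary $q\in M_1$ pick $p$ with $\Fib L_1(p)=q$; then ${(\Fib L_1)}^*_p\bigl(\eta_1\wedge(\dd\eta_1)^r\bigr)_q = \bigl(\eta_L\wedge(\dd\eta_L)^r\bigr)_p \neq 0$, hence $\bigl(\eta_1\wedge(\dd\eta_1)^r\bigr)_q\neq 0$; surjectivity of $\Fib L_1$ onto $M_1$ is what ensures every point of $M_1$ is reached this way. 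Feeding these two facts back into \cref{thm:class_of_form} applied to $\eta_1$ shows that $\eta_1$ is a precontact form of class $2r+1$, the same as $\eta_L$.

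There is essentially no serious obstacle; the two points that require a little care are (i) invoking \emph{injectivity} of the pullback under a submersion, which is exactly what transfers the vanishing condition $\eta_1\wedge(\dd\eta_1)^{r+1}=0$ down to $M_1$, and (ii) recovering $\eta_1\wedge(\dd\eta_1)^r\neq 0$ at \emph{every} point of $M_1$, which uses that $\Fib L_1$ is onto $M_1$. The connectedness of the fibers of $\Fib L$ is not needed here. Finally, one should note that the inequality $\eta_1\wedge(\dd\eta_1)^r\neq 0$ automatically entails $\dim M_1 \ge 2r+1$, so the characterization of \cref{thm:class_of_form} applies to $\eta_1$ without any extra dimension assumption.
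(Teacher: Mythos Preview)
Your proof is correct and follows essentially the same approach as the paper: both use the characterization of \cref{thm:class_of_form} together with the identity ${(\Fib L_1)}^*\bigl(\eta_1 \wedge (\dd\eta_1)^k\bigr) = \eta_L \wedge (\dd\eta_L)^k$, and then transfer the (non)vanishing conditions through the surjective submersion $\Fib L_1$. The only cosmetic difference is in how the vanishing of $\eta_1\wedge(\dd\eta_1)^{r+1}$ is obtained: the paper pulls back by a local section $G$ of $\Fib L_1$ (so $G^*{(\Fib L_1)}^* = \Id$), whereas you argue directly via injectivity of the pullback along a submersion; these are equivalent.
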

    \begin{proof}
        Assume $\eta_L$ is of class $2r+1$. Then,  $\eta_1 \wedge \dd \eta_1^r$ is nowhere zero because its image by ${(\Fib L)}^*$ is nowhere zero. 
        
        Also, $\eta_1 \wedge \dd \eta_1^{r+1}$ is everywhere zero. Let $p\in M_1$. Since $\Fib L_1 : P_1 \to M_1$ is a submersion, there are smooth local sections $G:U \to P_1$, where $p \in U \subseteq M_1$ such that $\Fib L_1 \comp G = \Id_U$. Then,
        \begin{equation}
            0 = G^* (\eta_L \wedge \dd {\eta_L}^{r+1}) = 
            G^*({(\Fib L_1)}^*(\eta_1 \wedge \dd \eta_1^{r+1})) =
            \eta_1 \wedge \dd \eta_1^{r+1}.
        \end{equation}
        
        Therefore $\eta_1$ is a precontact form of class $2r + 1$.
    \end{proof}

    The last ingredient for setting up a precontact Hamiltonian system on $M_1$ is a Hamiltonian function $H_1 : M_1 \to \RR$. By requiring that $\Fib L$ has connected fibers we obtain the following result:
    \begin{proposition}\label{thm:ker_FL_energy}
        Let $L: \RR\times TQ \to \RR$, be an almost regular Lagrangian, then, there is a unique function $H_1: M_1 \to \RR$ such that the following diagram commutes:
        \begin{equation}\label{eq:diagram_FL1_energy}
            \begin{tikzcd}
                & \RR \\
                TQ \times \RR \arrow[r, "\Fib L"] \arrow[rd, "\Fib L_1"]
                \arrow[ur,"E_L"] 
                & T^*Q \times \RR \\
               & M_1 \arrow[u, "g_1"', hook] \arrow[uu,"H_1", swap, bend right = 60]
                \end{tikzcd}
        \end{equation}
        That is, 
        \begin{equation}\label{eq:Ham_implicit}
            H_1 \comp \Fib L = E_L, 
        \end{equation}
    \end{proposition}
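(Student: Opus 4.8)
The plan is to show that the energy $E_L$ is constant on the fibers of $\Fib L_1$ and then invoke the standard fact that a smooth function which is constant on the connected fibers of a surjective submersion descends uniquely to a smooth function on the base.

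First I would compute the kernel of the tangent map $(\Fib L)_*$ from the coordinate expression in \cref{eq:legendre_transform_contact}. A vector $X = A^i \pdv{}{q^i} + B^i \pdv{}{\dot q^i} + C \pdv{}{z}$ is sent by $(\Fib L)_*$ to a vector whose $\pdv{}{q^i}$-component is $A^i$, whose $\pdv{}{z}$-component is $C$, and whose $\pdv{}{p_j}$-component is $A^i \frac{\partial^2 L}{\partial q^i \partial \dot q^j} + W_{ij}B^i + C\frac{\partial^2 L}{\partial z \partial \dot q^j}$. Hence $X \in \ker (\Fib L)_*$ if and only if $A^i = 0$, $C = 0$ and $(B^i) \in \ker W$; that is, $\ker (\Fib L)_*$ consists precisely of the vertical vectors $B^i \pdv{}{\dot q^i}$ with $(B^i)$ in the kernel of the Hessian of $L$. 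Since $L$ is almost regular, $\Fib L_1 : P_1 \to M_1$ is a surjective submersion, so this kernel is a smooth distribution of constant rank whose integral leaves are exactly the fibers of $\Fib L_1$, which are connected by hypothesis.

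Next I would check that $\dd E_L$ annihilates this distribution. Using $E_L = \Delta(L) - L$ and \cref{eq:energy_coords}, for $V = B^i \pdv{}{\dot q^i} \in \ker (\Fib L)_*$ a short computation gives $V(E_L) = B^i\bigl(\partial L/\partial \dot q^i + \dot q^j W_{ij} - \partial L/\partial \dot q^i\bigr) = \dot q^j W_{ij} B^i = 0$, because $(B^i) \in \ker W$. Therefore $E_L$ restricts to a locally constant function on each fiber of $\Fib L_1$, and since these fibers are connected, $E_L$ is constant on each of them.

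Finally, the descent is routine: $\Fib L_1$ being a surjective submersion admits local sections, and for $p \in M_1$ with a local section $G : U \to P_1$ satisfying $\Fib L_1 \comp G = \Id_U$ one sets $H_1|_U = E_L \comp G$; fiber-constancy of $E_L$ makes this independent of the choice of $G$, so the local pieces glue to a well-defined smooth function $H_1 : M_1 \to \RR$ with $H_1 \comp \Fib L = E_L$, and uniqueness follows because $\Fib L$ is onto $M_1$. I expect the only substantive point to be the Hessian computation showing $E_L$ is fiber-constant; the connectedness hypothesis on the fibers of $\Fib L$ is exactly what upgrades ``locally constant on fibers'' to ``constant on fibers'', and without it $H_1$ need not even be well defined.
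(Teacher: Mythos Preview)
Your proposal is correct and follows essentially the same route as the paper: both identify $\ker(\Fib L)_*$ in bundle coordinates as the vertical vectors $B^i\,\partial/\partial\dot q^i$ with $(B^i)$ in the kernel of the Hessian $W$, then compute $V(E_L)=\dot q^j W_{ij}B^i=0$ directly from the coordinate expression of $E_L$, and use connectedness of the fibers to conclude that $E_L$ descends. Your write-up is slightly more explicit about the descent via local sections, but the substantive content is the same.
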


    \begin{proof}

        We will prove that $E_L$ is constant along the fibers of $\Fib L$, so $H_1$ is well defined. Since the fibers are connected, it is enough to see that $\lieD{Z} E = 0$ for every $Z \in \ker {(\Fib L)}_*$.

        One can compute (see~\cite[page~3424]{Chinea1994})
        \begin{equation}\label{FLkernel}
            \ker{(\Fib L)}_* = 
            \ker\dd \lambda_L \cap \im S = 
            \ker\dd \eta_L \cap \im S = \CharD \cap \im S.
        \end{equation}
        
        In bundle coordinates, one can see that $X \in \ker{(\Fib L)}_*$ if and only if
        \begin{equation}
            X = b^j \pdv{}{\dot{q}^j}, 
        \end{equation}
        where, for all $i$,
        \begin{equation}
            b^j \frac{\partial^2 L}{\partial \dot{q}^i \partial \dot{q}^j} =0.
        \end{equation}

        By using the coordinate expression of the energy (\cref{eq:energy_coords}) we find that
        \begin{equation}
            X(E_L) =
            \dot{q}^i b^j  \frac{\partial^2 L}{\partial \dot{q}^i \partial \dot{q}^j}  = 0.
        \end{equation}
    \end{proof}
    
    By the results of this chapter we conclude that if the Lagrangian is almost regular, then $(M_1, \eta_1, H_1)$ is a precontact Hamiltonian system. Thus, we apply the constraint algorithm (\cref{sec:constr_alg}) to the equation $\bar{\flat}_1(Y) = \gamma_{H_1}$, where $\bar{\flat}_1$ is the mapping defined by $\eta_1$. Thus we obtain a sequence of constraint submanifolds
    \begin{equation}
        \cdots \toinj M_{i} \toinj \cdots \toinj M_2 \toinj M_1,
    \end{equation}
    where $M_f$ is the final constraint submanifold.

    We will investigate the connection between the algorithm on the precontact systems $(P_1,\eta_L,E_L)$ and $(M_1, \eta_1, H_1)$.

    \begin{lemma}
        The following diagram commutes
        \begin{equation}
            \begin{tikzcd}
                TQ \times \RR \arrow[r, "\Fib L"] \arrow[rd, "\Fib L_1"] & T^*Q \times \RR \\
                P_2 \arrow[rd, "\Fib L_2"] \arrow[u, "j_2", hook] & M_1 \arrow[u, "g_1"', hook] \\
                \vdots \arrow[u, "j_3", hook] & M_2 \arrow[u, "g_2"', hook] \\
                P_f \arrow[rd, "\Fib L_f"] \arrow[u, "j_f", hook] & \vdots \arrow[u, "g_3"', hook] \\
                 & M_f \arrow[u, "g_f"', hook]
                \end{tikzcd}
        \end{equation}
        where $P_i$ and $M_i$ are the $i$-th constraint submanifolds obtained in the constraint algorithm to $P_1 = TQ \times \RR$ and to $M_1$ respectively, and $j_i:P_i\to P_{i-1}$, $g_i:M_i\to M_{i-1}$ are the canonical inclusions. The submersions $\Fib L_i: P_i \to M_i$ are the restrictions of the Legendre transformation $\Fib L$ to the corresponding constraint submanifolds.
    \end{lemma}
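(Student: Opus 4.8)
The plan is to prove by induction on $i$ the following package of statements, of which only commutativity is literally asserted but all of which are needed to make the maps $\Fib L_i$ and the submanifolds $M_i$ meaningful: \textbf{(a)} $\Fib L(P_i)=M_i$ and, more precisely, $P_i={(\Fib L)}^{-1}(M_i)$; \textbf{(b)} the corestriction $\Fib L_i\colon P_i\to M_i$ of $\Fib L$ is a surjective submersion whose fibres coincide with those of $\Fib L$, hence are connected; \textbf{(c)} writing $\eta_i,H_i$ for the restrictions of $\eta_1,H_1$ to $M_i$, one has ${(\Fib L_i)}^{*}\eta_i=\restr{\eta_L}{P_i}$ and $H_i\comp\Fib L_i=\restr{E_L}{P_i}$. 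The diagram then commutes trivially, since every $\Fib L_i$ is a restriction of $\Fib L$. The base case $i=1$ is precisely the almost regularity of $L$ together with \cref{thm:ker_FL_energy}: $M_1=\im\Fib L$, the map $\Fib L_1$ is a surjective submersion with connected fibres, and the two identities in (c) hold by construction of $\eta_1$ and $H_1$.

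For the inductive step I would first choose matching Reeb vector fields. Pick any Reeb vector field $\Reeb_1$ for $(M_1,\eta_1)$. Because $\Fib L_1$ is a submersion and ${(\Fib L_1)}^{*}\eta_1=\eta_L$, any local $\Fib L_1$-lift of $\Reeb_1$ is automatically a Reeb vector field for $(P_1,\eta_L)$ (the conditions $\eta_L(\cdot)=1$ and $\contr{\cdot}\dd\eta_L=0$ pull back from the corresponding ones for $\Reeb_1$); since the set of lifts over a fixed open set is affine over sections of the subbundle $\ker{(\Fib L_1)}_{*}$, a partition of unity produces a global Reeb vector field $\Reeb_L$ on $P_1$ that is $\Fib L_1$-related to $\Reeb_1$. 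Using (c) one then obtains $\Reeb_L(E_L)=\Reeb_1(H_1)\comp\Fib L_1$ and hence $\gamma_{E_L}={(\Fib L_1)}^{*}\gamma_{H_1}$ on all of $P_1$. This particular choice is harmless since, as observed in \cref{sec:constr_alg}, the constraint submanifolds are independent of the Reeb vector field.

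The geometric core of the step is the identity ${(\Fib L_1)}_{*}\bigl(\orth{T_p P_i}\bigr)=\orth{T_{\bar p}M_i}$ for $p\in P_i$, $\bar p=\Fib L_1(p)$, where the complements are taken with respect to $\eta_L$ and $\eta_1$ respectively. From ${(\Fib L_1)}^{*}\eta_1=\eta_L$ one computes $\bar{\flat}_L(X)={(\Fib L_1)}^{*}_p\bigl(\bar{\flat}_1({(\Fib L_1)}_{*}X)\bigr)$ for $X\in T_p P_1$; the inductive hypothesis (a) gives $T_p P_i={(\Fib L_1)}_{*}^{-1}(T_{\bar p}M_i)$, whence ${(\Fib L_1)}_{*}(T_p P_i)=T_{\bar p}M_i$, and $\ker{(\Fib L)}_{*}=\CharD\cap\im S\subseteq\CharD$ (by \cref{FLkernel}) already lies in $T_p P_i$ because $P_i$ is a preimage; putting these together, a vector $Y$ lies in $\orth{T_p P_i}$ iff ${(\Fib L_1)}_{*}Y$ lies in $\orth{T_{\bar p}M_i}$. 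Pairing with $\gamma_{E_L}={(\Fib L_1)}^{*}\gamma_{H_1}$ yields, for $p\in P_i$,
\[
 \bracks{(\gamma_{E_L})_p,\, \orth{T_p P_i}}=\bracks{(\gamma_{H_1})_{\bar p},\, {(\Fib L_1)}_{*}\orth{T_p P_i}}=\bracks{(\gamma_{H_1})_{\bar p},\, \orth{T_{\bar p}M_i}},
\]
so $p\in P_{i+1}$ iff $\Fib L_1(p)\in M_{i+1}$. Surjectivity of $\Fib L_i$ then gives $\Fib L(P_{i+1})=M_{i+1}$ and $P_{i+1}={(\Fib L)}^{-1}(M_{i+1})$; consequently $\Fib L_{i+1}$ is the restriction of a submersion to the preimage of a submanifold, hence a surjective submersion onto $M_{i+1}$ with the same connected fibres as $\Fib L$, and the identities of (c) restrict verbatim. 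This closes the induction.

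The main obstacle is the double bookkeeping this requires: pinning down the global $\Fib L_1$-related Reeb pair, and — more essentially — checking that the orthogonal-complement identity holds on the nose rather than merely modulo $\CharD$, which is exactly what forces the stronger hypothesis $P_i={(\Fib L)}^{-1}(M_i)$ (not just $\Fib L(P_i)=M_i$) to be carried through the induction. A subordinate point, which I would absorb into the standing hypotheses of the constraint algorithm, is that each $M_{i+1}$ (hence each $P_{i+1}$) is a genuine submanifold, so that the subsequent step of the induction is well posed.
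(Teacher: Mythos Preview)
Your proof is correct and follows essentially the same approach as the paper: choose $\Fib L$-related Reeb vector fields, use ${(\Fib L_1)}^{*}\eta_1=\eta_L$ (hence ${(\Fib L_1)}^{*}\omega_1=\omega_L$ and ${(\Fib L_1)}^{*}\gamma_{H_1}=\gamma_{E_L}$) to show that ${(\Fib L_1)}_{*}$ identifies $\orth{T_p P_i}$ with $\orth{T_{\bar p}M_i}$, and conclude that the constraint conditions at each stage are $\Fib L$-related. Your version is considerably more careful than the paper's sketch --- in particular, carrying the stronger inductive hypothesis $P_i={(\Fib L)}^{-1}(M_i)$ and explicitly constructing the global Reeb lift --- but the underlying argument is the same.
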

    \begin{proof}
        All the claims follow from proving that $\Fib L_i (P_i)=M_i$. Equivalently, we need two show that the constraint defining $P_i$ are precisely the pullback by $\Fib L_i$ of the constraints defining $M_i$. For performing the algorithm, we choose Reeb vector fields which are $\Fib L$-related.

        First, let ${\omega}_L = \dd \eta_L + \eta_L \otimes \eta_L$ and ${\omega}_1 = \dd \eta_1 + \eta_1 \otimes \eta_1$. Since ${(\Fib L)}^* (\eta_1)= \eta_L$, then ${(\Fib L)}^*(\omega_1) = {\omega}_L$. From this and the fact that $FL_*$ is surjective, it easily follows that $FL_*$ maps  $\orth{T P_i}$ onto $\orth{T M_i}$.

        By taking  $\Fib L$-related Reeb vector fields, from a straightforward computation we find that ${(\Fib L)}^* (\gamma_{H_1}) = \gamma(E_L)$. With this, we have that for any $Y \in \orth{TM_i}$ and any $X\in \orth{TP_i}$ such that ${(\Fib L)}_* X = Y$, ${(\Fib L)}^*(\gamma_{H_1}(Y))= \gamma_{E_L}(X)$. Hence $\Fib L_i(P_i)=M_i$ because their constraints are related by the Legendre transformation.
    \end{proof}

    From the commutativity of the diagram, we get the following result.
    \begin{theorem}[Equivalence Theorem]\label{thm:lagrange_hamilton_equiv}
        Let $L:P\times \RR\to \RR$ be an almost regular Lagrangian, let $(P,\eta_L,E_L)$ be the corresponding precontact system, and let  $(M_1,\eta_1,H_1)$ be its Hamiltonian counterpart,. We denote the final constraint submanifolds by $P_f$ and $M_f$, respectively. Then 
        \begin{itemize}
            \item   For every $\Fib L$-projectable solution $X$ of the equations of motion along $P_f$, ${(\Fib L)}_* (X)$ is a solution of Hamilton equations of motion along $M_f$.
            \item  For every solution $Y$ of Hamilton equations of motion along $M_f$, every $X\in \VecFields(TQ \times \RR)$ such that ${(\Fib L)}_*(X)=Y$ solves the equations of motion along $P_f$.
        \end{itemize}
    \end{theorem}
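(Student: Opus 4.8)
The plan is to derive the theorem formally from the preceding lemma, turning each of the two assertions into a one-line computation with paired covectors. Throughout I would use $\Fib L$-related Reeb vector fields on $TQ\times\RR$ and on $M_1$, so that, as in the proof of the lemma, ${(\Fib L)}^*\eta_1 = \eta_L$, hence ${(\Fib L)}^*\omega_1 = \omega_L$, and ${(\Fib L)}^*\gamma_{H_1} = \gamma_{E_L}$; moreover each $\Fib L_i\colon P_i\to M_i$ is a surjective submersion and $P_i = {(\Fib L)}^{-1}(M_i)$, since the functions cutting out $P_i$ are the $\Fib L$-pullbacks of those cutting out $M_i$. In particular $P_f = {(\Fib L)}^{-1}(M_f)$ and $\Fib L(P_f) = M_f$. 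Recall finally that the equation of motion on $(TQ\times\RR,\eta_L,E_L)$ reads $\bar{\flat}_L(X) = \omega_L(X,\cdot) = \gamma_{E_L}$ as an identity in $T^*(TQ\times\RR)$, while on $(M_1,\eta_1,H_1)$ it reads $\bar{\flat}_1(Y) = \omega_1(Y,\cdot) = \gamma_{H_1}$ on vectors tangent to $M_1$.

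For the first bullet I would take an $\Fib L$-projectable solution $X$ along $P_f$ and set $Y = {(\Fib L)}_*X$, a well-defined vector field on $M_1$; it is tangent to $M_f$ because $X$ is tangent to $P_f$ and $\Fib L(P_f) = M_f$. Fixing $q = \Fib L(p)\in M_f$ with $p\in P_f$, any $w\in T_qM_1$ equals ${(\Fib L)}_*v$ for some $v\in T_p(TQ\times\RR)$ (submersion), and then
\begin{align*}
    \omega_1\big(Y,{(\Fib L)}_*v\big)
    &= \big({(\Fib L)}^*\omega_1\big)(X,v) = \omega_L(X,v) = \gamma_{E_L}(v) \\
    &= \big({(\Fib L)}^*\gamma_{H_1}\big)(v) = \gamma_{H_1}\big({(\Fib L)}_*v\big),
\end{align*}
using the Lagrangian equation of motion in the middle. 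Hence $\bar{\flat}_1(Y) = \gamma_{H_1}$ along $M_f$.

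For the converse I would take a solution $Y$ along $M_f$ and any $X\in\VecFields(TQ\times\RR)$ with ${(\Fib L)}_*X = Y$. Since $P_f = {(\Fib L)}^{-1}(M_f)$ and $\Fib L$ is a submersion, $T_pP_f = \big({(\Fib L)}_*\big)^{-1}\big(T_{\Fib L(p)}M_f\big)$ at each $p\in P_f$, and ${(\Fib L)}_*X_p = Y_{\Fib L(p)}$ lies in $T_{\Fib L(p)}M_f$, so $X$ is tangent to $P_f$. Pairing with an arbitrary $v\in T_p(TQ\times\RR)$,
\begin{align*}
    \omega_L(X,v)
    &= \big({(\Fib L)}^*\omega_1\big)(X,v) = \omega_1\big(Y,{(\Fib L)}_*v\big) = \gamma_{H_1}\big({(\Fib L)}_*v\big) \\
    &= \big({(\Fib L)}^*\gamma_{H_1}\big)(v) = \gamma_{E_L}(v),
\end{align*}
the third equality being the Hamiltonian equation of motion applied to ${(\Fib L)}_*v\in TM_1$. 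Thus $\bar{\flat}_L(X) = \gamma_{E_L}$ along $P_f$.

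The real content is already in the lemma: the compatibilities ${(\Fib L)}^*\omega_1 = \omega_L$, ${(\Fib L)}^*\gamma_{H_1} = \gamma_{E_L}$, and above all the identity $P_i = {(\Fib L)}^{-1}(M_i)$, which is what makes projecting and lifting preserve tangency to the final constraint submanifolds. Granting these, the main obstacle is only bookkeeping: keeping straight that the Hamiltonian equation lives in $T^*M_1$ and the Lagrangian one in $T^*(TQ\times\RR)$, and using that ${(\Fib L)}_*$ is fibrewise surjective onto $TM_1$. The non-uniqueness of solutions --- which reflects $\ker\bar{\flat}_L = \CharD$ and $\ker\bar{\flat}_1$ --- causes no trouble, precisely because we verify the covector identities directly rather than comparing vector fields.
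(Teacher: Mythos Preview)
Your proposal is correct and follows the same approach as the paper, which simply states that the theorem follows ``from the commutativity of the diagram'' established in the preceding lemma without giving further details. You have spelled out explicitly the pairing computations that make this work, using exactly the ingredients the lemma provides (${(\Fib L)}^*\omega_1=\omega_L$, ${(\Fib L)}^*\gamma_{H_1}=\gamma_{E_L}$, and $P_i={(\Fib L)}^{-1}(M_i)$), so your argument is a faithful unpacking of what the paper leaves implicit.
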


    \section{The constraint algorithm and the Dirac-Jacobi brackets}\label{sec:DJ_brackets}
    The aim of this section is to develop a local version of the constraint algorithm based on the Jacobi bracket of the contact manifold $T^*Q\times \RR$, similar to the Dirac-Bergmann algorithm for the presymplectic case \cite{Dirac2001,Bergmann1955}. This bracket also has some global geometric descriptions~\cite{Leon1995,Ibort1999a}.
    It has been extended to the time-dependent case in~\cite{Chinea1994}.
      
    The bracket formalism will allow us to classify the constraints produced by the algorithm depending of weather they provide dynamical information (\emph{first class}) or not (\emph{second class}). Furthermore, we will define a modified brackets, the \emph{Dirac-Jacobi bracket} which will provide us expressions for the evolution of the observables witch are manifestly independent on the second class constraints.

    As we have explained in \cref{sec:geometry_and_dynamics}, a contact manifold $(M, \eta)$ is a particular case of a Jacobi manifold, with Jacobi structure $(\Lambda, -\Reeb)$ as in \cref{eq:contact_jacobi}. We remind that the Jacobi bracket is given by
    \begin{equation}
        \jacBr{f,g} = \Lambda{(\dd f, \dd g)} - f\Reeb(g) + g \Reeb(f),
    \end{equation}
     for $f,g \in \Cont^\infty(M)$. We recall that this brackets are not Poisson. Instead, they satisfy the following generalized Leibniz rule:
    \begin{equation}\label{eq:mod_derivation}
        \jacBr{fg,h} = f\jacBr{g,h} + g\jacBr{f,h} + f g \Reeb(h),
    \end{equation}
    for arbitrary functions $f,g,h \in \Cont^\infty(M)$.

    The evolution of an observable $f\in \Cont^{\infty}(T^* Q)$ can be written in terms of its bracket with the Hamiltonian $H$,
    \begin{equation} 
        \dot{f} = {X_H}(f) = \jacBr{H, f} - f \Reeb{(H)},
    \end{equation}
    where $H$ is an arbitrary extension of $H_1$.

     In this section we will be working with the Hamiltonian formulation of a system that is given by an almost regular Lagrangian $L:TQ\times \RR \to \RR$ as in \cref{sec:equiv_problem}.  Assume that we obtain a first constraint submanifold $M_1 = \Fib L (TQ) \subseteq T^*Q$, with a Hamiltonian function $H_1: M_1 \to \RR$. We can extend $H_1$ to $T^*Q \times \RR$ as follows:
    \begin{equation}
        H_T = H + u_a \phi^a,
    \end{equation}
    where $H$ is an arbitrary extension of $H_1$, $\phi^a$ are a set of constraints defining $M_1$ and $u_a$ are Lagrange multipliers. Hence, we can compute the evolution of an observable $f$ with respect to $H_T$:
    \begin{equation}\label{eq:evolution_Jacobi}
        \begin{aligned}
            \dot{f} &=  \jacBr{H_T, f} - f\Reeb{(H_T)} \\ &= 
            \jacBr{H, f} 
            + u_a \jacBr{\phi^a, f} 
            -  f \Reeb(H)
            - f u_a \Reeb(\phi^a) \\ & \qquad
            + \phi^a (\jacBr{u_a, f}+ u_a  \Reeb(f) -f \Reeb(u_a)) \\ & =
            \jacBr{H, f}
            -  f \Reeb(H) 
            + u_a (\jacBr{\phi^a, f} 
            -  f \Reeb(\phi^a))
            + \phi^a \Lambda(\dd u_a, \dd f) \\ & =
            (X_H + u_a X_{\phi^a} )(f) + \phi^a \Lambda(\dd u_a, \dd f),
        \end{aligned}
    \end{equation}
    where we have used the generalized Leibniz rule~\cref{eq:mod_derivation}.

    The constraint algorithm can be locally interpreted in terms of this bracket, in a similar fashion to the Dirac algorithm for the symplectic case~\cite{Gotay1978}.
    \begin{remark}
    A local version of the constraint algorithm for constrained on the extended phase $M_1 \subseteq T^*Q \times \RR$ can be given in terms of the Jacobi bracket as follows.

    First, we demand that the primary constraints should be preserved along the evolution of the system. Geometrically, this means that $X_{H_T}$ should be tangent to $M_1$, that is:
        \begin{equation}\label{eq:contrain_preserv}
           (0=  \dot{\phi}^a = X_{H_T}{(\phi^a)} =  \jacBr{H, \phi_a} + u_b \jacBr{\phi^b,\phi^a})\restr{}{M_1},
        \end{equation}
        since $\phi_b = 0$ on $M_1$. We should demand this condition for all linear combinations of the constraints. Some will be satisfied trivially, others will fix the multipliers $u_b$, and the remaining ones will be independent on the multiples $u_b$. The later take the form $f^\alpha_a  \dot{\phi}^a$, where 
        \begin{equation}
            (f^\alpha_a \jacBr{\phi^a,\phi^b} = 0)\restr{}{M_1}.
        \end{equation}
        If we let $\psi^\alpha = f^\alpha_a  \dot{\phi}^a$, then
        \begin{equation}
            (\psi^\alpha = f^\alpha_a  \jacBr{H, \phi_a} )\restr{}{M_1}.
        \end{equation}
        These new constraints define a further submanifold, $M_2$. We can now modify the Hamiltonian by adding the new constraints $H_T' = H_T + v_\alpha \psi^\alpha$ and iterate this procedure until we get not new constraints.
\end{remark}    

    Let $M_f$ be the final constraint submanifold. We say that a function $f\in\Cont^\infty(T^*Q \times \RR)$ is \emph{first class} if $\restr{\jacBr{f,\phi}}{M_f} = 0$. Denote by $\mathcal{F}\subseteq \Cont^\infty(M)$ to the set of first class functions, which is a subalgebra with respect to the Jacobi bracket since, by the Jacobi identity, if $\psi, \chi \in \mathcal{F}$ and $\phi$ is a constraint, then, along $M_f$,
    \begin{equation*}
        \jacBr{\jacBr{\psi,\chi}, \phi} =
        \jacBr{\jacBr{\psi,\phi},\chi} + \jacBr{\psi,\jacBr{\chi,\phi}} = 0.
    \end{equation*}

    The Hamiltonian $H_T$ is an example of a first class function because of the constraint preservation condition given in \cref{eq:contrain_preserv}.

    We say that a function is \emph{second class} if it is not first class.

    We will show that family of independent constraints $\phi^\alpha$ defining $M_f$ (by independent, we mean that their differentials are linearly independent) we can extract a maximal subfamily of second class constraints the matrix of their Jacobi brackets is non-singular. Modifying the rest of them by taking linear combinations, we get second class constraints that still form an independent family.
    
    Consider the matrix $(\bracks{\phi^\alpha,\phi^\beta})_{\alpha,\beta}$. Assume that it has constant rank $k$ in a neighborhood of $M_f$, that is, up to reordering, the first $k$ rows are linearly independent. Denote by $\phi^{a}$ (with latin indices) those functions and $\phi^{\bar{a}}$ (with overlined latin indices) the rest of them. We use greek indices when we want to refer to every constraint. Then the rest of the rows are linear combinations of the first $k$, that is
    \begin{equation}
        \dirBr{\phi^{\bar{a}}, \phi^\beta} = B^{\bar{a}}_a  \dirBr{\phi^a, \phi^\beta}.
    \end{equation}
    Define 
    \begin{equation}
        {\bar{\phi}}^{\bar{a}} =  {\phi}^{\bar{a}} - B^{\bar{a}}_a \phi^{a}.
    \end{equation}
    Using the generalized Leibniz rule (\cref{eq:mod_derivation}) we can check that these new constraints are first class, so $\phi^a, {\bar{\phi}}^{\bar{a}}$ is a basis of the constraints with the desired properties.

    Now let $C^{ab} = \jacBr{\phi^a,\phi^b}$ and let $C_{ab}$ denote the inverse matrix. We define the \emph{Dirac-Jacobi} bracket such that
    \begin{equation}
        \dirBr{f,g} =  \jacBr{f,g}  - \jacBr{f,\phi^a} C_{ab} \jacBr{\phi^b, g}.
    \end{equation}

    \begin{proposition}
        The Dirac-Jacobi bracket has the following properties:
        \begin{enumerate}
            \item It is a Jacobi bracket (\cref{def:jac_bra}) which satisfies the generalized Leibniz rule
            \begin{equation}\label{eq:derivation_dirac}
                \dirBr{fg,h} = f\dirBr{g,h} + g\dirBr{f,h} + f g \Reeb_{DJ}(h),
            \end{equation}
            where
            \begin{equation}
               \Reeb_{DJ} = \Reeb +  C_{ab} 
               \Reeb(\phi^b) (\sharp_{\Lambda}(\dd \phi^a) + \phi^a \Reeb).
            \end{equation}
            \item The second class constraints $\phi^a$ are Casimir functions for the Dirac-Jacobi bracket.
            \item \label{itm:first_class_dirBr} For any first class function $F$,
                  \begin{equation}
                    \begin{aligned}
                        (\dirBr{F,\cdot} &= \jacBr{F,\cdot})\restr{}{M_f}, \\
                        (\Reeb_{DJ}(F)      &= \Reeb(F)        )\restr{}{M_f}.
                    \end{aligned}
                  \end{equation}
            \item The evolution of observables is given by
            \begin{equation}\label{eq:evolution_Dirac}
                \begin{aligned}        
                    (\dot{f} &=      \dirBr{H, f} - f\Reeb_{DJ} {(H)} +
                    \bar{u}_{\bar{a}} (\dirBr{\bar{\phi}^{\bar{a}}, f} - f \Reeb_{DJ} (\bar{\phi}^{\bar{a}})) \\ & = 
                    (X_{H}+\bar{u}_{\bar{a}}X_{\bar{\phi}^{\bar{a}}})(f)
                    ){\restr{}{M_f}},
                \end{aligned}
            \end{equation}
            where $H:T^*Q \times \RR \to \RR$ is an arbitrary extension of the Hamiltonian $H_1$.
        \end{enumerate}
        We remark that the motion depends on the multipliers of the first class constraints $\bar{u}_{\bar{a}}$, but it is independent on the multipliers of the second class constraints $u_a$.
    \end{proposition}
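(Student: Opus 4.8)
The plan is to verify the four items directly from the definition $\dirBr{f,g}=\jacBr{f,g}-\jacBr{f,\phi^a}C_{ab}\jacBr{\phi^b,g}$, using only the Jacobi identity and the generalized Leibniz rule \cref{eq:mod_derivation} for the ambient bracket $\jacBr{\cdot,\cdot}$, together with the single algebraic fact $\jacBr{\phi^a,\phi^b}C_{bc}=\delta^a_c$ (which, since $(\jacBr{\phi^a,\phi^b})$ is skew and invertible, forces $(C_{ab})$ to be skew). Everything is local, on the neighbourhood of $M_f$ where the chosen second class constraints $\phi^a$ live and $(\jacBr{\phi^a,\phi^b})$ is nonsingular. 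Item (2) is immediate: for any $g$, $\dirBr{\phi^c,g}=\jacBr{\phi^c,g}-\jacBr{\phi^c,\phi^a}C_{ab}\jacBr{\phi^b,g}=\jacBr{\phi^c,g}-\delta^c_b\jacBr{\phi^b,g}=0$, so every $\phi^c$ is a Casimir of $\dirBr{\cdot,\cdot}$ as an identity, not merely on $M_f$; a similar one-line computation using $\restr{\Lambda(\dd\phi^a,\dd\phi^b)}{M_f}=\restr{\jacBr{\phi^a,\phi^b}}{M_f}$ gives $\restr{\Reeb_{DJ}(\phi^a)}{M_f}=0$. Both facts will be reused.

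For item (1), $\RR$-bilinearity is clear and antisymmetry follows from the antisymmetry of $\jacBr{\cdot,\cdot}$ and of $(C_{ab})$ after relabelling $a\leftrightarrow b$. For the generalized Leibniz rule I would expand $\jacBr{fg,h}$ and $\jacBr{fg,\phi^a}$ by \cref{eq:mod_derivation}: the $f$- and $g$-linear terms reassemble into $f\dirBr{g,h}+g\dirBr{f,h}$, and the leftover coefficient of $fg$ is the first order operator $h\mapsto\Reeb(h)-\Reeb(\phi^a)C_{ab}\jacBr{\phi^b,h}$, whose would-be zeroth order part is proportional to $\Reeb(\phi^a)C_{ab}\Reeb(\phi^b)$ and hence vanishes by skew-symmetry of $(C_{ab})$; so this operator is a genuine vector field, which one rewrites via $\lsharp$ and $\Reeb$ to identify it with $\Reeb_{DJ}$ of \cref{eq:derivation_dirac}. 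Given antisymmetry and the generalized Leibniz rule, locality is automatic — the operator $g\mapsto\dirBr{f,g}-g\Reeb_{DJ}(f)$ is a derivation, exactly as in the proof of \cref{thm:jab_bra_characterization} — so by that theorem the only remaining point in (1) is the Jacobi identity for $\dirBr{\cdot,\cdot}$.

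This last point is the step I expect to be the main obstacle. I would prove it just as in the classical computation for the Dirac bracket, but carrying the extra $\Reeb$-terms throughout: expand $\dirBr{\dirBr{f,g},h}$ by substituting $\dirBr{f,g}$ into both slots of the outer bracket and distributing over the triple product $\jacBr{f,\phi^c}\,C_{cd}\,\jacBr{\phi^d,g}$ with \cref{eq:mod_derivation}; sum cyclically in $f,g,h$; the part involving only $\jacBr{\cdot,\cdot}$ cancels by the Jacobi identity for $\jacBr{\cdot,\cdot}$, and the remainder is reduced with the derivative-of-the-inverse identity for $\jacBr{h,C_{cd}}$ (which carries a $\Reeb(h)$-correction, since $\jacBr{h,\cdot}$ is a derivation only up to a $\Reeb$-term) together with the Jacobi identity applied to $\jacBr{h,\jacBr{\phi^e,\phi^f}}$, after which all surviving terms — including those generated by the $fg\,\Reeb(h)$ summand of \cref{eq:mod_derivation}, which have no counterpart in the Poisson case — cancel in cyclic pairs. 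Tracking these last, genuinely ``Jacobi'' terms is the only delicate book-keeping.

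Items (3) and (4) are then short. For (3): by definition of \emph{first class}, $\restr{\jacBr{F,\phi^a}}{M_f}=0$ for the second class constraints $\phi^a$, so the correction term in $\dirBr{F,g}=\jacBr{F,g}-\jacBr{F,\phi^a}C_{ab}\jacBr{\phi^b,g}$ vanishes on $M_f$, whence $\restr{\dirBr{F,g}}{M_f}=\restr{\jacBr{F,g}}{M_f}$ for all $g$; likewise every correction in $\Reeb_{DJ}(F)$ carries a factor $\jacBr{\phi^a,F}$ or $\phi^a$ (any residual $\Reeb(\phi^a)\Reeb(\phi^b)$-term dying by skew-symmetry), so $\restr{\Reeb_{DJ}(F)}{M_f}=\restr{\Reeb(F)}{M_f}$; in particular, for first class $F$, $\restr{X_F}{M_f}$ acts as $f\mapsto\restr{(\dirBr{F,f}-f\Reeb_{DJ}(F))}{M_f}$. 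For (4): on $M_f$ the equations of motion read $\restr{\dot f}{M_f}=\restr{(X_{H_T}+\bar u_{\bar a}X_{\bar\phi^{\bar a}})(f)}{M_f}$, where $H_T=H+u_a\phi^a$ is a first class extension of $H_1$ (this is the preservation condition \cref{eq:contrain_preserv}) and $\bar u_{\bar a}$ are the free multipliers of the first class constraints, all terms proportional to a constraint — the $\Lambda(\dd u,\dd f)$ term in \cref{eq:evolution_Jacobi} included — having been dropped. Applying (3) to the first class functions $H_T$ and $\bar\phi^{\bar a}$ turns each $\restr{X_F}{M_f}$ into $f\mapsto\restr{(\dirBr{F,f}-f\Reeb_{DJ}(F))}{M_f}$; and since $\dirBr{\phi^a,g}=0$ for all $g$ by (2) while $\restr{\Reeb_{DJ}(\phi^a)}{M_f}=0$, the summand $u_a\phi^a$ inside $H_T$ contributes nothing to $\restr{\dirBr{H_T,f}}{M_f}$ or to $\restr{\Reeb_{DJ}(H_T)}{M_f}$, so these equal $\restr{\dirBr{H,f}}{M_f}$ and $\restr{\Reeb_{DJ}(H)}{M_f}$ for the arbitrary extension $H$. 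This yields both displayed equalities of (4) and makes the independence of the motion from the second class multipliers $u_a$ manifest.
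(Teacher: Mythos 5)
Your proposal is correct and follows essentially the same route as the paper: direct verification from the definition, with the Jacobi identity deferred to the Dirac-style cancellation computation (which the paper likewise only cites, from Dirac's symplectic case), the constraints $\phi^a$ shown to be Casimirs by the one-line $\delta$-computation, item (3) by noting the correction terms are multiples of $\jacBr{F,\phi^a}$ or $\phi^a$, and item (4) by combining \cref{eq:evolution_Jacobi} with item (3) and the Casimir property. The only cosmetic difference is that you identify $\Reeb_{DJ}$ by expanding the generalized Leibniz rule directly, whereas the paper first invokes \cref{thm:jab_bra_characterization} and then computes $\Reeb_{DJ}(f)=\dirBr{f,1}$; both yield the stated formula (up to the skew-symmetry of $C_{ab}$ killing the zeroth-order term, exactly as you note).
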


    \begin{proof}
        It is clear that the brackets are bilinear and antisymmetric. The Jacobi identity follows from a computation as the one performed by Dirac in~\cite{Dirac1950} for the symplectic case. Moreover, the locality of the Dirac-Jacobi bracket follows from the locality of the bracket associated to the natural Jacobi structure of $T^*Q \times \RR$. Therefore, by \cref{thm:jab_bra_characterization}, there is another Jacobi structure $(\Lambda_{DJ}, \Reeb_{DJ})$ on $T^*Q \times \RR$ such that
        \begin{equation}
            \dirBr{f,g} = \Lambda_{DJ} (\dd f, \dd g)  - f\Reeb_{DJ}(g) +g \Reeb_{DJ} (f).
        \end{equation}
        The vector field $\Reeb_{DJ}$ can be computed by taking into account that
        \begin{align*}
            \Reeb_{DJ}(f) &= \dirBr{f,1} \\& =  \jacBr{f,1} - 
            \jacBr{(f,\phi^a)} C_{ab} \jacBr{\phi^b, 1} \\ &= 
            \Reeb(f) -  C_{ab} (\Lambda(\dd f, \dd \phi^a) - f \Reeb(\phi^a) + \phi^a \Reeb(f)) 
            \Reeb(\phi^b) \\ &=  
            (\Reeb +  C_{ab} \Reeb(\phi^b) (\sharp_{\Lambda}(\dd \phi^a) + \phi^a \Reeb))(f) - f C_{ab} \Reeb(\phi^a)\Reeb(\phi^b),
        \end{align*}
        where $-C_{ab} \Reeb(\phi^a)\Reeb(\phi^b) =  \dirBr{1,1} = 0$, by the antisymmetry of the bracket.

        The fact that $\phi^a$ are Casimir functions follows from a straightforward calculation from de definition of the brackets.

        For proving the statement (\ref{itm:first_class_dirBr}), if $F$ is first class it is clear that both brackets coincide along $M_f$ since they differ by multiples of the Jacobi brackets of $F$ with constraints. For the second part, notice that, along ${M_f}$,
        \begin{equation}
            \Reeb{(F)} = \jacBr{1,F} = \dirBr{1,F} = \Reeb_{DJ}(F).
        \end{equation}

        The last claim follows from the combination of the formula for the evolution of observables \cref{eq:evolution_Jacobi} and  Item~(\ref{itm:first_class_dirBr}). Since the second class constraints $\phi^a$ are Casimir functions, their brackets, including $\Reeb_{DJ}(\phi^a)= \dirBr{1,\phi^a}$, will vanish, so the terms with the corresponding multipliers $u^a$ will not affect the evolution of the observable.
    \end{proof}

    \section{The second order problem}\label{sec:second_order}
    Using the theory developed on the previous section, given an almost regular Lagrangian $L:TQ \times \RR \to \RR$ we are able to develop the constraint algorithm on the Lagrangian side, as well as on the Hamiltonian counterpart starting with the image $M_1=\Fib L(\RR \times TQ)$, the precontact form $\eta_1$ and the restricted Hamiltonian $H_1$. The following diagram summarizes the situation: 
    \begin{equation}
        \begin{tikzcd}
            TQ \times \RR \arrow[r, "\Fib L"] \arrow[rd, "\Fib L_1"] & T^*Q \times \RR \\
            P_f \arrow[u, hook] \arrow[dr, "\Fib L_f"] & M_1 \arrow[u, hook] \\
             & M_f \arrow[u, hook]
            \end{tikzcd}
    \end{equation}
    where $P_f$ and $M_f$ are the final constraint submanifolds on the Lagrangian and Hamiltonian sides, which are the maximal submanifolds in which solutions to the equations of motion
    \begin{subequations}
        \begin{align}
            \bar{\flat}_L{(X)} &= \gamma_{E_L},\\
            \bar{\flat} (Y) &= \gamma_{H_1}
        \end{align}
    \end{subequations}
    exist and are tangent to the respective submanifolds. Both submanifolds are connected by the Legendre transformation $\Fib L_f : P_f \to M_f$, which is a surjective submersion.
   
    \begin{remark}
        Notice that in order to get a solution $X$ on the Lagrangian side we can start with a solution $Y$ and use that $\Fib L_f : P_f \to M_f$ is a fibration to construct $X$ such that ${(\Fib L)}_* X = Y$.
    \end{remark}

    As we know,  if the Lagrangian is regular, the Euler-Lagrange are of second order. That is, the solution $X$ is a so-called a \emph{second order differential equation (SODE)} or a \emph{semispray}~\cite{deLeon2011}. This means that, in bundle coordinates $(q^i, \dot{q}^i, z)$, it has the form
    \begin{equation}\label{eq:SODE}
        X =   \dot{q}^i \pdv{}{q^i} + b^i \pdv{}{\dot{q}^i} + c \pdv{}{z}.
    \end{equation}
    This condition can be written in algebraic terms as follows
    \begin{equation}
        S(X)=\Delta.
    \end{equation}
    
    However, this is not the case for singular Lagrangians. We are interested on finding a submanifold $S$ of $P_f$ and a solution $X$ tangent to $S$ that \emph{satisfies the second order condition along} $S$. That is $S(X)_p=\Delta_p$ at every $p \in S$. This is the so-called \emph{second order problem}, which was studied for presymplectic Lagrangian systems in~\cite{Gotay1980a} and in~\cite{Chinea1994} for time dependent Lagrangians. 

    The connection with Herglotz's equations and the related variational problem is apparent from the next result, which parallels \cite[Theorem~3.5.17]{Abraham1978} in the symplectic case.

    \begin{proposition}
        Let $X$ be a vector field on $TQ\times \RR$ that verifies the second order equation condition along a submanifold $S \subseteq TQ \times \RR$, and let $L:TQ \times \RR \to \RR$ be a Lagrangian. Then, along $S$, $X$ solves the equations of motion for $L$ if and only if it solves Hertglotz's equations. 
    \end{proposition}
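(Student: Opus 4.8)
The statement is local, so I would pass to bundle coordinates $(q^i,\dot q^i,z)$ on $TQ\times\RR$ and compare the two sides of $\bar{\flat}_L(X)=\gamma_{E_L}$ there. Because $S(X)=\Delta$ along $S$, the second order condition forces
\begin{equation*}
  \restr{X}{S} = \dot q^i\,\pdv{}{q^i} + b^i\,\pdv{}{\dot q^i} + c\,\pdv{}{z}
\end{equation*}
for some functions $b^i,c$ on $S$, and hence an integral curve of $X$ through a point of $S$ has, at that point, $\dot q^i=\dv{q^i}{t}$, $\ddot q^i=b^i$ and $\dot z=c$. So the whole proposition reduces to the algebraic claim: \emph{under the second order condition, $\bar{\flat}_L(X)=\gamma_{E_L}$ along $S$ is equivalent to the pair of conditions $c=L$ and \cref{clagrangian3} (with the components $b^i$ of $X$) holding along $S$.} Granting that, substituting into an integral curve $\chi$ of $X$ meeting $S$ turns $c=L$ into the $\RR$-equation $\dot z=L$ of the Herglotz problem \cref{contact_var_ode} and turns \cref{clagrangian3} into Herglotz's equation \cref{clagrangian4}, so $\chi=(\xi,\dot\xi,\mathcal{Z}(\xi))$ with $\xi$ a solution of Herglotz's equations; conversely, reading $\dot z=L$ and \cref{clagrangian4} back at a point of $S$ recovers $c=L$ and \cref{clagrangian3}. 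This is the precontact analogue of \cite[Theorem~3.5.17]{Abraham1978}.

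To prove the algebraic claim I would not expand $\bar{\flat}_L(X)-\gamma_{E_L}$ in full but test it against a frame. Since $\eta_L(\Reeb)=1$, this one-form vanishes iff it is annihilated by a Reeb vector field $\Reeb$ and by all of $\ker\eta_L$, which is spanned by $\pdv{}{\dot q^i}$ and by $e_i:=\pdv{}{q^i}+\pdv{L}{\dot q^i}\pdv{}{z}$. Pairing with $\Reeb$ and using $\contr{\Reeb}\dd\eta_L=0$ gives $\bar{\flat}_L(X)(\Reeb)=\eta_L(X)$ and $\gamma_{E_L}(\Reeb)=-E_L$, so this component of the equation is precisely $\eta_L(X)=-E_L$; writing it out with $\eta_L=\dd z-\lambda_L$ and $E_L=\Delta(L)-L$ yields $c=L$, and this conclusion is manifestly independent of the choice of Reeb vector field. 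For $Y\in\ker\eta_L$ one has $\bar{\flat}_L(X)(Y)=\dd\eta_L(X,Y)$ and $\gamma_{E_L}(Y)=Y(E_L)$, again Reeb-independent. Contracting with $\pdv{}{\dot q^i}$ gives $\dd\eta_L\!\left(X,\pdv{}{\dot q^i}\right)=\dot q^k\frac{\partial^2 L}{\partial\dot q^i\partial\dot q^k}=\pdv{}{\dot q^i}(E_L)$, an identity that holds automatically once the second order condition is used, so no constraint is produced in these directions. Contracting with $e_i$ and using $c=L$, the terms carrying $\frac{\partial^2 L}{\partial q^i\partial\dot q^k}$ and the $\pdv{L}{\dot q^i}\frac{\partial^2 L}{\partial z\partial\dot q^k}$ pieces cancel between the two sides, and what survives is exactly \cref{clagrangian3}. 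Combining this with the curve translation of the first paragraph gives the proposition.

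The one subtlety — and the point where the second order hypothesis is genuinely needed, rather than just $\bar{\flat}_L(X)=\gamma_{E_L}$ — is that for a singular $L$ the $\pdv{}{\dot q^i}$-components of the equation of motion only say that the $\pdv{}{q^i}$-components of $X$ agree with $\dot q^i$ modulo $\ker\bigl(\tfrac{\partial^2 L}{\partial\dot q^i\partial\dot q^j}\bigr)$, not on the nose; imposing $S(X)=\Delta$ supplies the missing equalities, and it is precisely what makes the $\Reeb$-component collapse to $\dot z=L$ and the $e_i$-component become Herglotz's equation. I expect the bookkeeping in the $e_i$-contraction, tracking the cancellation of the mixed second derivatives of $L$, to be the only real computational obstacle; everything else is structural.
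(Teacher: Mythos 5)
Your proposal is correct and follows essentially the same route as the paper: use the SODE condition to fix the coordinate form of $X$, extract $\eta_L(X)=-E_L$ (hence $c=L$) from the Reeb component, recover \cref{clagrangian3} from the remaining components, and pass to integral curves to obtain Herglotz's equations, with the converse by reversing. The only difference is presentational: where the paper invokes ``the same computation as in the regular case,'' you carry it out explicitly by testing $\bar{\flat}_L(X)-\gamma_{E_L}$ against the frame $\Reeb$, $\pdv{}{\dot q^i}$, $e_i$, which is a sound and complete way to do that step.
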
 
    \begin{proof}
        Indeed, if $X$ satisfies the second order equation condition along $S$, then, along $S$
            \begin{equation}
                X =   \dot{q}^i \pdv{}{q^i} + b^i \pdv{}{\dot{q}^i} + c \pdv{}{z}.
            \end{equation}
        
        If it solves the equations of motion, necessarily $\eta_L(X) = -E_L$. Substituting the coordinate expression of $X$, we find out that $\restr{(c=L)}{S}$. Hence, we can perform the same computation than in the regular case (\cref{eq:lagrangian_solution}). That is, along $S$, the coefficients $b^i$ must satisfy the equation
        \begin{equation}
            {b}^i  \frac{\partial}{\partial \dot{q}^i}\parens*{\frac{\partial L}{\partial \dot{q}^j}} 
            + \dot{q}^i  \frac{\partial}{\partial q^i}\parens*{\frac{\partial L}{\partial \dot{q}^j}} 
            + {L  \frac{\partial}{\partial z}\parens*{\frac{\partial L}{\partial \dot{q}^j}}} - \frac{\partial L}{\partial q^j} =
            \frac{\partial L}{\partial \dot{q}^j} \frac{\partial L}{\partial z},
        \end{equation}
        Hence, an integral curve $(q^i, \dot{q}^i,z)$   satisfies Herglotz's equation along $S$:
        \begin{equation}
            \frac{\dd}{\dd t} \parens*{\frac{\partial L}{\partial \dot{q}^i}} - \frac{\partial L}{\partial q^i} =
            \frac{\partial L}{\partial \dot{q}^i} \frac{\partial L}{\partial z}.
        \end{equation}
            
        The converse follows by reversing the computation.
    \end{proof}

    In this section we will be construct a submanifold $S \subseteq P_f$ along which the equations of motion have a unique solution which is a SODE. The first observation is that $\ker {(\Fib L_f)}_* $ is an involutive distribution. Indeed, it is the vertical distribution of the fibration $\Fib L_f: P_f \to M_f$. By the construction of the constrain submanifolds, we can see that for $x\in P_f$ $\ker {(\Fib L)}_* (x) =\ker {(\Fib L_f)}_*(x) \subseteq T_x P_f$.

    Let $X$ be a vector field on $TQ \times \RR$. We define the \emph{deviation} of $X$ as
    \begin{equation}
        X^* = S(X) - \Delta.
    \end{equation}
    We note that $X^*=0$ if and only if $X$ is a second order equation. The next step in the construction of the solution to the second order problem is the following. 
    \begin{lemma}\label{thm:defectCharD}
        If $X$ is a solution of the equations of motion along $P_f$, then $X^*\in \ker {(\Fib L_f)}_*$.
    \end{lemma}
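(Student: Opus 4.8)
The plan is to reduce the statement to two elementary facts about the deviation $X^{*}=S(X)-\Delta$: that $X^{*}$ is a section of $\im S$, and that $\contr{X^{*}}\dd\eta_{L}=0$. Granting both, we get $X^{*}\in\ker\dd\eta_{L}\cap\im S$, which by \cref{FLkernel} is exactly $\ker{(\Fib L)}_{*}$; and, as recalled just above the lemma, $\ker{(\Fib L)}_{*}(x)=\ker{(\Fib L_{f})}_{*}(x)$ for every $x\in P_{f}$, so $X^{*}\in\ker{(\Fib L_{f})}_{*}$ along $P_{f}$, as claimed.

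The first fact is immediate in bundle coordinates $(q^{i},\dot q^{i},z)$: if $X=A^{i}\pdv{}{q^{i}}+B^{i}\pdv{}{\dot q^{i}}+C\pdv{}{z}$, then $S(X)=A^{i}\pdv{}{\dot q^{i}}$ while $\Delta=\dot q^{i}\pdv{}{\dot q^{i}}$, so $X^{*}=(A^{i}-\dot q^{i})\pdv{}{\dot q^{i}}$ is vertical. For the second fact I would extract the ``vertical part'' of the equation of motion $\bar{\flat}_{L}(X)=\gamma_{E_{L}}$ by comparing the $\dd\dot q^{i}$-components of both sides (here $W_{ij}=\partial^{2}L/\partial\dot q^{i}\partial\dot q^{j}$). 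On the right, neither $\eta_{L}$ nor the $\dd q^{j}$ and $\dd z$ terms of $\dd E_{L}$ contribute, so this component equals $\pdv{E_{L}}{\dot q^{i}}=\dot q^{j}W_{ij}$; on the left, $\eta_{L}(X)\eta_{L}$ contributes nothing and the only part of $\dd\eta_{L}=-\dd\lambda_{L}$ carrying a $\dd\dot q$ is $-W_{ij}\,\dd\dot q^{j}\wedge\dd q^{i}$, whose contraction with $X$ contributes $A^{i}W_{ij}$ to the $\dd\dot q^{j}$-component. Hence $(A^{i}-\dot q^{i})W_{ij}=0$ for all $j$, and since $X^{*}$ is vertical this forces $\contr{X^{*}}\dd\eta_{L}=-(A^{j}-\dot q^{j})W_{ij}\,\dd q^{i}=0$.

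Invariantly, the previous computation is just the observation that applying the degree-zero derivation $\contr{S}$ associated with the canonical endomorphism to $\bar{\flat}_{L}(X)=\gamma_{E_{L}}$, and using the standard identities $\contr{S}\dd\eta_{L}=0$, $\contr{S}\eta_{L}=0$, $\contr{S}\contr{X}=\contr{X}\contr{S}-\contr{S(X)}$ and $\contr{S}\dd E_{L}=S^{*}\dd E_{L}=-\contr{\Delta}\dd\eta_{L}$ (see~\cite{Gotay1979}), yields $\contr{S(X)}\dd\eta_{L}=\contr{\Delta}\dd\eta_{L}$, i.e.\ $\contr{X^{*}}\dd\eta_{L}=0$. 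I do not expect a genuine obstacle here beyond the bookkeeping; the one point to get right is that the equation of motion pins down the vertical defect $X^{*}$ only through the Hessian $W_{ij}$ — which is precisely the content of $\contr{S}\dd\eta_{L}=0$ — so that $\contr{X^{*}}\dd\eta_{L}$ must vanish even though $X$ itself need not be a SODE. This is the precontact counterpart of the presymplectic computation in~\cite{Gotay1980a}.
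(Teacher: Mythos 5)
Your proof is correct and follows essentially the same route as the paper: the key step in both is to contract the equation of motion $\bar{\flat}_L(X)=\gamma_{E_L}$ against the vertical directions, which yields $(a^i-\dot q^i)\,\partial^2L/\partial\dot q^i\partial\dot q^j=0$, i.e.\ that the vertical deviation $X^*$ is annihilated by the Hessian. The only cosmetic difference is that you conclude via the characterization $\ker{(\Fib L)}_*=\ker\dd\eta_L\cap\im S$ of \cref{FLkernel}, whereas the paper pushes $X^*$ forward by $\Fib L$ directly in coordinates; these amount to the same computation.
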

    \begin{proof}
         Assume that $X$ is written in bundle coordinates $(q^i, \dot{q}^i, z)$ on $TQ \times \RR$ by
         \begin{equation} 
            X =   a^i \pdv{}{q^i} + b^i \pdv{}{\dot{q}^i} + c \pdv{}{z}.
        \end{equation}
        then
        \begin{equation}
            X^* = S(X) - \Delta = (a^i - \dot{q}^i)\pdv{}{\dot{q}^i}.
        \end{equation}

        If we contract both sides of the equation of motion $\bar{\flat}_L{(X)} = \gamma_H$ by $\pdv{}{\dot{q}^i}$ we get
        \begin{equation}
            (a^i-\dot{q}^i) \frac{\partial^2 L}{\partial \dot{q}^i \partial \dot{q}^j} = 0.
        \end{equation}
        Next, we compute
        \begin{equation}\label{eq:defectCharD_proof}
            \Fib L_*(X^*)= (a^j-\dot{q}^j) \frac{\partial^2 L}{\partial \dot{q}^i \partial \dot{q}^j} = 0.
        \end{equation} 

        Hence $X^*$ is tangent to the leaves of the fibration determined by $\ker {(\Fib L_f)}_*$, or, in other words to the fibers of the fibration $\Fib L_f :P_f \to M_f$.
    \end{proof}

    Next we will construct the submanifold $S$.
    Fix a point $y \in M_f$ and let $x$ be an arbitrary point on the leaf over $y$, say $\Fib L_f (x) = y$. Assume that $x=(q^i_0,\dot{q}^i_0,z_0)$ in bundle coordinates.

    Notice that $X$ is projectable, hence along a leaf it can only vary from point to point in a direction tangent to $\ker (\Fib L)_*$. Since  $\ker  (\Fib L)_* \subseteq \im S = \gen{\set{\pdv{}{\dot{q}^i}}_i}$, this implies that $a^i$ and $c$ are constant functions along the leafs and only $b^i$ might change.
    
        Consider the vector field
        \begin{equation}
            -X^* = (\dot{q}^i-a^i)\pdv{}{\dot{q}^i},
        \end{equation}
        and compute the integral curve of $-X^*$ passing through $x$, say
        \begin{equation}
            \sigma(t) = (q^i(t), \dot{q}^i(t),z(t)).
        \end{equation}
        Therefore
        \begin{equation}
            \sigma(0) = (q^i(0), \dot{q}^i(0),z(0))=
            (q^i_0, \dot{q}^i_0,z_0).
        \end{equation}
        This integral curve has to satisfy the system of differential equations:
        \begin{equation}
            \dv{\dot{q}^i}{t} = \dot{q}^i(t)- a^i.
        \end{equation}
        Consequently, the solution passing through $x$ is just
        \begin{equation}
            \sigma(t) = (q^i_0, a^i +\exp(t)(q_0^i-a^i),z_0),
        \end{equation}
        which is entirely contained on the fiber over $y$. In addition, the limit point as $t \to -\infty$,
        \begin{equation}
            \tilde{x} = \lim_{t \to -\infty} \sigma(t),
        \end{equation}
        is also on the same fiber, since fibers are closed. A direct computation shows that
        \begin{equation}
            \tilde{x} = (q_0^i,a^i,z_0),
        \end{equation}
        and that
        \begin{equation}
            S(X)_{\tilde{x}} = \Delta_{\tilde{x}}.
        \end{equation}
      Summarizing, we have constructed a smooth section $\alpha:M_f \to P_f$ of the fibration $\Fib L_f: P_f \to M_f$, by taking $\alpha(y) = \tilde{x}$ for some $x$ on the fiber over $y$ (notice that $\tilde{x}$ only depends on $\Fib L(x)$). By taking $S=\alpha(M_f)$ we have the following.
    
    \begin{theorem}[Second order differential equation]
        Let $L: TQ \times \RR \to \RR$ be an almost regular Lagrangian and let $P_f$ be the final constraint embedded submanifold. Then, there exists a submanifold $S\subseteq P_f$ such that the equations of motion have a unique solution $X\in \VecFields(TM \times \RR)$ satisfying the SODE condition. That is, along $S$,
        \begin{equation}
            \bar{\flat}_L (X)= \gamma_{E_L}, \quad S(X)= \Delta.
        \end{equation}
    \end{theorem}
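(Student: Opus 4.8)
The plan is to package the construction carried out immediately before the statement. First I would recall its output: a smooth section $\alpha\colon M_f\to P_f$ of the surjective submersion $\Fib L_f\colon P_f\to M_f$, together with the submanifold $S=\alpha(M_f)$. Being the image of a section of a fibration, $S$ is an embedded submanifold of $P_f$, and at every point $s=\alpha(y)$ one has the transversal splitting $T_sP_f=T_sS\oplus\ker(\Fib L_f)_*|_s$, with $(\Fib L_f)_*$ restricting to an isomorphism of $T_sS$ onto $T_yM_f$. This splitting is the one fact about $S$ that the rest of the argument will use.

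For existence I would start from a $\Fib L$-projectable solution $X_0$ of $\bar{\flat}_L(X_0)=\gamma_{E_L}$ tangent to $P_f$; such an $X_0$ exists because $P_f$ is the final constraint submanifold and, by the Equivalence Theorem (\cref{thm:lagrange_hamilton_equiv}), one may pull back along $\Fib L_f$ a solution of $\bar{\flat}_1(Y)=\gamma_{H_1}$ living on $M_f$. By the explicit description of $\alpha$ (the limit as $t\to-\infty$ of the integral curve of $-X_0^*$ inside a fibre) one has $S(X_0)_s=\Delta_s$ for every $s\in S$, where $X_0^*=S(X_0)-\Delta$. Now, in the splitting above, write $(X_0)_s=u_s+v_s$ with $u_s\in T_sS$ and $v_s\in\ker(\Fib L_f)_*|_s$, and let $X$ be any vector field on $TQ\times\RR$ whose restriction to $S$ is the section $s\mapsto u_s=(X_0)_s-v_s$. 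By \cref{FLkernel}, $\ker(\Fib L_f)_*=\CharD\cap\im S$ on $P_f$, so $v_s\in\CharD_s=\ker\bar{\flat}_L$ and $v_s\in\im S_s\subseteq\ker S$; hence, along $S$, $\bar{\flat}_L(X)=\bar{\flat}_L(X_0)=\gamma_{E_L}$ and $S(X)=S(X_0)=\Delta$, while $X$ is tangent to $S$ by construction. (With a little more care the correction can be chosen to be a section of $\CharD\cap TP_f$, so that $X$ in fact solves the equations of motion on all of $P_f$.)

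For uniqueness I would suppose $X'$ is tangent to $S$ and also satisfies $\bar{\flat}_L(X')=\gamma_{E_L}$ and $S(X')=\Delta$ along $S$, and set $W=X-X'$. Along $S$ one gets $\bar{\flat}_L(W)=0$, i.e.\ $W\in\CharD$, together with $S(W)=0$. In bundle coordinates $S(W)=0$ means $W$ has no $\partial/\partial q^i$ component, so $\eta_L(W)$ equals its $\partial/\partial z$ component; but $W\in\CharD\subseteq\ker\eta_L$ forces this component to vanish, whence $W\in\im S$ and therefore $W\in\CharD\cap\im S=\ker(\Fib L_f)_*$ along $S$. Since $W$ is also tangent to $S$ and $T_sS\cap\ker(\Fib L_f)_*|_s=0$, I conclude $W=0$ along $S$, so $X|_S$ is uniquely determined by the stated properties.

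The genuinely delicate step is not this final assembly but the construction that precedes it: that the integral curve of $-X_0^*$ inside a fibre of $\Fib L_f$ converges as $t\to-\infty$ to a point $\tilde x$ which still lies in $P_f$ and depends only on $\Fib L(x)$, so that $\alpha$ is well defined and smooth. This is where almost regularity is essential — one uses that $\Fib L$ is a submersion onto its image with connected (hence closed in $P_f$) fibres, together with \cref{thm:defectCharD} (so that $X_0^*$ is vertical for the fibration) and the $\Fib L$-projectability of $X_0$ (so that its $\partial/\partial q^i$ and $\partial/\partial z$ components are constant along fibres). Once that is granted, the argument above is pure linear algebra in the transversal splitting plus the identity $\CharD\cap\im S=\ker(\Fib L)_*$ of \cref{FLkernel}.
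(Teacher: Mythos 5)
Your proposal is correct, and its existence half is essentially the paper's own argument: the paper's proof of this theorem \emph{is} the construction, immediately preceding the statement, of the section $\alpha:M_f\to P_f$ and of $S=\alpha(M_f)$ (via \cref{thm:defectCharD} and the flow of $-X^*$ inside the fibres), together with the observation that the chosen $\Fib L$-projectable solution satisfies $S(X)_{\tilde x}=\Delta_{\tilde x}$ at the points $\tilde x=\alpha(y)$; you take that construction as given, so nothing new happens there. (One small slip in your commentary: ``connected, hence closed'' is not a valid implication; closedness of the fibres holds simply because they are preimages of points under the continuous map $\Fib L_f$.) Where you genuinely go beyond the paper is in the final assembly. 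The paper stops with a solution tangent to $P_f$ satisfying the SODE condition along $S$; it neither arranges tangency to $S$ nor gives any uniqueness argument. You use the splitting $T_sP_f=T_sS\oplus\ker(\Fib L_f)_*|_s$ to replace $X_0$ by an $S$-tangent solution — legitimate, since the correction lies in $\ker(\Fib L_f)_*=\CharD\cap\im S$ (\cref{FLkernel}), which is annihilated both by $\bar{\flat}_L$ and by $S$ — and you then prove uniqueness among solutions tangent to $S$. That added tangency hypothesis is in fact necessary: as your own computation shows, the two displayed conditions alone determine $X$ along $S$ only up to sections of $\ker(\Fib L_f)_*$, which is nonzero for a genuinely singular $L$, so the uniqueness assertion is only true in your reading (uniqueness of $X|_S$ as an $S$-tangent vector field, the Gotay--Nester formulation). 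Your treatment is thus the same construction as the paper's for existence, but sharper and more complete on the uniqueness side, where the paper asserts without proof.
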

    
    \section{Examples}\label{sec:examples}

    \subsection*{Example 1: Cawley's Lagrangian}\label{ssec:ex1}
    The Lagrangian considered by Cawley~\cite{Cawley1979} can be modified by adding a linear dissipative term $\gamma z$, where $\gamma$ is a real number. Let $Q= \RR^3$, $P_1 = TQ \times \RR $ and consider the Lagrangian function $L: P_1 \to \RR$ such that
    \begin{equation}
        L(q^1, q^2, q^3,\dot{q}^1, \dot{q}^2, \dot{q}^3, z) =  \frac{m}{2}  {(\dot{q}^{1} + \dot{q}^{2})}^{2} + \frac{\mu}{2} {(\dot{q}^3)}^2 + V(q^1, q^2, q^3)+ \gamma z,
    \end{equation}
    for some potential function $V$ and some real nonzero constants $m,\mu$. This Lagrangian induces the following precontact structure on $P_1$
    \begin{equation}
        \eta_L = \dd z - m(\dot{q}^{1} + \dot{q}^{2} ) (\dd q^{1} + \dd q^{2}) - \mu \dot{q}^3 \dd q^3
    \end{equation}
    \begin{equation}
        \dd \eta_L = m(\dd q^{1} + \dd q^{2}) \wedge (\dd \dot{q}^{1} + \dd \dot{q}^{2}) + \mu (\dd q^3 \wedge \dd \dot{q}^3),
    \end{equation}
    One can check that $\eta_L \wedge (\dd \eta_L)^2$ is nowhere zero and $\eta_L \wedge (\dd \eta_L)^3=0$, hence $(P_1,\eta_L)$ is a precontact manifold of class $5$, with the corresponding energy function $E_L = \Delta(L) -L$ given by 
    \begin{equation}
        E_L =  \frac{m}{2}  {(\dot{q}^{1} + \dot{q}^{2})}^{2} + \frac{\mu}{2} {(\dot{q}^3)}^2 - V(q^1, q^2, q^3) - \gamma z.
    \end{equation}
    We now apply the constraint algorithm to the precontact Hamiltonian system $(P_1,\eta_L,E_L)$, choosing the following Reeb vector field:
    \begin{equation}
        \Reeb = \pdv{}{z}.
    \end{equation}

    In order to compute the constraints, we find the complement of the tangent bundle of $P_1$,
    \begin{equation}
        \orth{TP_1} = \ker \eta_L \cap \ker \dd \eta_L= \gen*{\pdv{}{q^1}-\pdv{}{q^2}, \pdv{}{\dot{q}^1}-\pdv{}{\dot{q}^2}}.
    \end{equation}

    By imposing $\gamma_{E_L}(X) = 0$ for $X \in \orth{TP_1}$, we get the following constraint,
    \begin{equation}
        \phi^1 = - \pdv{V}{q^1} + \pdv{V}{q^2},
    \end{equation}
    which defines the submanifold $P_2 = \set{p \in M \mid \phi^1(p) = 0}$. Its tangent space is given by
    \begin{equation}
        TP_2 = \gen*{\pdv{}{\dot{q}^1}, \pdv{}{\dot{q}^2}, \pdv{}{\dot{q}^3}, \pdv{}{z}, \pdv{\phi^1}{q^2} \pdv{}{q^1} - \pdv{\phi^1}{q^1} \pdv{}{q^2}, \pdv{\phi^1}{q^3} \pdv{}{q^1} - \pdv{\phi^1}{q^1} \pdv{}{q^3}}.
    \end{equation}

    The complement is given by
    \begin{equation}
        \orth{TP_2} = \gen*{\pdv{}{q^1}-\pdv{}{q^2}, \pdv{}{\dot{q}^1}-\pdv{}{\dot{q}^2}}.
    \end{equation}
    Demanding  $\gamma_{E_L}(X) = 0$ for $X \in \orth{TP_2}$ produces no new constraints, hence $P_2=P_3 = P_f$ and the algorithm ends.

    Notice that the $\Reeb$ vector field is already tangent to the submanifold, so we would ge the same result by using the modified version of the algorithm with imposes the tangency of $\Reeb$.

    \subsubsection*{Hamiltonian formulation and the Legendre transformation}
    For this Lagrangian system, the Legendre transformation is given by $\Fib L: TQ \times \RR \to T^*Q \times \RR$,
    \begin{equation}
        \Fib L(q^1, q^2, q^3,\dot{q}^1, \dot{q}^2, \dot{q}^3, z) = 
        (q^1, q^2, q^3, m(\dot{q}^1 + \dot{q}^2), m(\dot{q}^1 + \dot{q}^2), \mu \dot{q}^3, z)
    \end{equation}
    We obtain that 
    \begin{equation}
        \ker {(\Fib L)}_* = \bracks*{\pdv{}{\dot{q}^1}-\pdv{}{\dot{q}^2}},
    \end{equation} 
    hence it is a submersion onto its image and its fibers are connected, so the Lagrangian system is almost regular. By the Equivalence theorem (\cref{thm:lagrange_hamilton_equiv}), there is a Hamiltonian formulation of the problem. The first constraint submanifold is given by $M_1 = \Fib L(P)$ and can be described by the following constraint function
    \begin{equation}
        \psi^1 = p_1 - p_2.
    \end{equation}
    The unique Hamiltonian function $H_1: M_1 \to \RR$ such that $H_1 \comp \Fib L = E_L$ is given by
    \begin{equation}
        H_1 = \frac{1}{2m} {p_1}^2 + \frac{1}{2\mu} {p_3}^2 - V(q_1,q_2,q_3) - \gamma z.
    \end{equation}

    Let $\eta_1 = g_1^* (\eta)$, where $g_1: M_1 \toinj T^*Q \times \RR$ is the inclusion and $\eta$ is the canonical contact form on $T^*Q \times \RR$, then $(M_1, \eta_1, H_1)$ is a precontact Hamiltonian system. We can apply the algorithm to compute the secondary constraints or use the commutativity of the diagram on \cref{thm:lagrange_hamilton_equiv}. We now obtain a secondary constraint submanifold given by $M_2 = \set{p \in M_1 \mid \psi^2(p) = 0}$, where
    \begin{equation}
        \psi^2 = \pdv{V}{q^1} - \pdv{V}{q^2}.
    \end{equation}

    The algorithm now ends, as $M_3 = M_2=M_f$.

    \subsubsection*{The Dirac-Jacobi bracket}
    We compute the bracket
    \begin{equation}
        \jacBr{\psi^1,\psi^2} = \frac{\partial^2\,V}{\partial {(q^{1})} ^ 2} - 2 \, \frac{\partial^2\,V}{\partial q^{1}\partial q^{2}} + \frac{\partial^2\,V}{\partial {(q^{2})}^ 2},
    \end{equation}
    which is the determinant of the Hessian matrix of $V$ with respect to $(q^1,q^2)$. We will assume that this bracket does not vanish along $M_f$, hence both constraints are second class.

    We will call $F =  \jacBr{\psi^1,\psi^2}$. The Dirac-Jacobi bracket is given by
    \begin{equation}
        \dirBr{f,g} = \jacBr{f,g} + \frac{
              \jacBr{f,\psi^1}\jacBr{\psi^2,g} - \jacBr{f,\psi^2}\jacBr{\psi^1,g}}{F}
    \end{equation} 
    
    The non-zero Dirac-Jacobi brackets of the coordinate functions are
\begin{subequations}
        \begin{align}
            \dirBr{q^1,p_1} &=  \dirBr{q^1,p_2} = \frac{\pdv{\psi^2}{q^2}}{F}\\
            \dirBr{q^2,p_1} &=  \dirBr{q^2,p_2} = -\frac{\pdv{\psi^2}{q^1}}{F}\\
            \dirBr{q^1,p_3} &= -\dirBr{q^2,p_3} = \frac{\pdv{\psi^2}{q^3}}{F} \\
            \dirBr{q^3,p_3} &= -1 \\
            \dirBr{q^1,z} &= -q^1 + \frac{\psi^2}{F} = -q^1 \; \text{along $M_f$}\\
            \dirBr{q^2,z} &= -q^2 - \frac{\psi^2}{F} = -q^2 \; \text{along $M_f$}\\
            \dirBr{q^3,z} &= -q^3.
        \end{align}
\end{subequations}
    With those brackets, we can easily compute the equations of motion along the constrained submanifold $M_f$,
    
    \begin{subequations}\label{eq:sols1}
        \begin{align}
            \dot{q}^1 &=  
                      - \frac{ \frac{p_1}{m} \pdv{\psi^2}{q^2} + \frac{p_3}{\mu} \pdv{\psi^2}{q^2}}{F} \\
            \dot{q}^2 &= \frac{p_1}{m} + 
            \frac{ \frac{p_1}{m} \pdv{\psi^2}{q^2} + \frac{p_3}{\mu} \pdv{\psi^2}{q^2}}{F} \\
            \dot{q}^3 &=  \frac{p_3}{\mu}\\
            \dot{p}_i &= \pdv{V}{q^i} + \gamma p_i \\
            \dot{z} &= -\frac{1}{2m} {p_1}^2 - \frac{1}{2\mu} {p_3}^2 - V(q_1,q_2,q_3) + \gamma z
        \end{align}
    \end{subequations}

    \subsubsection*{The second order problem}
    Consider the vector field $Y$ associated to the equations \cref{eq:sols1}. That is,
    \begin{equation}
            Y =  \dot{q}^i \pdv{}{q^i} + \dot{p}^i \pdv{}{p^i} + \dot{z} \pdv{}{z},
    \end{equation}
    where $(\dot{q}^i,\dot{p}^i,\dot{x})$ are those from \cref{eq:sols1}.
    The vector field $X$ is a solution to the equations of motion on $TQ\times \RR$ that satisfies $(\Fib L)_* X = Y$ and is given by
    \begin{equation}
        \begin{aligned}
            X =& - \frac{ 2\dot{q}^1 \pdv{\psi^2}{q^2} + \dot{q}^3 \pdv{\psi^2}{q^2}}{F} \pdv{}{q^1} \\
            &+ \parens*{2 \dot{q}^1 + \frac{ 2\dot{q}^1 \pdv{\psi^2}{q^2} + \dot{q}^2 \pdv{\psi^2}{q^3}}{F}} \pdv{}{q^2} \\ &+
            \dot{q}^3 \pdv{}{q^3} \\ &+
            \parens*{\pdv{V}{q^1} + 2 m \gamma \dot{q}^1} \pdv{}{\dot{q}^1}\\ &+
            \parens*{\pdv{V}{q^3} +  \mu \gamma \dot{q}^3} \pdv{}{\dot{q}^3} \\ &+
            \parens*{2 m {(\dot{q}^1)}^2 + \mu{(\dot{q}^1)}^2 - V + \gamma z} \pdv{}{z}. 
        \end{aligned}
    \end{equation}

    We will construct the section $\alpha$ of $\Fib L_f$. Notice that, by  the first constraint $p_1=p_2$ on $M_f$, hence any point on $M_f$ has the form $y=(q^1,q^2,q^3,p_1,p_1,p_3,z)$. Take $x = (q^1,q^2, q^3, {p_1}/{m},0, p_3/\mu,z)\in P_f$ so that $\Fib L(x)=y$. We set $\alpha(y)=\tilde{x}$, that is,
    \begin{equation}
        \begin{aligned}
            \alpha&(q^1,q^2,q^3,p^1,p^2,p^3,z) = \biggl( q^1,q^2,q^3, \\ 
            &- \frac{ 2\frac{p_1}{m} \pdv{\psi^2}{q^2} + \frac{p_3}{\mu} \pdv{\psi^2}{q^2}}{F},
            2 \frac{p_1}{m} + \frac{ 2\frac{p_1}{m}^1 \pdv{\psi^2}{q^2} + \frac{p_3}{\mu} \pdv{\psi^2}{q^2}}{F},
            \frac{p_u}{\mu}, z \biggr).
        \end{aligned}
    \end{equation}
    Hence $X$ is satisfies the SODE condition along $\im \alpha$.

    \subsection*{Example 2}\label{ssec:ex_no_reeb}
    Let $Q= \RR^2$, $P_1 = TQ \times \RR $ and consider the Lagrangian function $L: P_1 \to \RR$ defined by
    \begin{equation}
        L(q^1.q^2,\dot{q}^1, \dot{q}^2, z) =  \frac{1}{2}  {(\dot{q}^{1} + \dot{q}^{2})}^{2} + q^{1} + q^{2} z.
    \end{equation}
    This Lagrangian induces the following precontact structure of class $3$
    \begin{equation}
        \eta_L = \dd z - (\dot{q}^{1} + \dot{q}^{2} ) (\dd q^{1} + \dd q^{2}) 
    \end{equation}
    \begin{equation}
        \dd \eta_L = (\dd q^{1} + \dd q^{2}) \wedge (\dd \dot{q}^{1} + \dd \dot{q}^{2}) 
    \end{equation}
    \begin{equation}
        E_L = \frac{1}{2}  {(\dot{q}^{1} + \dot{q}^{2})}^{2} - q^{1} - q^{2} z
    \end{equation}

    We choose the following Reeb vector field,
    \begin{equation}
        \Reeb = \pdv{}{z}.
    \end{equation}

    As in the previous example, we apply the algorithm, obtaining the following constrained submanifolds. Since
    \begin{equation}
        \orth{TP_1} = \gen*{\pdv{}{q^1}-\pdv{}{q^2}, \pdv{}{\dot{q}^1}-\pdv{}{\dot{q}^2}},
    \end{equation}
    then $P_2 = \set{p \in M \mid \phi^1(p) = 0}$, where
    \begin{equation}
        \phi^1 = z - 1.
    \end{equation}

    \begin{equation}
        TP_2 = \gen*{\pdv{}{q^1}, \pdv{}{q^2}, \pdv{}{\dot{q}^2}, \pdv{}{\dot{q}^2}}
    \end{equation}

    \begin{equation}
        \orth{TP_2} = \gen*{\pdv{}{q^1}-\pdv{}{q^2}, \pdv{}{\dot{q}^1}-\pdv{}{\dot{q}^2}, (\dot{q}^1+\dot{q}^2)\pdv{}{\dot{q}^1} + \pdv{}{z}}
    \end{equation}

    $P_3 =  \set{p \in M \mid \phi^2(p) = \phi^2(p) = 0}$,
    \begin{equation}
        \phi_2 = L - 2q^2 =  \frac{1}{2}  {(\dot{q}^{1} + \dot{q}^{2})}^{2} + q^{1} + q^{2} (z-2)
    \end{equation}

    \begin{equation}
        TP_3 = \gen*{(2-z)\pdv{}{q^1} + \pdv{}{q^2}, 
        \pdv{}{\dot{q}^1} - \pdv{}{q^2},
        (\dot{q}^1 + \dot{q}^2) \pdv{}{q^1} - \pdv{}{\dot{q}^1}},
    \end{equation}

    \begin{equation}
        \orth{TP_3} = \gen*{\pdv{}{q^1}-\pdv{}{q^2}, \pdv{}{\dot{q}^1}-\pdv{}{\dot{q}^2}, (\dot{q}^1+\dot{q}^2)\pdv{}{\dot{q}^1} + \pdv{}{z}},
    \end{equation}
    so we get no new constraints and the algorithm ends.

    We remark that any Reeb vector field $\Reeb$ satisfies $\Reeb(\phi^1)=1$, hence if we imposed the tangency of $\Reeb$, we would get the empty set.

    \subsubsection*{Hamiltonian formulation and the Legendre transformation}
    The Legendre transformation is given by
    \begin{equation}
        \Fib L(q^1, q^2, \dot{q}^1, \dot{q}^2, z) = 
        (q^1, q^2, \dot{q}^1 + \dot{q}^2, \dot{q}^1 + \dot{q}^2, z),
    \end{equation}
    and then
    \begin{equation}
        \ker {(\Fib L)}_* = \bracks*{\pdv{}{\dot{q}^1}-\pdv{}{\dot{q}^2}}.
    \end{equation} 
    In this case, the first constraint submanifold $M_1 = \Fib L(P)$ is described by the constraint
    \begin{equation}
        \psi^1 = p_1 - p_2.
    \end{equation}

    The corresponding Hamiltonian $H_1: M_1 \to \RR$ is given by 
    \begin{equation}
        H_1 =  \frac{1}{2}  (p_1)^{2} - q^{1} - q^{2} z.
    \end{equation}

    By the correspondence with the Lagrangian formulation, there will be two constraint submanifolds, $M_3 \toinj M_2 \toinj M_1$, defined by the constraint functions
    \begin{align}
        \psi^2 &= z-1, \\
        \psi^3 &= \frac{1}{2}  {(p_1)}^{2} + q^{1} + q^{2} (z-2),
    \end{align}
    respectively.

    \subsubsection*{The Dirac-Jacobi bracket}
    The constraints have the following Dirac brackets.
    \begin{subequations}
        \begin{align}
            \dirBr{\psi^1,\psi^2} &= 0 \\
            \dirBr{\psi^1,\psi^3} &= 3 -z = 2 \; \text{along $M_f$} \\
            \dirBr{\psi^1,\psi^3} &= -\frac{1}{2} {(p^1})^{2} + q^{1} - q^{2} =
                                     - 2 (q^{2} -q^{1}) \; \text{along $M_f$}
        \end{align}
    \end{subequations}

    The rank of the matrix $(\dirBr{\psi^\alpha,\psi^\beta})_{\alpha,\beta}$ is $2$, so we can extract one first class constraint as a $\Cont^\infty$-linear combination. We set
    \begin{equation}
        \bar{\chi} = \psi^2 - (q^2 - q^1)\psi^0 =  (p^1-p^2)(q^2-q^1)+z-1,
    \end{equation}
    which is a first class constraint, and the other two are second class, which we will relabel $\chi^1 = \psi^1$, $\chi^2 = \psi^3$.

    The Dirac-Jacobi bracket is given by
    \begin{equation}
        \begin{aligned}
            \dirBr{f,g} &=  \jacBr{f,g} + \frac{
                  \jacBr{f,\psi^1}\jacBr{\psi^2,g} - \jacBr{f,\psi^2}\jacBr{\psi^1,g}}{-\frac{1}{2} {(p^1})^{2} + q^{1} - q^{2}} \\ &= 
                  \jacBr{f,g} - \frac{\jacBr{f,\psi^1}\jacBr{\psi^2,g} - \jacBr{f,\psi^2}\jacBr{\psi^1,g}}{2 (q^{2} -q^{1})}
                  \quad \text{along $M_f$.}
        \end{aligned}
    \end{equation}
    Notice that the denominators do not vanish along the submanifold. 

    The non-zero Dirac-Jacobi brackets of the coordinate functions, along $M_f$ are the following
    \begin{subequations}
        \begin{align}
            \dirBr{q^1,q_2} &= -\frac{1}{2}(q^1q^2 + {(q^2)}^2 +p_1)\\
            \dirBr{q^1,p_1} &=  \dirBr{q^1,p_2} = \frac{1}{2}\\
            \dirBr{q^2,p_1} &=  \dirBr{q^2,p_2} = -\frac{1}{2}\\
             \dirBr{q^1,z} &= -\frac{3}{2} q^2\\
            \dirBr{q^2,z} &= -q^1 + \frac{1}{2}q^2.
        \end{align}
    \end{subequations}
        Now consider the total Hamiltonian $H_T: T^*Q \times \RR \to \RR$
        \begin{equation}
            H_T = \frac{1}{2}  (p_1)^{2} - q^{1} - q^{2} z + \bar{u} \bar{\chi} = H_0 + \bar{u} \bar{\chi},
        \end{equation}
        where $\bar{u}$ is an unspecified Lagrange multiplier and 
        \begin{equation}
            H_0 = \frac{1}{2}  (p_1)^{2} - q^{1} - q^{2} z.
        \end{equation}
        The other two Lagrange multipliers are irrelevant for the motion, so we can eliminate them. We can compute the equations of motion. By~\cref{eq:evolution_Dirac}, for any observable $f$ along $M_f$
        \begin{equation}
            \begin{aligned}
                \dot{f} &= \dirBr{H_0, f} -f \Reeb_{DJ}{(H_0)}  
                     + (\dirBr{\bar{\chi}, f} - f\Reeb_{DJ}(\bar{\chi})) \bar{u}
                      \\ &=
                   \dirBr{H_0, f} - q_2 f  +
                   (\dirBr{\bar{\chi}, f} -  f) \bar{u} .
            \end{aligned}
        \end{equation}
        Below we compute the equations of motion along $M_f$,
        \begin{subequations}\label{eq:sols2}
            \begin{align}
                \dot{q}^1 &= (q^1-q^2)  q^1 q^2 - 2{(q^2)}^2 
                    + (q_2-q_1)\bar{u} \\
                \dot{q}^2 &= \frac{1}{2} {(p_1)}^2 q^2 + {(q^2)}^2 + p_1 
                    + (q_1-q_2)\bar{u}\\
                \dot{p}_1 &= \dot{p}_2 = p_1 q^2 +1 -  (p_1)\bar{u}\\
                \dot{z} &= 0.
            \end{align}
        \end{subequations}
        \subsubsection*{The second order problem}
        Consider the vector field $Y$ associated to the equations \cref{eq:sols2} with $\bar{u}=0$. That is,
        \begin{equation}
            \begin{aligned}
                Y =& (q^1-q^2)  q^1 q^2 - 2{(q^2)}^2 \pdv{}{q^1}\\
                &+ \parens*{\frac{1}{2} {(p_1)}^2 q^2 + {(q^2)}^2 + p_1} \pdv{}{q^2} \\
                &+ (p_1 q^2 +1)\parens*{\pdv{}{p_1}+\pdv{}{p_2}}.
            \end{aligned}
        \end{equation}
        The vector field $X$ is a solution to the equations of motion on $TQ\times \RR$ that satisfies $(\Fib L)_* X = Y$ and is given by
        \begin{equation}
            \begin{aligned}
                X =& \parens*{(q^1-q^2)  q^1 q^2 - 2{(q^2)}^2} \pdv{}{q^1}\\
                &+ \parens*{{2} {(\dot{q}^1)}^2 q^2 + {(q^2)}^2 + 2\dot{q}^1} \pdv{}{q^2} \\
                &+ (2\dot{q}^1 q^2 +1)\pdv{}{\dot{q}^1}.
            \end{aligned}
        \end{equation}
        We will construct the section $\alpha$ of $\Fib L_f$. Notice that, by  the first constraint $p_1=p_2$ on $M_f$, hence any point on $M_f$ has the form $y=(q_1,q_2,p_1,p_1,z)$. Take $x = (q_1,q_2,p_1,0,z)\in P_f$ so that $\Fib(x)=y$. We set $\alpha(y)=\tilde{x}$, that is,
        \begin{equation}
            \alpha(q^1,q^2,p^1,p^2,z) = (q^1,q^2,
            (q^1-q^2)  q^1 q^2 - 2{(q^2)}^2,
            {2} {(\dot{q}^1)}^2 q^2 + {(q^2)}^2 + 2\dot{q}^1,
            z).
        \end{equation}
        Hence $X$ satisfies the SODE condition along $\im \alpha$.
\section*{Acknowledgments}
This  work  has  been  partially  supported  by the MINECO  Grants  MTM2016-76-072-P and the ICMAT Severo Ochoa projects SEV-2011-0087 and SEV-2015-0554.  Manuel Lainz wishes to thank ICMAT and UAM for a FPI-UAM predoctoral contract.

\printbibliography

\end{document}